\title{Internal Pattern Matching in Small Space and Applications}
\date{\empty}
\author{Gabriel Bathie}{DIENS, \'{E}cole normale sup\'{e}rieure de Paris, PSL Research University, France \and LaBRI, Université de Bordeaux, France}{gabriel.bathie@gmail.com}{0000-0003-2400-4914}{Partially supported by the grant ANR-20-CE48-0001 from the French National Research Agency (ANR) and a Royal Society International Exchanges Award.}
\author{Panagiotis Charalampopoulos}{Birkbeck, University of London, UK}{pcharalampo@gmail.com}{0000-0002-6024-1557}{Research visits during which some of the presented ideas were conceived were funded by a Royal Society International Exchanges Award.}
\author{Tatiana Starikovskaya}{DIENS, \'{E}cole normale sup\'{e}rieure de Paris, PSL Research University, France}{tat.starikovskaya@gmail.com}{0000-0002-7193-9432}{Partially supported by the grant ANR-20-CE48-0001 from the French National Research Agency (ANR) and a Royal Society International Exchanges Award.}
\authorrunning{G.~Bathie, P.~Charalampopoulos, and T.~Starikovskaya}
\keywords{internal pattern matching, longest common substring, small-space algorithms}
\newcommand{\cO}{O}
\newcommand{\tO}{\tilde{O}}
\newcommand{\cOtilde}{\tilde{O}}
\newcommand{\LCE}{\textsf{LCE}\xspace}
\newcommand{\LCS}{\textsf{LCS}\xspace}
\newcommand{\IPM}{\textsf{IPM}\xspace}
\newcommand{\LCSu}{\textsf{LCSuf}\xspace}
\newcommand{\CPM}{\textsf{CPM}\xspace}
\newcommand{\per}{\textsf{per}}
\newcommand{\rev}{\mathrm{rev}}
\def\dd{\mathinner{.\,.}}
\renewcommand{\P}{\mathcal{P}}
\newcommand{\A}{\mathcal{A}}
\newcommand{\X}{\mathcal{X}}
\newcommand{\fd}{\textsf{SST}}
\newcommand{\bd}{\textsf{RSST}}
\newcommand{\AI}{\textsc{Augmented Index}\xspace}
\def\pillar{{\tt PILLAR}\xspace}
\newcommand{\defproblem}[3]{
\vspace{2mm}
\noindent\fbox{
  \begin{minipage}{0.94\textwidth}
    \textsc{\large #1}\\
    {\bf{Input:}} #2  \\
    {\bf{Output:}} #3
  \end{minipage}
  }
\vspace{2mm}
}
\newcommand{\defCCproblem}[4]{
\vspace{2mm}
\noindent\fbox{
  \begin{minipage}{0.94\textwidth}
    \textsc{\large #1}\\
    {\bf{Alice}} holds #2  \\
    {\bf{Bob}} holds #3  \\
    {\bf{Output:}} #4
  \end{minipage}
  }
\vspace{2mm}
}
\newcommand{\defDSproblem}[3]{
\vspace{2mm}
\noindent\fbox{
   \begin{minipage}{0.94\textwidth}
   \textsc{#1}\\
   {\bf{Input:}} #2  \\
   {\bf{Query:}} #3
   \end{minipage}
   }
   \vspace{2mm}
}
\newtheorem{fact}[theorem]{Fact}
\begin{document}

\maketitle
\begin{abstract}
  In this work, we consider pattern matching variants in small space, that is, in the read-only setting, where we want to bound the space usage on top of storing the strings.
  Our main contribution is a space-time trade-off for the \textsc{Internal Pattern Matching} (IPM) problem, where the goal is to construct a data structure over a string $S$ of length $n$ that allows one to answer the following type of queries: Compute the occurrences of a fragment $P$ of $S$ inside another fragment $T$ of $S$, provided that $|T| < 2|P|$.
  For any $\tau \in [1 \dd n/\log^2 n]$, we present a nearly-optimal $\cOtilde(n/\tau)$-size\footnote{Throughout this work, the $\tilde{O}(\cdot)$ notation suppresses factors polylogarithmic in the input-size.} data structure that can be built in $\cOtilde(n)$ time using $\cOtilde(n/\tau)$ extra space, and answers IPM queries in $\cO(\tau+\log n \log^3  \log n)$ time.
  IPM queries have been identified as a crucial primitive operation for the analysis of algorithms on strings.
  In particular, the complexities of several recent algorithms for approximate pattern matching are expressed with regards to the number of calls to a small set of primitive operations that include IPM queries; our data structure allows us to port these results to the small-space setting.
  We further showcase the applicability of our IPM data structure by using it to obtain space-time trade-offs for the longest common substring and circular pattern matching problems in the \emph{asymmetric streaming} setting.
\end{abstract}

\clearpage
\setcounter{page}{1}

\section{Introduction}
  
In the fundamental text indexing problem, the task is to preprocess a text $T$ into a data structure (index) that can answer the following queries efficiently: Given a pattern $P$, find the occurrences of $P$ in $T$. The \textsc{Internal Pattern Matching} problem (\IPM) is a variant of the text indexing problem, where both the pattern $P$ and the text $T$ are fragments of a longer string~$S$, given in advance. 

Introduced in 2009~\cite{DBLP:journals/tcs/KellerKFL14}, \IPM queries are a cornerstone of the family of internal queries on strings. The list of internal queries, primarily executed through \IPM queries, comprises of period queries, prefix-suffix queries, periodic extension queries, and cyclic equivalence queries; see~\cite{DBLP:journals/corr/KociumakaRRW13,doi:10.1137/1.9781611973730.36,koc18}.
Other problems that have been studied in the internal setting include shortest unique substring~\cite{a13110276}, longest common substring~\cite{DBLP:journals/algorithmica/AmirCPR20}, suffix rank and selection~\cite{doi:10.1137/1.9781611973730.39,koc18}, BWT substring compression~\cite{doi:10.1137/1.9781611973730.39}, shortest absent string~\cite{BADKOBEH2022271}, dictionary matching~\cite{DBLP:journals/corr/abs-2312-11873,DBLP:journals/algorithmica/Charalampopoulos21,DBLP:conf/cpm/Charalampopoulos20}, string covers~\cite{DBLP:conf/spire/CrochemoreIRRSW20}, masked prefix sums~\cite{10.1007/978-3-031-20643-6_16}, circular pattern matching~\cite{DBLP:conf/cpm/IliopoulosKRRWZ23}, and longest palindrome~\cite{10.1007/978-3-031-27051-2_12}.

The primary distinction between the classical and internal string queries lies in how the pattern is handled during queries. In classical queries, the input is explicitly provided at query time, whereas in internal queries, the input is specified in constant space via the endpoints of fragments of string $S$. This distinction enables notably faster query times in the latter setting, as there is no need to read the input when processing the query. This characteristic of \IPM and similar internal string queries renders them particularly valuable for bulk processing of textual data. This is especially advantageous when $S$ serves as input for another algorithm, as illustrated by multiple direct and indirect (via other internal queries) applications of \IPM: pattern matching with variables~\cite{10.1007/978-3-319-67428-5_22,10.1145/3369935}, detection of gapped repeats and subrepetitions~\cite{KOLPAKOV20171,DBLP:journals/mst/GawrychowskiIIK18}, approximate period recovery~\cite{AMIR20182,DBLP:journals/algorithmica/AmirBKLS22}, computing the longest unbordered substring~\cite{kociumaka_et_al:LIPIcs.ISAAC.2018.70}, dynamic repetition detection~\cite{amir_et_al:LIPIcs.ESA.2019.5}, computing string covers~\cite{DBLP:conf/spire/CrochemoreIRRSW20}, identifying two-dimensional maximal repetitions, enumeration of distinct substrings~\cite{DBLP:conf/spire/Charalampopoulos20}, dynamic longest common substring~\cite{DBLP:journals/algorithmica/AmirCPR20}, approximate pattern matching~\cite{9317938,9996673}, approximate circular pattern matching~\cite{DBLP:journals/jcss/Charalampopoulos21,DBLP:conf/esa/Charalampopoulos22}, (approximate) pattern matching with wildcards~\cite{DBLP:journals/corr/abs-2402-07732}, RNA folding~\cite{das_et_al:LIPIcs.ICALP.2022.49}, and the language edit distance problem for palindromes and squares~\cite{DBLP:conf/isaac/BathieKS23}. 

Below we assume $|T| < 2 |P|$, which guarantees that the set of occurrences of $P$ in $T$ forms an arithmetic progression and can be thus represented in $\cO(1)$ space. 

With no preprocessing ($\cO(1)$ extra space), \IPM queries on a string $S$ of length $n$ can be answered in $\cO(n)$ time by a constant-space pattern matching algorithm (see~\cite{DBLP:journals/tcs/BreslauerGM13} and references therein). On the other side of the spectrum, Kociumaka, Radoszewski, Rytter, and Wale\'{n}~\cite{DBLP:journals/corr/KociumakaRRW13} showed that for every string $S \in [0 \dd \sigma]^n$, there exists a data structure of size $\cO(n/\log_\sigma n)$ which answers 
\IPM queries in optimal $\cO(1)$ time and can be constructed in $\cO(n/\log_\sigma n)$ time given the packed representation of $S$ (meaning that $S$ divided into blocks of $\log_\sigma n$ consecutive letters, and every block is stored in one machine word). The problem has been equally studied in the compressed and dynamic settings~\cite{9317938,DBLP:conf/stoc/KempaK22,DBLP:conf/focs/KempaK20}.

\subsection{Our Main Contribution: Small-space \IPM}
As our main contribution, we provide a trade-off between the constant-space and $\cO(n)$ query time and the $\cO(n/\log_\sigma n)$-space and constant query time data structures.
We consider the \IPM problem in the read-only setting, where one assumes random read-only access to the input string(s) and only accounts for the extra space, that is, the space used by the algorithm/data structure on top of the space needed to store the input. 

\begin{restatable}{corollary}{ipm}\label{cor:ipm}
Suppose that we have read-only random access to a $n$-length string $S$ of length $n$ over an integer alphabet. 
For any integer $\tau = \cO(n / \log^2 n)$, there is a data structure that can be built using $\cO(n\log_{n/\tau} n + (n/\tau) \cdot \log^4 n \log \log n)$ time using $\cO((n/\tau) \cdot \log n (\log \log n)^3)$ extra space and can answer the following internal pattern matching queries in time $\cO(\tau+\log n \log^3  \log n)$: 
given $p,p',t,t' \in [1\dd n]$ such that $t'-t \leq 2(p'-p)$, return all occurrences of $P = S[p \dd p']$ in $T = S[t \dd t']$.
\end{restatable}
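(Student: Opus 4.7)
The plan is to realize the bound via a small-space adaptation of the string synchronizing set (SSS) framework of Kempa and Kociumaka. A $\tau$-synchronizing set $A \subseteq [1 \dd n - 2\tau + 1]$ contains $O(n/\tau)$ positions such that (i) on non-periodic regions every window of length $2\tau$ meets $A$, and (ii) any two matching fragments of length at least $2\tau$ contain $A$-positions at corresponding offsets. This sparsification is the key to bringing the extra space from $\Omega(n)$ down to $\tilde{O}(n/\tau)$.

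First I would construct $A$ in the read-only setting. The classical construction uses $\Omega(n)$ extra space, so to bring it down to $\tilde{O}(n/\tau)$ I would use a recursive decomposition: partition $S$ into blocks of size $\Theta(n/\tau)$, compute an SSS locally on each block, and pipeline these through $O(\log_{n/\tau} n)$ recursive levels, each of which scans a compressed representation of the previous level. The $O(n \log_{n/\tau} n)$ time term matches this recursion depth. In parallel I would record, using $O(n/\tau)$ space, every maximal run of period at most $\tau/3$ (the regions where $A$ may legitimately be empty). On top of $A$ I would build a deterministic compact trie over the $2\tau$-length contexts of its elements, supplemented by hash tables and predecessor structures, supporting the following query: given $a \in A$ and a sub-range $[\ell \dd r]$, enumerate all $a' \in A \cap [\ell \dd r]$ whose context matches that of $a$. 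The $(\log \log n)^3$ space factor and the $\log n \log^3 \log n$ query factor arise from these auxiliary structures.

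To answer an \IPM query on $P = S[p \dd p']$ and $T = S[t \dd t']$, I would proceed by case analysis. If $|P| < 2\tau$, a constant-space exact pattern matching algorithm (such as two-way) over $T$ costs $O(|T|) = O(\tau)$. Otherwise, I scan at most $2\tau$ characters of $P$ to locate the leftmost $a \in A$ inside $P$. If no such $a$ exists then $P$ lies inside a highly periodic run whose primitive root is retrieved from the stored runs, and the occurrences of $P$ in $T$ are obtained via standard periodicity-based reasoning in $O(\log n)$ time. If $a$ exists, property (ii) guarantees that every occurrence of $P$ in $T$ corresponds to an $a' \in A \cap T$ with the same context; I enumerate these candidates through the trie and verify each via \LCE queries supported by a sparse Karp--Rabin fingerprint array sampled at rate $\tau$, with the remaining $O(\tau)$-length fringes verified by direct scan.

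The main obstacle is the interaction between the small-space budget and the SSS construction: determining membership in $A$ depends on $\tau$-length windows, and storing all such fingerprints globally would cost $\Theta(n)$ space. The recursive block-and-pipeline scheme above sidesteps this, but only if deterministic fingerprints are used to rule out Karp--Rabin collisions, which is delicate in the read-only model. A second delicacy is that the highly periodic case must be handled entirely through the runs structure, with both structures coexisting within the $\tilde{O}(n/\tau)$ budget while jointly supporting the claimed query time; the sub-polylogarithmic factors in the statement ultimately reflect the cost of this balancing act.
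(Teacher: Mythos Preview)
Your high-level strategy—anchor occurrences via a synchronizing-type set, handle short and periodic patterns separately—has the right shape, but two concrete pieces do not go through as stated.

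First, the small-space construction of the anchor set. Your ``recursive block-and-pipeline'' scheme for building a string synchronizing set in $\tilde{O}(n/\tau)$ extra space is not an established result and would itself require a full argument. The paper sidesteps this entirely by invoking the $\tau$-partitioning sets of Kosolobov and Sivukhin (\cref{thm:part}), which already come with an $O(n\log_{n/\tau} n)$-time, $O(n/\tau)$-space read-only construction; this is exactly where the $n\log_{n/\tau} n$ term in the statement originates.

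Second, and more seriously, your query procedure does not meet the time bound. After locating an anchor $a$ inside $P$, you propose to enumerate every $a'\in A$ inside $T$ with matching $2\tau$-context and verify each via an $O(\tau)$-time \LCE check. But $T$ can contain $\Theta(|T|/\tau)$ anchors with that context (think of a context that repeats throughout a long $T$), giving total cost $\Theta(|T|)$, which can be $\Theta(n)$—far from $O(\tau+\log n\log^3\log n)$. Matching the $2\tau$-context is necessary but not sufficient: you must also ensure the full left and right extensions around $a'$ match those around $a$, and that $a'$ lies in the correct positional range, all without enumeration. The paper achieves this with a single geometric query (\cref{lem:sst3d}): it builds sparse suffix trees $\fd$ and $\bd$ over the suffixes and reversed prefixes rooted at anchor positions, so that the anchors whose right (resp.\ left) extension is compatible with $P$ form a contiguous range $R_2$ (resp.\ $R_1$) of leaves; it then builds a 3D orthogonal range emptiness structure (\cref{lem:3d}) over the points $(\textsf{rank}_{\rev}(y),\textsf{rank}(y),y)$ for $y\in\A$. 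A query becomes one range-emptiness test on the box $R_1\times R_2\times[\text{position range}]$, answered in $O(\log^3\log n)$ time. This structure, not a trie over $2\tau$-contexts, is the source of the $\log n(\log\log n)^3$ space factor and the $\log^3\log n$ query factor.

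Two smaller gaps: (i) the periodic case needs more than retrieving a primitive root—when $P$ has a long periodic prefix but is not itself periodic, the partitioning set may miss the prefix of $P$ entirely, and the paper handles this by augmenting $\A$ with run endpoints and Lyndon-root positions (the set~$\mathcal{L}$) together with a dedicated case analysis; (ii) the decision structure of \cref{thm:long-ipm} returns only a witness, and the full arithmetic progression of occurrences is recovered via $O(\log n)$ further decision queries by binary search, which is what produces the $\log n\log^3\log n$ term in the final query time.
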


Our data structure is nearly optimal: First, when $n/\tau$ is polynomial, the construction time is linear; and secondly, the product of the query time and space of our data structure is optimal up to polylogarithmic factors (\textbf{\cref{lemma:ipm-lb}}). 

\noindent \textbf{Technical overview for \IPM queries.} Our solution relies heavily on utilizing the concept of $\tau$-partitioning sets, as introduced by Kosolobov and Sivukhin~\cite{DBLP:journals/corr/abs-2105-03782}. For a string of length~$n$, a $\tau$-partitioning is a subset of $\cO(n/\tau)$ positions that satisfies some density and consistency criteria. We use the positions of such a set as anchor points for identifying pattern occurrences, provided that the pattern avoids a specific periodic structure. To detect these anchored occurrences, we employ sparse suffix trees alongside a three-dimensional range searching structure. In cases where the pattern does not avoid said periodic structure, we employ a different strategy, leveraging the periodic structure to construct the necessary anchor points.

We next provide a brief comparison of the outlined approach with previous work. String anchoring techniques have been proven very useful in and been developed for text indexing problems, such as the longest common extension (\LCE) problem, in small space~\cite{DBLP:journals/corr/abs-2105-03782,DBLP:conf/soda/BirenzwigeGP20}.
One of the most technically similar works to ours is that of Ben-Nun et al.~\cite{DBLP:conf/cpm/Nun0KK20} who considered the problem of computing a long common substring of two input strings in small space.
They use an earlier variant of $\tau$-partitioning sets, due to Birenzwige et al.~\cite{DBLP:conf/soda/BirenzwigeGP20}, that has slightly worse guarantees than that of Kosolobov and Shivukhin \cite{DBLP:journals/corr/abs-2105-03782}.
The construction of anchors for substrings with periodic structure is quite similar to that of Ben-Nun et al.~\cite{DBLP:conf/cpm/Nun0KK20}.
After computing a set of anchors, they aim to identify a synchronised pair of anchors that yields a long common substring; they achieve this via mergeable AVL trees.
As \IPM queries need to be answered in an online manner, we instead construct an appropriate orthogonal range searching data structure over a set of points that correspond to anchors. Using orthogonal range searching is a by-now classical approach for text indexing, see~\cite{DBLP:conf/birthday/Lewenstein13} for a survey.

\subsection{Applications}

Several internal queries reduce to \IPM queries, and hence we obtain efficient implementations of them in the small-space setting.
Additionally, we port several efficient approximate pattern matching algorithms to the small-space setting since \IPM was the only primitive operation that they rely on that did not have an efficient small-space implementation to this day.
See \cref{sec:apps} for details on these applications.

\noindent \textbf{\textsc{Longest Common Substring} (\LCS). }
The \LCS problem is formally defined as follows.

\defproblem{Longest Common Substring (\LCS)}{Strings $S$ and $T$ of length at most $n$.}{The length of a longest string that appears as a (contiguous) fragment in both $S$ and $T$.}

The length of the longest common substring is one of the most popular string-similarity measures. The by-now classical approach to the \LCS problem is to construct the suffix tree of~$S$ and $T$ in $\cO(n)$ time and space. The longest common substring of the two strings appears as a common prefix of a pair of suffixes of $S$ and $T$ and hence its length is the maximal string-depth of a node of the suffix tree with leaf-descendants corresponding to suffixes of both strings; this node can be found in $\cO(n)$ time in a bottom-up manner. 

Starikovskaya and Vildh\o{}j~\cite{DBLP:conf/cpm/StarikovskayaV13} were the first to consider the problem in the read-only setting.
They 
showed that for any $n^{2/3} < \tau \le n$, the problem can be solved in $
\cO(\tau)$ extra space and $\cO(n^2/\tau)$ time. Kociumaka et al.~\cite{kociumaka2014sublinear} extended their bound to all $1\le\tau\le n$, which in particular resulted in a constant-space read-only algorithm running in time $\cOtilde(n^2)$. 

In an attempt to develop even more space-efficient algorithms for the \LCS problem, it might be tempting to consider the streaming setting, which is particularly restrictive: in this setting, one assumes that the input arrives letter-by-letter, as a stream, and must account for all the space used. Unfortunately, this setting does not allow for better space complexity: any streaming algorithm for \LCS, even randomised, requires $\Omega(n)$ bits of space (\textbf{\cref{thm:lb-lcs}}). In the asymmetric streaming setting, which is slightly less restrictive and was introduced by Andoni et al.~\cite{5671209} and Saks and Seshadhri~\cite{doi:10.1137/1.9781611973105.122}, the algorithm has random access to one string and sequential access to the other. Mai et al.~\cite{mai2021optimal} showed that in this setting, \LCS can be solved in $\tO(n^2)$ time and $\cO(1)$ space. By utilising 
(a slightly more general variant of) \IPM queries, we extend their result and show that for every $\tau \in [\sqrt n \log n (\log \log n)^3 \dd n]$, there is an asymmetric streaming algorithm that solves the \LCS problem in $\cO(\tau)$ space and $\tO(n^2/\tau)$ time (\textbf{\cref{thm:lcs-aux}}). Note that these bounds almost match the bounds of Kociumaka et al.~\cite{kociumaka2014sublinear}, while the setting is stronger. 

\noindent \textbf{\textsc{Circular Pattern Matching} (\CPM).}
The \CPM problem is formally defined as follows.

\defproblem{Circular Pattern Matching (\CPM)}{A pattern $P$ of length $m$, a text $T$ of length $n$.}{All occurrences of rotations of $P$ in $T$.}

The interest in occurrences of rotations of a given pattern is motivated by applications in Bioinformatics and Image Processing: in Bioinformatics, the starting position of a biological sequence can vary significantly due to the arbitrary nature of sequencing in circular molecular structures or inconsistencies arising from different standards of linearization applied to sequence databases; and in Image Processing, the contour of a shape can be represented using a directional chain code, which can be viewed as a circular sequence, particularly when the orientation of the image is irrelevant~\cite{AYAD201781}. 

For strings over an alphabet of size $\sigma$, the classical read-only solution for \CPM via the suffix automaton of $P \cdot P$ runs in $\cO(n \log \sigma)$ time and uses $\cO(m)$ extra space~\cite{Lothaire_2005}. Recently, Charalampopoulos et al. showed a simple $\cO(n)$ time and $\cO(m)$ extra space solution. The problem has been also studied from the practical point of view~\cite{10.1007/978-3-319-02309-0_59,FREDRIKSSON2009579,10.1093/comjnl/bxt023} and in the text indexing setting~\cite{Iliopoulos2017,10.1007/978-3-642-25591-5_69,10.1007/978-3-642-38905-4_15}. 

It is not hard to see that the \CPM and the \LCS problems are closely related: occurrences of rotations of $P$ in $T$ are exactly the common substrings of $P \cdot P$ and $T$ of length $m$.
Implicitly using this observation, we show that
in the streaming setting, the \CPM problem requires~$\Omega(m)$ bits of space (\textbf{\cref{th:streaming-lb-cpm}}) and that
in the asymmetric streaming setting, for every $\tau \in [\sqrt m \log m (\log \log m)^3 \dd m]$, there exists an algorithm that solves the \CPM problem in time $\cOtilde(mn/\tau)$ using $\cO(\tau)$ extra space (\textbf{\cref{cor:cpm}}). 
Finally, in the read-only setting, we give an \emph{online} $\cO(n)$-time, $\cO(1)$-space algorithm (\textbf{\cref{thm:readonly}}).

\section{Preliminaries}
For integers $i,j\in \mathbb{Z}$, denote $[i\dd j] =
\{k \in \mathbb{Z} : i \le k \le j\}$, $[i
\dd j)=\{k \in \mathbb{Z} : i \le k <
j\}$.
We consider an alphabet $\Sigma = \{1,2,\ldots,\sigma\}$ of size polynomially bounded in the length of the input string(s).
The elements of the alphabet are called letters, and a string is a finite sequence of letters. 
For a string $T$ and an index $i\in [1\dd n]$, the $i$-th letter of $T$ is denoted by $T[i]$.
We use $|T| = n$ to denote the length of $T$. For two strings $S,T$, we use $ST$ or $S \circ T$ indifferently to denote their concatenation $S[1] \cdots S[|S|] T[1] \cdots T[|T|]$.
For integers $i,j$, 
$T[i\dd j]$ denotes the \emph{fragment} $T[i] T[{i+1}]\cdots T[j]$ of~$T$ if $1 \le i \le j\le n$
and the empty string~$\varepsilon$ otherwise. We extend this notation in a natural way to $T[i \dd j+1) = T[i\dd j] = T(i-1 \dd j]$.   
When $i=1$ or $j=n$, we omit these indices, i.e., $T[\dd j] = T[1\dd j]$
and $T[i\dd ] = T[i\dd n]$.
A string $P$ is a \emph{prefix} of $T$ if there exists $j\in [1\dd n]$
such that $P = T[\dd j]$, and a \emph{suffix} of~$T$ if there exists $i\in [1\dd n]$
such that $P = T[i\dd ]$. 
We denote the reverse of a string~$T$ by $\rev(T) = T[n]T[n-1] \cdots T[2]T[1]$.
For an integer $\Delta \in [1 \dd n]$, we say that a string $T[\Delta+1 \dd n] \circ T[1 \dd \Delta]$ is a \emph{rotation} of $T$. 
A fragment $T[i\dd j]$ of a string $T$ is called an \emph{occurrence} of a string $P$ if $T[i\dd j] = P$; in this case, we say that $P$ \emph{occurs} at position $i$ of~$T$. A positive integer $\rho$ is a \emph{period} of a string $T$ if $T[i] = T[i+\rho]$ for all $i \in [1 \dd |T|-\rho]$.
The smallest period of $T$ is referred to as \emph{the period} of $T$ and is denoted by $\per(T)$. If $\per(T) \leq |T|/2$, $T$ is called \emph{periodic}. 

\begin{fact}[Corollary of the Fine--Wilf periodicity lemma~\cite{fine1965uniqueness}]\label{cor:progression}
The starting positions of the occurrences of a pattern $P$ in a text $T$ form $\cO(|T|/|P|)$ arithmetic progressions with difference $\per(P)$.
\end{fact}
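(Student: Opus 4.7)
The plan is to partition the occurrences of $P$ in $T$ into \emph{clusters}, defined as maximal groups of consecutive occurrences whose starting positions differ by at most $|P|/2$, and then show that (i) each cluster is an arithmetic progression with common difference $\per(P)$, and (ii) there are only $\cO(|T|/|P|)$ clusters overall.

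First I would establish the following key lemma: if $P$ occurs at positions $i < j$ in $T$ with $d := j - i \le |P|/2$, then $\per(P) \mid d$. The overlap between the two copies of $P$ forces $P[k] = P[k+d]$ for every $k \in [1 \dd |P| - d]$, so $d$ is itself a period of $P$. Since $d + \per(P) \le |P|/2 + |P|/2 = |P|$, Fine--Wilf applied to the two periods $d$ and $\per(P)$ of $P$ yields that $\gcd(d,\per(P))$ is also a period; by minimality of $\per(P)$ this gcd must equal $\per(P)$, which is exactly $\per(P) \mid d$.

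Next, for a consecutive pair $i_j < i_{j+1}$ inside a cluster I would analyze the union fragment $W := T[i_j \dd i_{j+1} + |P| - 1]$, which has $P$ as both a prefix and a suffix. The prefix copy gives $W[k] = W[k+\per(P)]$ on $[1 \dd |P|-\per(P)]$ and the suffix copy gives the same relation on $[(i_{j+1}-i_j)+1 \dd |W|-\per(P)]$; since $i_{j+1}-i_j \le |P|/2 \le |P|-\per(P)$ these two ranges together cover all of $[1 \dd |W|-\per(P)]$, so $W$ itself has period $\per(P)$. Writing $i_{j+1} - i_j = c\cdot \per(P)$ using the first lemma, the $\per(P)$-periodicity of $W$ transfers the prefix equality $W[1\dd|P|] = P$ to $W[s\per(P)+1\dd s\per(P)+|P|] = P$ for every $s \in [0\dd c]$, hence $P$ must occur in $T$ at each position $i_j + s\per(P)$. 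If $c \ge 2$ this produces an occurrence strictly between $i_j$ and $i_{j+1}$, contradicting that they are consecutive in the occurrence list; therefore $c = 1$, and the entire cluster is an arithmetic progression of common difference exactly $\per(P)$.

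Finally, two consecutive clusters are by construction separated by a gap larger than $|P|/2$ in the sequence of occurrence positions, so their starting positions are pairwise at distance more than $|P|/2$ inside $[1\dd |T|]$, which bounds the total number of clusters by $2|T|/|P| + 1 = \cO(|T|/|P|)$. I expect the middle step to be the main obstacle: one must be careful in verifying that the periodicity relations coming from the two end-copies of $P$ actually cover the whole of $W$ (which is why the cluster threshold is chosen to be $|P|/2$ rather than something larger), and then that the global $\per(P)$-periodicity of $W$ really does translate into \emph{honest} occurrences of $P$ at every $\per(P)$-aligned intermediate position — this is exactly where Fine--Wilf is used in its sharpest form.
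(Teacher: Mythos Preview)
The paper does not actually prove this statement: it is recorded as a \emph{Fact} with a citation to Fine--Wilf and is used as a black box throughout. So there is no paper-side proof to compare against.

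Your argument is correct and is the standard derivation of this folklore corollary. The only places worth tightening are cosmetic: in the key lemma you use $d+\per(P)\le |P|$, which is justified because $d$ is itself a period and hence $\per(P)\le d\le |P|/2$; and in the coverage step for $W$ you use $\per(P)\le |P|/2$, which again follows from the same observation. Both points are implicit in what you wrote, but stating them once up front would make the write-up cleaner. The cluster-count bound via the $>|P|/2$ gaps between first elements of successive clusters is fine as stated.
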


We assume a reader to be familiar with basic data structures for string processing, see, e.g., \cite{Lothaire_2005}. Recall that a suffix tree for a string $S$ is essentially a compact trie representing the set of all suffixes of $S$, whereas a sparse suffix tree contains only a subset of these suffixes.

\begin{fact}[{\cite[Theorem 3]{DBLP:journals/corr/abs-2105-03782}}]
Suppose that we have read-only random access to a string $S$ of length $n$ over an integer alphabet.
For any integer $b = \Omega(\log^2 n)$, one can construct in $\cO(n\log_b n)$ time and $\cO(b)$ space the sparse suffix tree for arbitrarily chosen $b$ suffixes.\label{thm:sst}
\end{fact}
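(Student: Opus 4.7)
The plan is to reduce the construction of the sparse suffix tree to (A) sorting the $b$ chosen suffixes lexicographically together with the LCP values between suffixes that are adjacent in the sorted order, followed by (B) a standard stack-based compact-trie assembly from this information. Step (B) is classical, runs in $\cO(b)$ time, and uses $\cO(b)$ space, so the real work is step (A).

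The first ingredient for (A) is a small-space Karp--Rabin fingerprinting oracle. I would pick random hash parameters (a prime $q$ of $\Theta(\log n)$ bits and a base) and verify collision-freedom on the polynomially many fingerprints queried during the algorithm; this gives a Las Vegas scheme of the right space. Instead of storing prefix fingerprints at all $n$ positions, I store them only at the $b$ chosen positions together with $\Theta(b)$ evenly spaced synchronisation anchors. Any fingerprint of a fragment $S[i \dd j]$ can then be recovered by scanning at most $\cO(n/b)$ characters from the nearest stored position; in particular, fingerprints of prefixes of the $b$ chosen suffixes are essentially free of charge. Total space: $\cO(b)$.

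For the lex-sort itself, I would use an iterative naming scheme in the spirit of Karp--Miller--Rosenberg at coarseness $b$. At iteration $\ell$, I assign to each of the $b$ chosen positions a \emph{name} encoding its length-$b^{\ell}$ substring, computed from the fingerprint oracle (or, equivalently, as a pair of two consecutive level-$(\ell-1)$ names). I then radix-sort the $b$ names in $\cO(b)$ time, thereby refining the equivalence classes on the chosen positions. Since a level-$\ell$ name determines $b^{\ell}$ characters of context, after $\lceil \log_{b} n \rceil$ iterations all chosen suffixes are distinguished, and for each pair of suffixes that become adjacent in the final order the iteration $\ell^{*}$ at which they separate pins down their LCP up to a factor $b$; a single exponential search using $\cO(\log n)$ further fingerprint queries then recovers the exact LCP. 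Each iteration performs a left-to-right rolling-hash sweep that visits $\cO(n)$ characters plus $\cO(b)$ bookkeeping work, so the total running time is $\cO(n \log_b n)$.

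The main obstacle is keeping the working space at $\cO(b)$ during a naming iteration, because a literal KMR step needs to know level-$(\ell-1)$ names at all $n$ positions, not only at the $b$ chosen ones. I would handle this by realising each iteration as a single streaming pass over $S$: the rolling fingerprint produces, on the fly, the candidate level-$\ell$ name of the current position, and it is immediately discarded unless the position is one of the $b$ chosen ones (or one of the $\cO(b)$ auxiliary positions needed to form a pair); only the $\cO(b)$ retained names and the $\cO(b)$ current equivalence-class labels ever coexist in memory. The precondition $b = \Omega(\log^{2} n)$ is used here to absorb the cost of $\cO(\log n)$-bit fingerprint words, the $\cO(\log n)$-factor overhead of radix sort on integer keys, and the verification of collision-freedom, all of which fit comfortably inside the $\cO(b)$ space budget and do not inflate the $\cO(n \log_b n)$ time bound.
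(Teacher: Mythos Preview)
The paper does not prove this statement at all: it is stated as a \emph{Fact} with a citation to \cite[Theorem~3]{DBLP:journals/corr/abs-2105-03782} and is used as a black box throughout. There is therefore no proof in the present paper to compare your proposal against.

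That said, your sketch has a genuine gap. You propose to obtain the lexicographic order of the $b$ chosen suffixes via a KMR-style naming scheme whose level-$\ell$ names are ``computed from the fingerprint oracle'' by a rolling hash during a streaming pass. Karp--Rabin fingerprints let you test \emph{equality} of substrings, but they carry no lexicographic information: radix-sorting fingerprints does not sort the underlying strings. The alternative you mention in passing, forming a level-$\ell$ name as ``a pair of two consecutive level-$(\ell-1)$ names'', is the classical doubling step, but (i) doubling gives $\log_2 n$ rounds, not $\log_b n$, and (ii) the second name in each pair lives at a position that is generally \emph{not} among the $b$ chosen ones, which is exactly the obstacle you flag and do not resolve. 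A rolling fingerprint pass cannot manufacture lex-rank names on the fly; it can only manufacture hashes.

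The approach actually used in the cited work is different in spirit: Kosolobov and Sivukhin first build their small-space longest-common-extension data structure (the same machinery underlying the $\tau$-partitioning sets that this paper uses elsewhere) and then leverage it to sort the $b$ suffixes and compute the adjacent LCPs within the stated bounds. If you want to repair your argument along independent lines, you would need a genuine mechanism to obtain lex-ranks (not fingerprints) at the required auxiliary positions while staying within $\cO(b)$ space; the current write-up does not supply one.
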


\begin{fact}[{\cite{DBLP:journals/tcs/BreslauerGM13}}]\label{fact:pm-ro}
There is a read-only online algorithm that finds all occurrences of a pattern~$P$ of length $m$ in a text $T$ of length $n \ge m$ in $\cO(n)$ time and $\cO(1)$ space.
\end{fact}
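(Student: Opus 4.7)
The plan is to prove the statement via a constant-space variant of the Crochemore--Perrin pattern matching algorithm, augmented to run in a strictly left-to-right online fashion over the text. First, I would preprocess the pattern $P$ to compute a critical factorization $P = uv$ with $|u| < \per(P)$, which is guaranteed to exist by the critical factorization theorem. This factorization can be found in $\cO(m)$ time and $\cO(1)$ extra space using Duval's algorithm for computing the lexicographically maximum suffix of $P$ under both the standard and the reversed alphabet order; the starting position of the shorter of the two maximal suffixes yields a valid cut, together with the value of $\per(P)$.

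Given such a factorization, the Crochemore--Perrin matching loop processes the text as follows: at each alignment position, try to match $v$ from left to right against $T$. On a mismatch at a relative position $j$ inside $v$, the critical-factorization property allows us to safely advance the alignment by $j - |u| + 1$ positions without missing any occurrence, because any occurrence inside this shift window would violate the period bound on $u$ or $v$. On a full match of $v$, the prefix $u$ is verified right to left against $T$; a mismatch here triggers a shift by exactly $\per(P)$ positions, while the matched portion of $v$ from the previous alignment can be reused verbatim against the new alignment.

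To make the procedure online and constant-space, I would keep only a handful of integer variables beyond the two input pointers: the current alignment offset in $T$, the position within $P$ being compared, and the length of the portion of $v$ already known to match at the current alignment (carried over after a short periodic shift). Each discovered match is emitted immediately, so the algorithm is online in the natural sense. The $\cO(n)$ time bound follows from the standard amortized argument: each character of $T$ participates in a constant number of successful comparisons in the forward scan of $v$ (since the alignment advances by at least the number of compared symbols minus a constant after any mismatch), and the backward verifications of $u$ are charged against the accompanying shift by $\per(P)$.

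The main obstacle is the strongly periodic case $\per(P)\le |v|$, where successive alignments overlap by up to $|P|-\per(P)$ symbols and a naive implementation would either re-read text (violating onlineness) or require recording the entire set of matched positions inside the periodic run (violating the $\cO(1)$ space bound). Following the approach of Breslauer, Grossi, and Mignosi, I would handle this by maintaining a single extra counter that records how far into the periodic prefix of $v$ the previous alignment progressed; the next alignment resumes comparisons from precisely this offset, so every text character is touched $\cO(1)$ times in total. Combined with an $\cO(m)$-time, $\cO(1)$-space pattern preprocessing and the case analysis above, this yields the claimed read-only online algorithm.
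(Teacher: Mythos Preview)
The paper does not prove this statement; it is stated as a Fact with a citation to Breslauer, Grossi, and Mignosi, so there is no in-paper argument to compare against. Your sketch is essentially the Crochemore--Perrin two-way algorithm, which is indeed the foundation of the cited result, and in the paper's read-only model (where one may randomly re-access already-seen text characters) what you describe yields a correct $\cO(n)$-time, $\cO(1)$-space algorithm that reports each occurrence upon reading its last character.

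Your diagnosis of the ``main obstacle'', however, is off. The counter you propose---recording how much of $v$ carries over after a period-length shift---is already part of the original Crochemore--Perrin algorithm and is there to guarantee linear \emph{time}, not onlineness; it predates the cited reference. The genuine obstacle to a strictly left-to-right scan, which is what Breslauer, Grossi, and Mignosi actually address, is the right-to-left verification of $u$: that backward pass happens after \emph{every} full match of $v$, periodic or not, and your proposal never explains how to eliminate it. The cited paper handles this by interleaving the verification of $u$ with the forward scanning of subsequent alignments so that the text pointer never retreats; a single counter tracking progress within $v$ does not accomplish that. So if ``read-only online'' is interpreted as allowing random access to $T[1\dd j]$ while processing position $j$, your plan is adequate; if it is read as ``streaming'' in the sense of the cited reference, you still owe an explanation of how the backward $u$-scan is removed. (A minor side remark: your shift after a mismatch ``at relative position $j$ inside $v$'' should be $j+1$, not $j-|u|+1$; the latter is correct only if $j$ is an absolute position in $P$.)
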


\begin{fact}[{\cite[Lemma 6]{DBLP:conf/esa/0001GGK15}}]\label{lem:run_per}
Given read-only random access to a string $S$ of length $n$, one can decide in $\cO(n)$ time and $\cO(1)$ space if $S$ is periodic and, if so, compute $\per(S)$.
\end{fact}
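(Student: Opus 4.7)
The plan is to reduce period detection to one run of the read-only, constant-space pattern matching algorithm of \cref{fact:pm-ro}, followed by a single linear-time verification pass. Set $P := S[1\dd \lceil n/2\rceil]$ (accessed as the prefix of $S$) and run \cref{fact:pm-ro} with pattern $P$ and text $T := S$; since that algorithm processes $T$ left-to-right and reports occurrences online, we halt as soon as it emits an occurrence at some position $q>1$ and record $\rho := q-1 \in [1\dd \lfloor n/2\rfloor]$. If no such $q$ is ever produced we output \emph{not periodic}; otherwise, in a second $\cO(n)$-time, $\cO(1)$-space scan, we verify that $\rho$ is a period of $S$ by comparing $S[1\dd n-\rho]$ with $S[\rho+1\dd n]$ letter by letter. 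If the verification passes we return $\per(S)=\rho$; otherwise we output \emph{not periodic}.

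Correctness hinges on the claim that, if $\per(S) = p \le \lfloor n/2\rfloor$, then the smallest offset $\rho \ge 1$ at which $P$ occurs in $S$ is exactly $p$. Existence of such a $\rho \le p$ is immediate, since $p + |P| \le n$ ensures that $P$ occurs at position $p+1$. Suppose for contradiction that $\rho < p$. Then $S[1\dd |P|+\rho]$ admits both periods $\rho$ and $p$, and the Fine--Wilf hypothesis $\rho + p - \gcd(\rho,p) \le |P|+\rho$ reduces to $p \le |P| + \gcd(\rho,p)$, which holds because $p \le \lfloor n/2\rfloor \le |P|$. Hence $g := \gcd(\rho,p) < p$ is a period of $S[1\dd |P|+\rho]$, and in particular of $S[1\dd p]$. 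Because $g \mid p$, the prefix $S[1\dd p]$ is an integer power of $S[1\dd g]$, and combining this with the global period $p$ of $S$ (which expresses every index of $S$ modulo $p$) yields that $g$ is a period of the entire string~$S$, contradicting the minimality of $\per(S)=p$. Thus $\rho = p$ and the verification succeeds.

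Conversely, a successful verification directly certifies that $\rho \le \lfloor n/2\rfloor$ is a period of $S$, so $S$ is periodic with $\per(S)=\rho$. Whenever the algorithm outputs \emph{not periodic} — either because no non-trivial occurrence of $P$ is reported or because the extracted $\rho$ fails verification — the contrapositive of the claim above yields $\per(S) > \lfloor n/2\rfloor$, matching the output. The total running time is $\cO(n)$ by \cref{fact:pm-ro} plus the verification scan, and only $\cO(1)$ extra words are used.

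The main technical obstacle is ruling out \emph{spurious} offsets $\rho < \per(S)$ coming from prefixes of $S$ that happen to be more periodic than $S$ itself; the Fine--Wilf argument above, combined with the carefully chosen pattern length $|P|=\lceil n/2\rceil$, is precisely what forbids this and lets us commit to the first reported non-trivial occurrence as the sole candidate for $\per(S)$.
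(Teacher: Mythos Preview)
The paper does not prove this statement; it is stated as a \emph{Fact} and simply cited from~\cite[Lemma~6]{DBLP:conf/esa/0001GGK15}. Your argument supplies a correct, self-contained proof. The reduction to a single run of the constant-space matcher of \cref{fact:pm-ro} on $P=S[1\dd\lceil n/2\rceil]$ against $T=S$ is sound, and the Fine--Wilf step showing that the leftmost non-trivial occurrence of $P$ in $S$ sits at offset exactly $\per(S)$ whenever $\per(S)\le\lfloor n/2\rfloor$ is clean and complete: the length choice $|P|=\lceil n/2\rceil$ is exactly what makes the Fine--Wilf hypothesis $p-\gcd(\rho,p)\le|P|$ go through, and the divisibility $g\mid p$ then propagates the period $g$ from $S[1\dd p]$ to all of $S$. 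The verification scan handles the remaining direction.

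One small remark: in your converse paragraph, the jump from ``$\rho$ is a period of $S$'' to ``$\per(S)=\rho$'' is not \emph{direct}; it implicitly reuses the forward claim (once $\rho$ is a period we have $\per(S)\le\lfloor n/2\rfloor$, so the forward claim forces the first non-trivial occurrence to be at offset $\per(S)$, hence $\rho=\per(S)$). The logic is there, but it would read more smoothly if made explicit.
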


\begin{fact}[{\cite{DBLP:journals/jal/Duval83}}]\label{lem:duval}
Given read-only random access to a string $S$ of length $n$, the lexicographically smallest rotation of a string $S$ can be computed in $\cO(n)$ time and $\cO(1)$ space.
\end{fact}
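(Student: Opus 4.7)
The plan is to implement Duval's classical two-pointer algorithm~\cite{DBLP:journals/jal/Duval83} and verify that it fits within the stated read-only, $\cO(1)$-extra-space budget. Conceptually we work over the doubled string $S' = S \circ S$ of length $2n$, but we never materialise it: we simulate an access $S'[k]$ by reading $S[((k-1) \bmod n)+1]$ using the given read-only random access to $S$, which costs $\cO(1)$ time and $\cO(1)$ space per access. The output is a single integer in $[1 \dd n]$, namely the starting position in $S$ of the lexicographically smallest rotation.

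The algorithm maintains three integers $i, j, k$, where $i$ is the current best candidate starting position, $j > i$ is a competitor to be tested, and $k \geq 0$ is the length of a common prefix of the rotations starting at positions $i$ and $j$ that has already been verified equal. Initialise $i \gets 1$, $j \gets 2$, $k \gets 0$. While $j+k \leq n+i$, compare $a = S'[i+k]$ with $b = S'[j+k]$: if $a=b$, increment $k$; if $a<b$, then $i$ is still preferred and positions in $(j \dd j+k]$ are provably no better than the corresponding positions in $(i \dd i+k]$, so set $j \gets j+k+1$ and $k \gets 0$; if $a>b$, then all positions in $[i \dd i+k]$ are provably worse than the corresponding positions in $[j \dd j+k]$, so set $i \gets \max(i+k+1, j)$, then $j \gets i+1$ and $k \gets 0$. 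When the loop terminates, return $i$.

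Correctness rests on the invariant, proved by induction on the number of iterations, that the starting position of the lexicographically smallest rotation of $S$ always lies in $\{i\} \cup [j \dd n]$. The three case updates above are exactly those required to preserve this invariant; the key observation (a straightforward consequence of comparing prefixes letter by letter) is that if two candidate rotations share a prefix of length $k$ and then differ, every candidate whose offset relative to the loser lies in $(0 \dd k]$ is strictly dominated by the corresponding candidate at the same offset relative to the winner. For the running time, I would use the potential $\Phi = i + j + k$ and check case by case that each iteration increases $\Phi$ by at least $1$ while $\Phi$ is capped at $\cO(n)$ by the loop guard; hence at most $\cO(n)$ iterations each doing $\cO(1)$ work.

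The only subtle point, and thus the main obstacle to a clean write-up, is boundary arithmetic: one must ensure that all simulated accesses into $S'$ fall within $[1 \dd 2n]$ and that the invariant is stated precisely enough to conclude after exactly $n$ letters of a rotation have been scanned, which is the purpose of the guard $j+k \leq n+i$ and of the $\max(i+k+1, j)$ update that prevents $i$ from jumping past $j$. Since the algorithm's state consists only of the three integers $i, j, k$ together with a constant number of scratch variables, and every memory access is a single read from $S$, the extra space is $\cO(1)$, establishing the fact.
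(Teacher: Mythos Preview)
The paper does not prove this fact at all; it is stated as a cited result from Duval's 1983 paper and used as a black box. Your proposal therefore goes well beyond what the paper does, sketching a direct two-pointer elimination algorithm rather than invoking Duval's Lyndon-factorisation machinery.

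Unfortunately, the specific loop guard $j+k \le n+i$ is wrong: it terminates the comparison of the rotations at $i$ and $j$ after only $n-(j-i)$ characters, which can be too few to distinguish them. Take $S=\texttt{aaba}$, so $n=4$ and $S'=\texttt{aabaaaba}$. From $i=1,j=2,k=0$ you match at $k=0$, then at $k=1$ get $S'[2]=\texttt{a}<\texttt{b}=S'[3]$, so $j\gets 4$, $k\gets 0$. Now you compare the rotation at $1$ (namely $\texttt{aaba}$) with the rotation at $4$ (namely $\texttt{aaab}$): at $k=0$ and $k=1$ they agree, and at $k=2$ we would see $S'[3]=\texttt{b}>S'[6]=\texttt{a}$, correctly transferring the candidacy to position $4$. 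But the guard $j+k\le n+i$ reads $6\le 5$ at that point, so the loop exits and you return $i=1$, whereas the lexicographically smallest rotation starts at position $4$.

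The overall scheme and the potential argument are sound; the issue is purely the termination condition. The standard $\cO(n)$-time $\cO(1)$-space formulation keeps $i$ and $j$ symmetric (either may be the current best), uses the three separate guards $i\le n$, $j\le n$, $k<n$, and returns $\min(i,j)$; with those guards the $\texttt{aaba}$ example is handled correctly and your potential $\Phi=i+j+k$ still gives the linear bound.
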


\subparagraph*{Static predecessor.} For a static set, a combination of x-fast tries \cite{DBLP:journals/ipl/Willard83} and deterministic dictionaries \cite{DBLP:conf/icalp/Ruzic08} yields the following efficient deterministic data structure; cf.~\cite{DBLP:conf/cpm/0001G15}.

\begin{fact}[{\cite[Proposition 2]{DBLP:conf/cpm/0001G15}}]\label{lem:pred}
A sorted static set $Y \subseteq [1\dd U]$ can be preprocessed in $\cO(|Y|)$ time and space so that predecessor queries can be performed in $\cO(\log \log |U|)$ time.
\end{fact}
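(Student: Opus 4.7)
My plan is to implement a deterministic y-fast trie on top of the sorted set $Y$, using Ruzic's linear-time deterministic dictionary at every level of the top structure so as to avoid randomization.

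The first step is to partition $Y$ into $\lceil |Y| / \log U \rceil$ consecutive buckets of size $\Theta(\log U)$; since $Y$ is provided sorted, this takes $\cO(|Y|)$ time. From each bucket I pick one representative (say, its minimum), producing a sorted set $R$ of size $\cO(|Y|/\log U)$. On $R$ I build an x-fast trie, i.e., the binary trie of depth $\lceil \log U \rceil$ containing the leaves corresponding to the elements of $R$, augmented with (i) a bidirectional linked list threading the leaves of $R$ in sorted order, and (ii) at every depth $d \in [0 \dd \log U]$, a dictionary mapping each length-$d$ binary prefix that appears in $R$ to the corresponding trie node (with the usual descendant pointers on nodes with only one child). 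Each of these per-depth dictionaries is built using the deterministic dictionary of~\cite{DBLP:conf/icalp/Ruzic08}, which supports $\cO(1)$ worst-case lookups and linear-time construction from a sorted input. Within each bucket I store a plain balanced binary search tree supporting $\cO(\log \log U)$ predecessor queries (since the bucket has size $\cO(\log U)$); because buckets are sorted, this tree can be built in linear time per bucket as well.

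A predecessor query for $x \in [1\dd U]$ proceeds as in a textbook y-fast trie. I binary search over the depth $d$ to find the largest length-$d$ prefix of $x$ that is present in the top dictionary for depth $d$; each probe is an $\cO(1)$-time dictionary lookup, and the binary search contributes an $\cO(\log \log U)$ overhead. Following the descendant pointer at the located node and its neighbor in the leaf list determines the bucket in which the predecessor of $x$ lies (or a pair of adjacent buckets, at most a constant number). A final $\cO(\log \log U)$ predecessor query in the local BST of that bucket returns the answer.

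For the complexity bounds: the top x-fast trie stores only $\cO(|Y|/\log U)$ leaves, so the total number of ancestor nodes, across all depths, is $\cO((|Y|/\log U)\cdot \log U) = \cO(|Y|)$, giving $\cO(|Y|)$ overall size for all per-depth Ruzic dictionaries taken together; the local BSTs add $\cO(|Y|)$ more. Construction time is dominated by the $\cO(|Y|)$ work to bucket $Y$, the $\cO(|Y|)$ work to build the Ruzic dictionaries from sorted key lists obtained by scanning $R$ once per depth, and the $\cO(|Y|)$ work to build the per-bucket BSTs. The main subtlety I expect is ensuring that at each of the $\cO(\log U)$ depths of the trie one can feed Ruzic's construction a sorted key list in total $\cO(|Y|)$ time rather than $\cO(|Y| \log U / \log U) = \cO(|Y|)$; this is done by processing $R$ in sorted order once, and, for every element, only touching its ancestors that are newly created since the previous element, which amortises to $\cO(|R| + \text{total nodes}) = \cO(|Y|)$ across the whole trie.
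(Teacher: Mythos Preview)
The paper does not supply its own proof of this statement; it is presented as a cited fact, accompanied only by the one-sentence remark that the result follows from ``a combination of x-fast tries and deterministic dictionaries'' of Ru\v{z}i\'{c}. Your proposal is precisely an elaboration of that hint---a y-fast trie in which the per-level hash tables are replaced by Ru\v{z}i\'{c}'s deterministic dictionaries, with the standard bucketing trick to bring the total trie size down to $\cO(|Y|)$---so your approach coincides with what the paper points to and is correct in outline and in its query/space analysis.
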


\subparagraph*{Weighted ancestor queries.}
Let $\mathcal{T}$ be a rooted tree with integer weights on nodes. A \emph{weighted ancestor} query for a node $u$ and weight $d$ must return the highest ancestor of $u$ with weight at least $d$.

\begin{fact}[\cite{DBLP:journals/talg/AmirLLS07}]\label{lem:WAQ}
Let $\mathcal{T}$ be a rooted tree of size $n$ with integer weights on nodes. Assume that each weight is at most $n$, with the weight of the root being zero, and the weight of every non-root node being strictly larger than its parent's weight. $\mathcal{T}$ can be preprocessed in $\cO(n)$ time and space so that weighted ancestor queries on it can be performed in $\cO(\log \log n)$ time.
\end{fact}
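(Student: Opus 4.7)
The plan is to prove this via the classical heavy path decomposition combined with the predecessor data structure from \cref{lem:pred}. I would first compute in $\cO(n)$ time and space the heavy path decomposition of $\mathcal{T}$ in the sense of Sleator and Tarjan, so that any root-to-leaf path crosses only $\cO(\log n)$ heavy paths. Since weights strictly increase from a node to its child, on every heavy path the node weights form a strictly increasing sequence bounded by $n$; on each such sequence I would build the predecessor structure of \cref{lem:pred}, keeping the overall preprocessing at $\cO(n)$ time and space.

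For a query $(u,d)$, let $H_1, H_2, \ldots, H_k$ be the heavy paths met on the root-to-$u$ path (with $H_k$ containing $u$), and let $b_i$ denote the deepest ancestor of $u$ lying on $H_i$, so that $b_k = u$. Because weights strictly increase along the ancestor chain, the sequence $w(b_1) < w(b_2) < \cdots < w(b_k)$ is strictly increasing, and the answer must lie on the heavy path $H_{i^\star}$ where $i^\star$ is the smallest index with $w(b_{i^\star}) \ge d$; the answer node itself is then obtained by a single predecessor query inside the sorted weight array of $H_{i^\star}$. Walking the heavy paths $H_1, H_2, \ldots$ one by one would locate $i^\star$ naively in $\cO(\log n)$ time, which together with the final predecessor query already costs $\cO(\log n)$ and is therefore too slow.

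The main obstacle is thus to find $i^\star$ in $\cO(\log \log n)$ time. My plan for this step is to exploit that the sequence $(w(b_i))_{i=1}^{k}$ is itself sorted, so identifying $i^\star$ is again a predecessor problem; the catch is that this sequence depends on $u$ and cannot be stored per node. I would handle this as Amir et al.\ do: precompute, for each heavy path $H$, the weights of the exits from $H$ to its light children, arrange these across all heavy paths in a single global x-fast-trie-style structure indexed by the heavy-path tree, and stitch it to the per-heavy-path structures by fractional cascading. This way, one $\cO(\log \log n)$-time query into the global structure identifies $i^\star$, and one further $\cO(\log \log n)$-time predecessor query inside $H_{i^\star}$ returns the answer node. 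The delicate point, which I expect to be the bottleneck of the argument, is to verify that this composite structure can be built in $\cO(n)$ time and space while still supporting the query in $\cO(\log \log n)$ time; the strict monotonicity of weights along the tree, together with the $\cO(n)$ bound on weights, is what makes these bounds attainable.
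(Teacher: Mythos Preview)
The paper does not contain a proof of this statement: it is stated as a \emph{Fact} in the preliminaries and attributed to Amir, Landau, Lewenstein, and Sokol~\cite{DBLP:journals/talg/AmirLLS07}, with no argument given. There is therefore nothing in the paper to compare your proposal against.

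As a standalone sketch of the cited result, your outline is headed in the right direction --- heavy-path decomposition plus per-path predecessor structures is indeed the backbone of the Amir et al.\ construction --- and you correctly isolate the hard step: locating the heavy path $H_{i^\star}$ in $\cO(\log\log n)$ time rather than $\cO(\log n)$. Your proposed fix, however, is where the sketch becomes hand-wavy. Saying you will ``arrange these across all heavy paths in a single global x-fast-trie-style structure indexed by the heavy-path tree, and stitch it to the per-heavy-path structures by fractional cascading'' does not yet constitute a mechanism: ordinary fractional cascading shaves a $\log$ factor off a \emph{sequence} of catalogue searches, but here you need to jump from the root of the heavy-path tree directly to the correct depth in a single $\cO(\log\log n)$ step, not to accelerate $\cO(\log n)$ cascaded searches. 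The actual construction in~\cite{DBLP:journals/talg/AmirLLS07} handles this differently (via ranked ancestor queries on the tree of heavy paths combined with predecessor search on suitably weighted keys), and getting linear preprocessing rather than $\cO(n\log\log n)$ requires additional care. So your plan is broadly right but the crucial step --- the one you yourself flag as the bottleneck --- is not yet a proof; you would need to spell out precisely what structure replaces your informal ``global x-fast-trie-style structure'' and why it meets both the time and space bounds.
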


If $\mathcal{T}$ is the suffix tree of a string and the weights are the string-depths of the nodes, this result can be improved further:

\begin{fact}[\cite{belazzougui_et_al:LIPIcs.CPM.2021.8}]\label{lem:WAQ-st}
The suffix tree $\mathcal{T}$ of a string of length $n$ can be preprocessed in $\cO(n)$ time and $\cO(n)$ space so that weighted ancestor queries on it can be performed in $\cO(1)$ time.
\end{fact}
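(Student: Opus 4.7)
The plan is to reduce weighted-ancestor queries on the suffix tree $\mathcal{T}$ to two simpler primitives, each admitting $O(n)$-space and $O(1)$-time solutions: constant-time LCA on $\mathcal{T}$ and \emph{threshold next-smaller-value} queries on the LCP array $L$. Preprocessing would build, in $O(n)$ time, the suffix tree $\mathcal{T}$ of $S$ (via Farach-Colton for integer alphabets), the suffix array $\mathrm{SA}$, its inverse $\mathrm{ISA}$, the LCP array $L[1\dd n]$, a Bender--Farach-Colton constant-time LCA structure on $\mathcal{T}$, and an array recording the $\mathrm{SA}$-rank of each leaf of $\mathcal{T}$.

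The reduction relies on the standard correspondence between internal nodes of $\mathcal{T}$ and maximal LCP-intervals: for a leaf $\ell$ of $\mathrm{SA}$-rank $r$ corresponding to suffix $S[i\dd ]$, the highest ancestor with string-depth $\geq d$ is the locus of $S[i\dd i+d-1]$, which equals $\mathrm{LCA}_{\mathcal{T}}(\ell_{a+1},\ell_b)$, where $[a+1\dd b]$ is the maximal interval of $\mathrm{SA}$-ranks around $r$ on which $L[j]\geq d$. Concretely, $a=\max\{j<r : L[j]<d\}$ and $b=\min\{j>r : L[j]<d\}-1$, with the natural boundary conventions $L[1]=-\infty$ and $L[n+1]=-\infty$. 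Given $a$ and $b$, a single LCA call returns the sought ancestor in $O(1)$.

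It then remains to support threshold next-smaller-value queries on $L$ in $O(1)$ time with $O(n)$ extra space. My plan is to build the Cartesian min-tree $C$ of $L$: since node values are non-decreasing along any root-to-leaf path of $C$, the desired $a$ and $b+1$ are determined by the shallowest ancestor in $C$ of position $r$ whose value is strictly less than $d$, so the task reduces to a weighted-ancestor query on the \emph{binary} tree $C$. I would handle this via a micro-macro decomposition: partition $C$ into microtrees of size $\Theta(\log n)$ and a macro tree of size $O(n/\log n)$; store for each microtree its $O(\log n)$-bit succinct encoding together with the monotone sequence of path-values; answer in-microtree queries by a universal precomputed table of size $n^{o(1)}$ after rank-reducing $d$ inside the microtree; and answer cross-microtree traversals using an $O(n)$-space level-ancestor structure on the macro tree.

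The main obstacle will be this last step: the LCA, inverse suffix array, and suffix tree construction are all standard linear-time, linear-space black boxes, but turning weighted ancestor on the Cartesian tree into a uniform $O(1)$-time procedure requires careful bit-level accounting so that a microtree's succinct encoding together with the rank-reduced threshold indexes into the universal table in $O(1)$ word operations. Once this is in place, a weighted-ancestor query $(\ell,d)$ is served by looking up $r=\mathrm{ISA}[\mathrm{pos}(\ell)]$, performing two threshold-NSV queries on $L$ via $C$, and one LCA on $\mathcal{T}$ — four $O(1)$ operations over data structures of total size $O(n)$ and constructed in $O(n)$ time, yielding the claimed bound.
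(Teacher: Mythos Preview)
The paper does not prove this statement at all: it is stated as a \emph{Fact} with a citation to Belazzougui et al.\ (CPM 2021) and is used as a black box. There is therefore no ``paper's own proof'' to compare against; any comparison would have to be with the cited external work, not with anything in this paper.

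That said, your sketch has a real gap. The reduction from weighted-ancestor queries on $\mathcal{T}$ to threshold previous/next-smaller-value queries on the LCP array, and from there to weighted-ancestor queries on the Cartesian min-tree $C$ of $L$, is standard and sound (modulo a slip: you want the \emph{deepest} ancestor in $C$ with value $<d$, not the shallowest, since values are non-decreasing root-to-leaf in a min-Cartesian tree). The problematic step is your handling of the macro tree. After a query exits its microtree, you still need to locate the correct ancestor in the macro tree whose value is below the threshold $d$; this is again a weighted-ancestor problem on a tree of size $O(n/\log n)$, and a level-ancestor structure, which navigates by \emph{depth} rather than by \emph{weight}, does not solve it. Without an additional idea for the macro level---e.g., fusion-tree-style predecessor structures on the $O(\log n)$-length root-to-microtree-root weight sequences, or the specific machinery that Belazzougui et al.\ develop---your decomposition does not yield $O(1)$ time. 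This is precisely the barrier that makes the cited result nontrivial: general weighted-ancestor queries inherit a $\Omega(\log\log n)$ predecessor lower bound, and beating it requires exploiting more structure than a generic micro--macro split provides.
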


\subparagraph*{3D range emptiness.}
A three-dimensional orthogonal range emptiness query asks whether a range $[a_1 \times a_2] \times [b_1 \times b_2] \times [c_1 \times c_2]$ is empty.

\begin{fact}[{\cite[Theorem 2]{DBLP:conf/cocoon/KarpinskiN09}}]\label{lem:3d}
There exists a data structure that answers three-dimensional orthogonal range emptiness queries on a set of $n$ points from a $[U] \times [U] \times [U]$ grid in $\cO(\log \log U + (\log \log n)^3)$ time, uses $\cO(n \log n(\log \log n)^3)$ space, and can be constructed in $\cO(n \log^4 n \log \log n)$ time.
If the query range is not empty, the data structure also outputs a point from it. 
\end{fact}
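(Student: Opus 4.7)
The plan follows the anchoring paradigm outlined in the technical overview. During construction, I would first build a $\tau$-partitioning set $A \subseteq [1 \dd n]$ of $\cO(n/\tau)$ positions via the Kosolobov--Sivukhin construction; its defining guarantees are (i) \emph{density}---every sufficiently long substring of $S$ that does not lie inside a long periodic region contains an anchor---and (ii) \emph{consistency}---equal $\Theta(\tau)$-length fragments of $S$ contain anchors at identical offsets. Over $A$, I would build two sparse suffix trees $\SST$ and $\bd$ (for the forward suffixes of $S$ starting at anchor positions and the reversed suffixes ending at anchor positions, respectively) using \cref{thm:sst}, and augment each with weighted-ancestor support via \cref{lem:WAQ}. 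For every anchor $a \in A$ I would create a 3D point $(x_a, y_a, a)$, where $x_a$ and $y_a$ are the leaf ranks of $a$ in $\SST$ and $\bd$, and index the $\cO(n/\tau)$ resulting points in the orthogonal range-emptiness structure of \cref{lem:3d}. The aggregate cost matches the claimed bound: $\cO(n \log_{n/\tau} n)$ time from the sparse suffix trees and $\cO((n/\tau) \log^4 n \log\log n)$ from the 3D structure, with total extra space $\cO((n/\tau) \log n (\log\log n)^3)$.

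To answer a query $(p, p', t, t')$ with $P = S[p \dd p']$ and $T = S[t \dd t']$, I would distinguish three regimes. First, if $|P| = \cO(\tau)$, I simply run the constant-space read-only pattern matcher of \cref{fact:pm-ro} on $T$, for $\cO(|T|) = \cO(\tau)$ time. Second, if $P$ is long and non-periodic (checked in $\cO(\tau)$ time via \cref{lem:run_per} applied to a length-$\cO(\tau)$ prefix of $P$), then by (i)--(ii) there is a fixed offset $\delta \in [0 \dd 2\tau)$ such that every occurrence of $P$ in $S$ contains an anchor at its starting position plus $\delta$, namely the anchor at position $p + \delta \in A$. A pair of weighted-ancestor queries on $\SST$ and $\bd$, launched from the leaves of that anchor at string-depths $|P|-\delta$ and $\delta$, yields rank intervals $[x_l, x_r]$ and $[y_l, y_r]$ that characterise exactly those anchors whose surrounding $|P|$-window equals $P$. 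Intersecting with the coordinate range $[t+\delta \dd t'-|P|+1+\delta]$ and popping points iteratively from \cref{lem:3d} then enumerates all occurrences; because $P$ is non-periodic and $|T|<2|P|$, \cref{cor:progression} guarantees $\cO(1)$ reported points, keeping this branch within $\cO((\log\log n)^3)$ beyond the anchor lookup.

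The main obstacle is the third regime: $P$ has a long highly periodic prefix, where the consistency of $A$ breaks down. I would handle it by pre-indexing, at construction time, the set of maximal periodic runs of $S$ of length exceeding $\tau$; there are only $\cO(n/\tau)$ such runs by the Kolpakov--Kucherov bound, which comfortably fits the space budget, and their boundaries can be searched using the predecessor structure of \cref{lem:pred}. At query time, once $\rho$, the period of the length-$\Theta(\tau)$ prefix of $P$, is known, I would extend the periodicity of $P$ to the maximal run $R$ of $S$ that contains $[p \dd p']$, verify that $\per(R) = \rho$ and $|P| \leq |R|$, and then read off the arithmetic progression of occurrences of $P$ inside $R \cap T$ directly, again invoking \cref{cor:progression}. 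The query time $\cO(\tau + \log n \log^3 \log n)$ then splits cleanly into the $\cO(\tau)$ spent scanning pattern prefixes and verifying periodicities, and the $\cO(\log n)$ predecessor, weighted-ancestor, and 3D queries---each of cost $\cO((\log\log n)^3)$ at worst---needed to locate run endpoints and to resolve boundary cases where the pattern straddles the beginning or end of a periodic run.
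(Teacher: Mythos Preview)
Your proposal does not address the stated \cref{lem:3d} at all. The statement is a cited \emph{Fact} from Karpinski and Nekrich (2009) asserting the existence of a three-dimensional orthogonal range emptiness structure with the listed query time, space, and construction time; the paper imports it as a black box and gives no proof of its own. A proof would have to build such a geometric data structure---e.g., by dimension reduction via persistence or fractional cascading on top of a 2D structure---and has nothing to do with strings, $\tau$-partitioning sets, or suffix trees.

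What you have written is instead a sketch of the paper's \textsc{Internal Pattern Matching} construction (essentially \cref{thm:long-ipm} and \cref{cor:ipm}), in which \cref{lem:3d} is \emph{invoked} as a subroutine. You even write ``index the $\cO(n/\tau)$ resulting points in the orthogonal range-emptiness structure of \cref{lem:3d},'' so you are treating the very statement you were asked to prove as a given tool. In short, you have confused the target: the statement to be justified is the 3D range structure itself, not the IPM data structure built on top of it. Since the paper merely cites this fact, the appropriate answer is that no proof is required here beyond the citation; your IPM sketch, while broadly aligned with the paper's approach to \cref{thm:long-ipm}, is simply off-topic for \cref{lem:3d}.
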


\begin{remark}
Better space vs.~query-time tradeoffs than the above are known for the 3D range emptiness problem; cf~\cite{DBLP:conf/compgeom/ChanLP11} and references therein.
We opted for the data structure encapsulated of \cref{lem:3d} due to its efficient construction algorithm.
Note that a data structure capable of reporting all points in an orthogonal range over a $[U] \times [U] \times [U]$ grid with $n$ points in time $\cO(Q_1(U,n) + Q_2(U,n) \cdot  |\textsf{output}|)$ can answer range emptiness queries, also returning a witness in the case the range is not empty, in time $\cO(Q_1(U,n) + Q_2(U,n))$.
\end{remark}

\section{Internal Pattern Matching}\label{sec:ipm}

We consider a slightly more powerful variant of \IPM queries, as required by our applications. A reader that is only interested in \IPM queries can focus on the case when $a = \varepsilon$.

\defDSproblem{Extended \IPM (Decision)}{A string $S$ of length $n$ over an integer alphabet to which we have read-only random access.}{Given $p,p',t,t' \in [1 \dd n]$ and $a \in \Sigma \cup \{\varepsilon \}$, return whether $P:=S[p \dd p']a$ occurs in $T:=S[t\dd t']$ and, if so, return a witness occurrence.}

Our solution for \textsc{Extended \IPM (Decision)} heavily relies on a solution for the following auxiliary problem.

\defDSproblem{Anchored \IPM}{A string $S$ of length $n$ over an integer alphabet $\Sigma$ to which we have read-only random access and a set $\A \subseteq [1 \dd n]$.}{Given $p,x,p',t,t' \in [1 \dd n]$ with $p\leq x\leq p'$, $x\in \A$, and $a \in \Sigma \cup \{\varepsilon \}$,
for $P:=S[p \dd p']a$,
decide whether there exists an occurrence of $P$ at some position $j \in [t \dd t'-|P|+1]$ such that $j + (x-p) \in \A$ and, if so, return a witness.}

\begin{lemma}\label{lem:sst3d}
There exists a data structure for the \textsc{Anchored \IPM} problem that can be built using $\cO(n\log_{|\A|} n)+\cO(|\A| \log^4 |\A| \log \log |\A|)$ time and $\cO(|\A| \log |\A| (\log \log |\A|)^3)$ extra space, and answers queries in $\cO(\log^3 \log n)$ time.
\end{lemma}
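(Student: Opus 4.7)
The plan is to reduce an \textsc{Anchored \IPM} query to a constant number of three-dimensional orthogonal range emptiness queries on a set of $|\A|$ points, whose three coordinates capture, respectively, the match of the suffix starting at a candidate anchor, the match of the reversed prefix ending just before it, and the anchor's position in $S$ (which encodes the interval constraint imposed by $T$).

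For the preprocessing, I would first construct, via \cref{thm:sst}, a sparse suffix tree $\fd$ of $S$ over the suffixes $\{S[y\dd] : y \in \A\}$ and a sparse suffix tree $\bd$ of $\rev(S)$ over the suffixes representing the reversed prefixes $\{\rev(S[1 \dd y-1]) : y \in \A\}$; each takes $\cO(n \log_{|\A|} n)$ time and $\cO(|\A|)$ space. Both trees are augmented for weighted ancestor queries using \cref{lem:WAQ}, and every internal node is equipped with a predecessor structure (\cref{lem:pred}) on the first characters of its outgoing edges, so that the child extending by a given character can be located in $\cO(\log \log n)$ time. Then, for every $y \in \A$ I create the point $(r_f(y), r_b(y), y)$, where $r_f(y)$ and $r_b(y)$ are the DFS leaf-ranks of $y$ in $\fd$ and $\bd$ respectively, and feed these $|\A|$ points into the 3D range emptiness structure of \cref{lem:3d}; this last step dominates and matches the claimed construction time and extra space.

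To answer a query $(p, x, p', t, t', a)$, set $d = x - p$ and $L = p' - x + 1$, and translate the admissible range of occurrence-positions $j$ into an anchor-range $[y_{\min}, y_{\max}]$ for $y = j + d$. A weighted ancestor query from the leaf for $x$ in $\fd$ at string-depth $L$ returns a node whose subtree collects exactly the anchors $y \in \A$ with $S[y\dd y+L-1] = S[x\dd p']$; symmetrically, $\bd$ provides the anchors with $S[y-d\dd y-1] = S[p\dd x-1]$. When $a \neq \varepsilon$, I descend one more level in $\fd$ from the ancestor just found: if the next character along the path to leaf $x$ equals $a$ I keep that child, and otherwise I use the predecessor structure to locate the (unique) child whose edge begins with $a$, reporting ``no occurrence'' if none exists. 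Each of the two refined subtrees translates to a contiguous interval of DFS leaf-ranks, so a single range emptiness query on the 3D structure over the box $[s_1, s_2] \times [r_1, r_2] \times [y_{\min}, y_{\max}]$ decides the problem and, when non-empty, returns a witness anchor $y$ from which $j = y - d$ is reconstructed.

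The main obstacle I anticipate is handling the appended character $a$: because $a$ need not coincide with $S[p'+1]$, the pattern $S[p\dd p']a$ is not in general a substring of $S$, and a plain weighted ancestor call at depth $L+1$ from the leaf for $x$ would select the wrong subtree. The one-level descent with character-indexed navigation sketched above is what patches this, and it fits within the $\cO(\log^3 \log n)$ budget because each weighted ancestor and character-lookup step costs $\cO(\log \log n)$ while the range emptiness query costs $\cO(\log \log n + (\log \log |\A|)^3)$. The remaining work is bookkeeping: verifying that the leaves of each selected subtree form a contiguous DFS-range in $\fd$ and $\bd$, and that the anchor-range derived from $(t, t', p, p', x, a)$ correctly encodes the condition $j \in [t \dd t' - |P| + 1]$.
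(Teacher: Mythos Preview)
Your proposal is correct and follows essentially the same approach as the paper: two sparse suffix trees (for suffixes at anchors and reversed prefixes ending at anchors), weighted ancestor queries plus per-node predecessor structures on edge labels to locate the loci of $S[x\dd p']a$ and $\rev(S[p\dd x))$, and a 3D range emptiness query over points whose coordinates are the leaf ranks in the two trees together with the anchor position. The only cosmetic differences are that the paper appends a sentinel $\$$ to each string (so every anchor corresponds to a distinct leaf) and uses lexicographic ranks rather than DFS leaf-ranks, and it explicitly builds a predecessor structure over $\A$ to jump to the leaf for $x$ in $\cO(\log\log n)$ time; you may want to add these small details, but your argument is otherwise the same.
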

\begin{proof}
For an integer $y \in [1 \dd n]$, denote $P_y:=\rev(S[ \dd y))$ and $S_y:=S[y\dd ]$.
Consider a family $\X :=\{ (P_y\$, S_y\$) : y \in \A \}$ of pairs of strings, where $\$ \not\in \Sigma$ is a letter lexicographically smaller than all others.
Using \cref{thm:sst}, we build a sparse suffix tree $\bd$ for the first components of the elements of $\X$ and a sparse suffix tree $\fd$ for the second components of the elements of $\X$.

Consider a three-dimensional grid $[1 \dd n] \times [1 \dd n] \times [1 \dd n]$. In this grid, create a set~$\Pi$ of points, which contains, for each element $(P_y\$, S_y\$)$ of $\X$, a point $(\textsf{rank}_{\rev}(y), \textsf{rank}(y), y)$, where $\textsf{rank}_{\rev}(y)$ is the lexicographic rank of $P_y\$$ among the first components of the elements of $\X$ and $\textsf{rank}(y)$ is the lexicographic rank of $S_y\$$ among the second components of the elements of $\X$.

Upon a query, we first retrieve the leaves corresponding to $P_x\$$ and $S_x\$$ in $\bd$ and $\fd$, respectively.
This can be done in $\cO(\log\log n)$ time with the aid of~\cref{lem:pred} built over the elements of $\A$, with $x\in \A$ storing pointers to the corresponding leaves as satellite information.
Next, we retrieve the (possibly implicit) nodes $u$ and $v$ corresponding to $\rev(S[p \dd x))$ in $\bd$ and $S[x \dd p']a$ in $\fd$, respectively.
This can be done in $\cO(\log\log n)$ time after an $\cO(|\A|)$-time preprocessing of (a) the two trees according to \cref{lem:WAQ} and (b) the edge-labels of the outgoing edges of each node using~\cref{lem:pred}.
Now, it suffices to check if there is some integer~$j$ such that the leaf corresponding to $P_j\$$ is a descendant of~$u$, the leaf corresponding to $S_j\$$ is a descendant of~$v$, and $j \in [t + (x-p) \dd t' - (p'+|a|-x)]$.
After a linear-time bottom-up preprocessing of $\bd$ and $\fd$, we can retrieve in $\cO(1)$ time the following ranges:
\begin{itemize}
\item $R_1 = \{\textsf{rank}_{\rev}(y) : \text{the node of } \bd \text{ corresponding to } P_y\$ \text{ is a descendant of } u \}$;
\item $R_2 = \{\textsf{rank}(y) : \text{the node of } \fd \text{ corresponding to } S_y\$ \text{ is a descendant of } v \}$.
\end{itemize}
The query then reduces to deciding whether the orthogonal range $R_1 \times R_2 \times [t + (x-p) \dd t' - (p'+|a|-x)]$ contains any point in $\Pi$, and returning a witness if it does.
We can do this efficiently by building the data structure encapsulated in \cref{lem:3d} for $\Pi$:
the query time is $\cO(\log^3 \log n)$, while the construction time is $\cO(n\log_{|\A|} n)+\cO(|\A| \log^4 |\A| \log \log |\A|)$ and the space usage is $\cO(|\A| \log |\A| (\log \log |\A|)^3)$.
\end{proof}

For an integer parameter $\tau$, we next present a data structure for \textsc{Extended \IPM (Decision)} that uses $\cOtilde(n/\tau)$ space on top of the space required to store $S$ and answers queries in nearly-constant time provided that $P$ is of length greater than~$5\tau$.
We achieve this result using the so-called $\tau$-partitioning sets of Kosolobov and Sivukhin \cite{DBLP:journals/corr/abs-2105-03782} as \emph{anchors} for the occurrences if $P$ avoids a certain periodic structure, and by exploiting said periodic structure to construct anchors in the remaining case.

\begin{definition}[$\tau$-partitioning set]\label{def:partition-set}
Given an integer $\tau \in [4\dd n/2]$, a set of positions $\P \subseteq [1\dd n]$ is called a \emph{$\tau$-partitioning set} if it satisfies the following properties:
\begin{enumerate}[(a)]
\item if $S[i{-}\tau \dd i{+}\tau] = S[j{-}\tau \dd j{+}\tau]$ for $i,j \in [\tau+1 \dd n{-}\tau]$, then $i \in \P$ if and only if $j \in \P$; \label{cond:a}
\item if $S[i \dd i{+}\ell] = S[j \dd j{+}\ell]$, for $i,j \in \P$ and some $\ell \ge 0$, then, for each $d \in [0 \dd \ell{-}\tau)$, $i + d \in \P$ if and only if $j + d \in \P$; \label{cond:b}
\item if $i,j \in [1 \dd n]$ with $j - i > \tau$ and $(i\dd j) \cap \P = \emptyset$, then $S[i \dd j]$ has period at most~$\tau / 4$. \label{cond:c}
\end{enumerate}
\end{definition}

\begin{theorem}[{\cite{DBLP:journals/corr/abs-2105-03782}}]
Suppose that we have read-only random access to a string $S$ of length~$n$ over an integer alphabet.
For any integer $\tau \in [4\dd \cO(n / \log^2 n)]$ and $b = n / \tau$, one can construct in $\cO(n\log_b n)$ time and $\cO(b)$ extra space a $\tau$-partitioning set $\P$ of size~$\cO(b)$. The set
$\P$ additionally satisfies the property that if a fragment $S[i\dd j]$ has period at most~$\tau / 4$, then $\P \cap [i + \tau \dd j - \tau] = \emptyset$.
\label{thm:part}
\end{theorem}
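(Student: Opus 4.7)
The plan is to realise $\P$ via a locally consistent labelling scheme. Assign to each position $i$ with $\tau < i \le n-\tau$ a label $\phi(i)$ that is a deterministic function of the context $S[i-\tau\dd i+\tau)$---for instance a Karp--Rabin fingerprint of that window combined with a canonical tiebreaker that compares actual substrings when fingerprints collide, so that $\phi(i) = \phi(j)$ iff $S[i-\tau\dd i+\tau) = S[j-\tau\dd j+\tau)$. Then declare $i \in \P$ iff $\phi(i)$ is the strict minimum of $\phi$ on the window $[i-\tau/4+1 \dd i+\tau/4]$, using the canonical total order to break residual ties. Because the membership decision at $i$ inspects $S$ only inside a window of length $\cO(\tau)$ around $i$, every property of $\P$ is controlled by local contexts.

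Properties~(a) and~(b) of \cref{def:partition-set} then follow from this localisation: two positions whose $(2\tau+1)$-neighbourhoods coincide have identical labels on their whole decision window (once one tightens constants so that the decision window embeds into the $(2\tau+1)$-neighbourhood), and hence agree on membership in $\P$. For property~(c) and the extra sparsity guarantee, the main lever is the Fine--Wilf lemma. If a subinterval of length $>\tau$ contains no point of $\P$, then every position in its central part fails to be the strict minimum, which forces $\phi(k) = \phi(k')$ for two distinct nearby $k,k'$; unfolding the definition of $\phi$ turns this equality into a periodicity of the underlying contexts, and gluing these periodicities via Fine--Wilf yields a period of the whole fragment of length at most $\tau/4$. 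Conversely, inside a fragment of period $\rho \le \tau/4$ the labels $\phi$ are themselves $\rho$-periodic in the central part, so the strict-minimum rule selects at most one representative per period; inserting an extra safety margin of $\tau$ at both ends gives the required emptiness on $[i+\tau \dd j-\tau]$.

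For the algorithm, I would stream $S$ and maintain the labels of a sliding window of $\cO(\tau)$ positions using a rolling Karp--Rabin fingerprint together with a monotone deque that tracks the minimum in $\cO(1)$ amortised time per position, so that at any moment only $\cO(b)$ words are stored on top of $S$. Fingerprint collisions are resolved by explicit substring comparisons; amortising these verifications (or, equivalently, exploiting a hierarchical decomposition of equal-label classes) contributes the $\log_b n$ factor in the running time. The main obstacle is the interaction between density property~(c) and the periodicity-avoidance property: the local-minima rule is naturally prone to picking many anchors inside periodic regions, because identical labels recur with the period. The resolution is a two-stage construction in which a preliminary scan---using \cref{lem:run_per} on each block---identifies every maximal run with period $\le \tau/4$ and overrides the minima rule inside it, inserting only $\cO(1)$ anchors near the endpoints, while keeping the override a function of the same $\cO(\tau)$-local context so that the consistency properties survive the case split.
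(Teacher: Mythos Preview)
The paper does not prove \cref{thm:part}: it is quoted as a result of Kosolobov and Sivukhin~\cite{DBLP:journals/corr/abs-2105-03782} and used purely as a black box. There is therefore no proof in the paper to compare your attempt against.

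For what it is worth, your sketch has the right high-level shape---local minima of a context-dependent label, with an override inside highly periodic regions, is exactly the genre of construction behind partitioning/synchronising sets---but several steps would not go through as written. First, property~\ref{cond:b} is not covered by your locality argument: for small $d$ the decision window of $i+d$ extends to the left of position $i$, where the hypothesis $S[i\dd i+\ell]=S[j\dd j+\ell]$ tells you nothing; the actual construction uses the assumption $i,j\in\P$ in a nontrivial way so that an anchor acts as a ``reset'' for subsequent decisions. Second, there is a tension between your canonical tiebreaker and the periodicity-avoidance property: once ties are broken, every window has a unique minimum, so periodic regions would still receive $\Theta(\text{length}/\tau)$ anchors, contradicting $\P\cap[i+\tau\dd j-\tau]=\emptyset$; conversely, without tiebreaking you must argue that a non-periodic region always contains a \emph{strict} minimum, which is the delicate direction you glossed over. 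Third, the stated theorem is deterministic, whereas Karp--Rabin is randomised; resolving collisions by explicit substring comparison can cost $\Omega(\tau)$ per comparison, and it is unclear how this amortises to $\cO(n\log_b n)$ time with only $\cO(b)$ workspace. The cited construction avoids fingerprints altogether via a multi-level deterministic scheme, which is also where the $\log_b n$ factor actually comes from.
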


\begin{definition}[$\tau$-runs]\label{def:tau-runs}
A fragment $F$ of a string $S$ is a $\tau$-run if and only if $|F| > 3\tau$, $\per(F) \leq \tau/4$, and $F$ cannot be extended in either direction without its period changing.
The Lyndon root of a $\tau$-run $R$ is the lexicographically smallest rotation of $R[1 \dd \per(R)]$.
\end{definition}

The following fact follows from the proof of Lemma 10 in the full version of \cite{DBLP:conf/esa/Charalampopoulos21}, where the definition of $\tau$-runs is slightly different, but captures all of our $\tau$-runs.

\begin{fact}[cf.~{\cite[proof of Lemma 10]{DBLP:conf/esa/Charalampopoulos21}}]\label{fact:runs_overlap}
Two $\tau$-runs can overlap by at most $\tau/2$ positions. The number of $\tau$-runs in a string of length $n$ is $\cO(n/\tau)$.
\end{fact}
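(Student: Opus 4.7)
The plan is to derive the two claims in sequence: first the pairwise overlap bound via a Fine--Wilf argument, then the counting bound as a straightforward consequence.

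\textbf{Overlap bound.} I argue by contradiction: assume two distinct $\tau$-runs $F_1, F_2$ share more than $\tau/2$ positions, and let $p_i := \per(F_i) \le \tau/4$. The overlap has length strictly greater than $p_1 + p_2$, so it admits both periods $p_1$ and $p_2$ (inherited from $F_1$ and $F_2$) and, by the Fine--Wilf periodicity lemma, also $g := \gcd(p_1, p_2)$. If $p_1 = p_2 =: p$, I check that $F_1 \cup F_2$ has period $p$: for any two positions at distance $p$ in the union, the overlap being of length $\ge 2p$ ensures that either both lie in $F_1$ or both lie in $F_2$, so one of the two period relations applies. Since $F_1$ has $p$ as its minimum period, any superstring sharing period $p$ has minimum period equal to $p$; then maximality of $F_1$ forces $F_1 \cup F_2 = F_1$, and symmetrically $F_1 \cup F_2 = F_2$, contradicting $F_1 \ne F_2$.

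Otherwise WLOG $p_1 < p_2$, and the target is to show $F_2$ itself admits $g < p_2$ as a period, contradicting $\per(F_2) = p_2$. Since the overlap has length at least $p_1 + p_2 \ge p_2$, the prefix $F_2[1 \dd p_2]$ lies entirely within the overlap and therefore inherits period $g$. Because $g \mid p_2$, this prefix is a $(p_2/g)$-fold concatenation of $F_2[1 \dd g]$; combining this with the period $p_2$ of the full $F_2$ gives that every position of $F_2$ equals the position obtained by reducing modulo $g$, i.e., $g$ is a period of all of $F_2$, as desired.

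\textbf{Counting bound.} Ordering the $\tau$-runs by their left endpoints $l_1 \le l_2 \le \cdots$, I claim $l_{i+1} - l_i > 5\tau/2$ for every $i$. Indeed, since $|F_i| > 3\tau$, the assumption $l_{i+1} - l_i \le 5\tau/2$ would give an overlap of length at least $3\tau - 5\tau/2 = \tau/2$ between $F_i$ and $F_{i+1}$, contradicting the overlap bound just proved. Hence at most $O(n/\tau)$ $\tau$-runs fit inside a string of length $n$.

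The delicate step is the case of distinct periods: lifting $g$ from being a period only on the overlap to being a period of all of $F_2$. The key geometric observation making this work is that the overlap length $\ge p_1 + p_2$ is sufficient to entirely cover a prefix of $F_2$ of length exactly $p_2$; once this prefix has period $g$, the divisibility $g \mid p_2$ lets the short period propagate through the $p_2$-periodic structure of $F_2$.
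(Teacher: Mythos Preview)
The paper does not supply its own proof of this fact; it is stated with a citation to~\cite{DBLP:conf/esa/Charalampopoulos21}. Your argument is the standard Fine--Wilf one and is essentially correct, but there is one real slip worth fixing.

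In the case $p_1 < p_2$ you assert that ``the prefix $F_2[1\dd p_2]$ lies entirely within the overlap''. This is not justified: you took WLOG $p_1 < p_2$, but you did \emph{not} fix which of $F_1,F_2$ starts earlier in $S$, so the overlap need not begin at the first position of $F_2$. It may be a suffix of $F_2$, or, if $F_1 \subseteq F_2$, an internal fragment of $F_2$ that is neither a prefix nor a suffix. The repair is immediate and uses exactly your propagation idea: the overlap is a substring of $F_2$ of length greater than $p_2$, so it contains \emph{some} window $F_2[s\dd s+p_2-1]$; this window inherits period $g$ from the overlap, and since $g \mid p_2$ and $F_2$ has period $p_2$, every position of $F_2$ is congruent modulo $p_2$ to a position of this window, whence $g$ is a period of all of $F_2$. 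So replace ``prefix'' by ``some length-$p_2$ window contained in the overlap'' and the argument goes through unchanged.

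A minor wording point in the counting step: you write that $l_{i+1}-l_i \le 5\tau/2$ yields an overlap ``at least $3\tau - 5\tau/2 = \tau/2$'', which would not contradict a bound of ``at most $\tau/2$''. The inequality is in fact strict because $|F_i| > 3\tau$ is strict; just carry the strictness through.
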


\begin{lemma}\label{lemma:runs_compute}
Suppose that we have read-only random access to a string $S$ of length $n$ over an integer alphabet.
For any integer $\tau \in [4\dd \cO(n / \log^2 n)]$, all $\tau$-runs in $S$ can be computed and grouped by Lyndon root in $\cO(n\log_b n)$ time using $\cO(b)$ extra space, where $b=n/\tau$. 
Within the same complexities, we can compute, for each $\tau$-run, the first occurrence of its Lyndon root in it.
\end{lemma}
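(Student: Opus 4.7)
The plan is to use the $\tau$-partitioning set $\P$ of \cref{thm:part} to localize $\tau$-runs. I compute $\P$ (sorted) in $\cO(n \log_b n)$ time and $\cO(b)$ extra space. By property (c) of \cref{def:partition-set}, any two consecutive positions $p < p'$ in $\P$ with $p' - p > \tau$ (a \emph{wide gap}) yield a fragment $S[p\dd p']$ of period at most $\tau/4$. For each such wide gap, I compute its smallest period $\rho$ via \cref{lem:run_per} in $\cO(p' - p)$ time and $\cO(1)$ space, then extend $S[p \dd p']$ character-by-character on both sides while $S[i] = S[i+\rho]$ still holds. If the resulting maximal fragment $R$ has length greater than $3\tau$, $R$ is a $\tau$-run of period $\rho$.

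Correctness: let $R = S[\ell \dd r]$ be a $\tau$-run of period $\rho_R$. By the additional ``$\P$-free middle'' property in \cref{thm:part}, $\P \cap [\ell + \tau \dd r - \tau] = \emptyset$, so the $\P$-positions inside $R$ lie in two disjoint boundary intervals of width less than $\tau$, and among all wide gaps exactly one---call it $(p, p')$---spans this empty middle. A Fine--Wilf argument shows that $p \geq \ell$, $p' \leq r$, and $\per(S[p\dd p']) = \rho_R$: the sub-fragment $S[\ell \dd \min(p', r)]$ has length at least $2\rho_R$ and admits both period $\rho_R$ (from $R$) and period $\per(S[p\dd p'])$, so by Fine--Wilf its smallest period equals $\gcd(\rho_R, \per(S[p\dd p']))$, which combined with $R$'s smallest period being $\rho_R$ forces $\rho_R \mid \per(S[p\dd p'])$; if $p < \ell$, this divisibility together with $R$'s maximality at its left boundary jointly force both $S[\ell - 1] = R[\rho_R]$ and $S[\ell - 1] \neq R[\rho_R]$, a contradiction. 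Hence extending this unique middle wide gap recovers $R$ exactly, and conversely any wide gap whose extension has length exceeding $3\tau$ must be the middle wide gap of the resulting $\tau$-run, so no deduplication is needed.

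Total extension work is $\cO(n)$: extensions that are $\tau$-runs sum to $\cO(n)$ by \cref{fact:runs_overlap}, and extensions that are not $\tau$-runs have length at most $3\tau$ summed over $\cO(b)$ wide gaps. Period computations also sum to $\cO(n)$ since consecutive wide gaps are disjoint in $[1\dd n]$. Combined with $\P$'s construction, the overall time is $\cO(n \log_b n)$ and the extra space is $\cO(b)$.

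For each $\tau$-run $R$ of period $\rho$, apply \cref{lem:duval} to $R[1\dd \rho]$ in $\cO(\tau)$ time to obtain the Lyndon root together with its rotation offset $s$; the first occurrence of the Lyndon root in $R$ is at $R[s+1 \dd s+\rho]$ since $|R| > 3\rho$. Summed over all $\tau$-runs: $\cO(n)$. To group $\tau$-runs by Lyndon root, collect the $\cO(b)$ starting positions of the Lyndon roots, build a sparse suffix tree via \cref{thm:sst} in $\cO(n \log_b n)$ time and $\cO(b)$ space, and preprocess it for weighted ancestor queries (\cref{lem:WAQ}). For each $\tau$-run, locate the (possibly implicit) node at string-depth $\rho$; two runs share a Lyndon root iff these nodes and their associated lengths coincide, which a single sort resolves. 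The main obstacle is the Fine--Wilf-based correctness argument above, which pins down the one-to-one correspondence between $\tau$-runs and middle wide gaps and thus obviates any deduplication.
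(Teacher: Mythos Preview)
Your proof is correct and follows essentially the same approach as the paper: both use the $\tau$-partitioning set of \cref{thm:part}, identify the long $\P$-free stretches, compute their periods via \cref{lem:run_per}, extend to obtain the $\tau$-runs, and apply \cref{lem:duval} for the Lyndon roots. You supply two pieces that the paper's proof leaves implicit: (i) the Fine--Wilf argument showing that the wide gap associated with a $\tau$-run $R$ lies inside $R$ and has period exactly $\per(R)$ (the paper asserts the injection and the $\cO(\tau)$ extension bound without justification), and (ii) an explicit mechanism for grouping by Lyndon root via a sparse suffix tree and weighted-ancestor queries (the paper's proof does not address grouping at all). One minor imprecision: Fine--Wilf only guarantees that $\gcd(\rho_R,\rho')$ is \emph{a} period of the overlap, not its smallest period; the conclusion $\rho_R \mid \rho'$ still follows because the overlap, having length exceeding $\rho_R$, contains a full period-block of $R$, forcing $\gcd(\rho_R,\rho')$ to be a period of all of $R$ and hence to equal $\rho_R$.
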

\begin{proof}
We first compute a $\tau$-partitioning set $\P$ for $S$ using \cref{thm:part}.
Due to Property~\ref{cond:c}, its converse that is stated in \cref{thm:part}, and \cref{fact:runs_overlap} there is a natural injection from the $\tau$-runs to the maximal fragments of length at least $\tau$ that do not contain any position in~$\P$~--- the $\tau$-run corresponding to such a maximal fragment may extend for $\tau$ more positions in each direction.
We can find the period of each maximal fragment in time proportional to its length using $\cO(1)$ extra space due to \cref{lem:run_per}.
We then try to extend the maximal fragment to a $\tau$-run using $\cO(\tau)$ letter comparisons. 
Additionally, we compute the Lyndon root of each computed $\tau$-run $R$ in $\cO(\tau)=\cO(|R|)$ time by applying \cref{lem:duval} to $R[1 \dd \per(R)]$.
The first occurrence of the Lyndon root in the $\tau$-run can be computed in constant time since we know which rotation of $R[1 \dd \per(R)]$ equals the Lyndon root.
Over all $\tau$-runs, the total time is~$\cO(n)$ due to \cref{fact:runs_overlap}.
\end{proof}

We next prove the main result of this section.

\begin{theorem}\label{thm:long-ipm}
For any $\ell \in [20\dd \cO(n / \log^2 n)]$, there is a data structure for \textsc{Extended \IPM (Decision)} that can be built using $\cO(n\log_{n/\ell} n)$ + $\cO((n/\ell) \cdot \log^4 n \log \log n)$ time and $\cO((n/\ell) \cdot \log n (\log \log n)^3)$ extra space given random access to $S$ and answers queries in $\cO(\log^3 \log n)$ time, provided that $|P| > \ell$.
\end{theorem}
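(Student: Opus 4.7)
The plan is to set $\tau := \lfloor \ell/5 \rfloor \ge 4$, which guarantees $|P| > 5\tau$. During preprocessing we would build three main ingredients: (i) a $\tau$-partitioning set $\P$ of size $\cO(n/\tau)$ via \cref{thm:part}; (ii) all $\tau$-runs of $S$ grouped by Lyndon root via \cref{lemma:runs_compute}, together with, for each Lyndon root $\lambda$, a predecessor structure (\cref{lem:pred}) on the starting positions of $\tau$-runs with that root, as well as a global predecessor on all $\tau$-run starts; and (iii) the Anchored IPM data structure of \cref{lem:sst3d} on the enlarged anchor set $\A := \P \cup \{R_{\mathrm{start}} : R \text{ a } \tau\text{-run}\} \cup \{R_{\mathrm{end}}+1 : R \text{ a } \tau\text{-run}\}$. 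Since the number of $\tau$-runs is $\cO(n/\tau)$, we have $|\A| = \cO(n/\ell)$, and the construction costs of all three ingredients fit the claimed time and space bounds.

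To answer a query $(p,p',t,t',a)$, we would first use a predecessor query on $\P$ to check whether there exists some $x \in \P \cap [p+\tau, p'-\tau]$. In the \emph{non-periodic case} (such $x$ exists), for any occurrence of $P$ at position $j'$ in $S$ the radius-$\tau$ windows around $x$ and around $j'+(x-p)$ both lie inside the equal fragments $S[p\dd p']$ and $S[j'\dd j'+p'-p]$, and are therefore equal letter-by-letter; Property~(a) of \cref{def:partition-set} then yields $j'+(x-p) \in \P \subseteq \A$. A single call to \cref{lem:sst3d} with anchor $x$ thus answers the query in $\cO(\log^3\log n)$ time.

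Otherwise ($\P \cap [p+\tau, p'-\tau] = \emptyset$, the \emph{periodic case}), Property~(c) yields that $S[p+\tau-1\dd p'-\tau+1]$ has period at most $\tau/4$, and since its length exceeds $3\tau$ it lies inside a unique $\tau$-run $R$ with Lyndon root $\lambda$ and period $\rho$, retrieved via the global $\tau$-run predecessor. If $R_{\mathrm{start}} > p$, the letter $S[R_{\mathrm{start}}-1]$ fails to continue the period of $R$; by the occurrence, the matching letter at position $j'+R_{\mathrm{start}}-p-1$ in $S$ likewise fails to continue the period of the occurrence's periodic-kernel $\tau$-run $R'$, forcing $R'_{\mathrm{start}} = j'+R_{\mathrm{start}}-p \in \A$, so that $x := R_{\mathrm{start}}$ is a valid Anchored IPM anchor and one call to \cref{lem:sst3d} suffices. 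A symmetric argument with anchor $x := R_{\mathrm{end}}+1 \in \A$ handles the case $R_{\mathrm{end}} < p'$. If instead $[p,p'] \subseteq R$, then $S[p\dd p']$ is fully periodic; since $\tau$-runs sharing a Lyndon root are pairwise disjoint and $|T| < 2|P|$, at most one $\tau$-run $R'$ with Lyndon root $\lambda$ has $|R' \cap T| \ge |P|$. We locate the two candidates (predecessor and successor of $t$) in the $\lambda$-indexed structure with $\cO(1)$ predecessor queries, and for each we decide in $\cO(1)$ time, from $R'$'s period, phase, intersection with $T$, and the trailing-letter condition on $a$, whether a valid occurrence exists.

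The main obstacle I anticipate is the fully-periodic sub-case, where $[p,p'] \subseteq R$ places no $\A$-anchor inside $P$ and Anchored IPM cannot be invoked. The resolution is precisely the combination of $|T|<2|P|$ with the disjointness of same-Lyndon-root $\tau$-runs, which leaves only $\cO(1)$ candidate runs, each decidable in constant time from pre-stored metadata; this preserves the $\cO(\log^3\log n)$ query-time target while matching the preprocessing budget inherited from \cref{thm:part}, \cref{lemma:runs_compute}, and \cref{lem:sst3d} applied to an anchor set of size $\cO(n/\ell)$.
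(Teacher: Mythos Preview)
Your high-level plan---choose $\tau=\lfloor\ell/5\rfloor$, build a $\tau$-partitioning set plus $\tau$-run information, enlarge to an anchor set $\A$, and answer queries via \cref{lem:sst3d}---matches the paper's architecture, and your treatment of the non-periodic case (anchor in $\P\cap[p{+}\tau,p'{-}\tau]$ via Property~(a)) and of the ``run-boundary'' periodic subcases ($R_{\mathrm{start}}>p$ or $R_{\mathrm{end}}<p'$, using run endpoints as anchors) is sound.

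There is, however, a genuine gap in your fully-periodic subcase $[p,p']\subseteq R$. You invoke the bound $|T|<2|P|$ to conclude that only $\cO(1)$ $\tau$-runs with Lyndon root $\lambda$ can host an occurrence, but \textsc{Extended \IPM (Decision)} carries \emph{no} such constraint: $T$ may be any fragment of $S$, even all of $S$ (indeed, the paper later uses the data structure with $T=S$ in \cref{thm:lcs-aux} and with arbitrary subranges in the binary search of \cref{cor:ipm}). Without $|T|<2|P|$ there may be $\Theta(n/\tau)$ same-root $\tau$-runs intersecting $T$ in $\ge |P|$ positions, so your enumeration does not run in $\cO(\log^3\log n)$ time. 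Moreover, your anchor set $\A$ contains no element inside $[p,p']$ in this subcase, so you cannot fall back on an Anchored~\IPM query either.

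The paper closes this gap with two ideas you are missing. First, it separates out the case where \emph{both} $P$ and $T$ have period at most $\tau/4$ and resolves it by $\cO(1)$-time arithmetic on the Lyndon-root decompositions $P=L(|L|{-}x_P\dd]\,L^{e_P}\,L[\dd y_P]$ and $T=L(|L|{-}x_T\dd]\,L^{e_T}\,L[\dd y_T]$. Second, for $\per(P)\le\tau/4$ but $\per(T)>\tau/4$, it enriches $\A$ with the starting and ending positions of the \emph{first and last two} occurrences of the Lyndon root inside each $\tau$-run; it then issues Anchored~\IPM queries from two alternative representatives of $P$ taken near the ends of $R$. Since $\per(T)>\tau/4$, any $\tau$-run $R'$ containing an occurrence of $P$ in $T$ has an endpoint inside $T$, and the extreme occurrence of $P$ in $R'$ on that side aligns its Lyndon-root position with one of the stored anchors in $\mathcal{L}$. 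Your anchor set of run starts and run-end$+1$ positions does not supply a usable anchor inside $P$ in this situation, so even after adding the ``both periodic'' check you would still need an anchor mechanism along these lines.
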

\begin{proof}
Let $\tau = \lfloor \ell/5 \rfloor$.
We use \cref{thm:part,lemma:runs_compute} with parameter $\tau$ to compute a partitioning set $\P$ of size $\cO(n/\tau)$ and all $\tau$-runs in $S$, grouped by Lyndon root, each one together with the first occurrence of its Lyndon root.
We create a static predecessor structure~$\mathcal{R}$ using \cref{lem:pred}, where we insert the starting position of each run $R$ with the following satellite information: $R$'s ending position, the first occurrence of $R$'s Lyndon root in $R$, and an identifier of its group.
We additionally create a data structure $\mathcal{Q}$, where, for each group of $\tau$-runs with a common root $L$, indexed by their identifiers,
we construct, using \cref{lem:pred}, a predecessor data structure for a set $Q_L := \{(y,s,e) : S[s \dd e] \text{ is the longest }\tau\text{-run with a suffix }L \circ L[1 \dd y]\}$, 
with the first components being the keys and the remaining components being stored as satellite information.
The sets $Q_L$ can be straightforwardly constructed in $\cO(n\log n/\tau)$ time. 

Now, let $\mathcal{L}$ be a set that contains the ending position of each $\tau$-run as well as the starting (resp.~ending) positions of the first (resp.~last) two occurrences of the Lyndon root in this $\tau$-run; $\mathcal{L}$ can be straightforwardly constructed in $\cO(n/\tau)$ time given the information returned by the application of \cref{lemma:runs_compute}.
We then construct a set $\A:= \P \cup \mathcal{L}$ and preprocess the string $S$ and the set~$\A$ according to \cref{lem:sst3d}.

Our query comprises of two steps.

\noindent \textbf{Step 1:}
First, we deal with the case when both $P$ and $T$ have period at most $\tau/4$.
Since $P$ and $T$ are of length at least $5\tau$, due to \cref{fact:runs_overlap}, each of them can be only contained in the $\tau$-run whose starting position is closest to it in the left.
We can thus check whether they both have period at most $\tau/4$ in $\cO(\log\log n)$ time by performing two predecessor queries on $\mathcal{R}$.
If this turns out to be the case, we then check whether the two corresponding $\tau$-runs belong to the same group.
If they do not, then $P$ does not occur in $T$ due to \cref{fact:runs_overlap}.
Otherwise, let the common Lyndon root of the two runs be $L$. 
We can compute in constant time non-negative integers $x_P, x_T, y_P, y_T < |L|$ and $e_P, e_T$ such that $P = L(|L|-x_P \dd] \circ L^{e_P} \circ L[ \dd y_P]$ and $T = L(|L|-x_T \dd] \circ L^{e_T} \circ L[ \dd y_T]$.
Note that $P$ occurs in $T$ if and only if at least one of the following conditions is met: (1) $e_P = e_T$, $x_P \leq x_T$, and $y_P \leq y_T$; or (2) $e_P = e_T - 1$ and $x_P \leq x_T$; or (3) $e_P = e_T - 1$ and $y_P \leq y_T$; or (4) $e_P \leq e_T - 2$. In each of the four cases, we can compute an occurrence of $P$ in $T$ in constant time.

\noindent \textbf{Step 2:}
In the second step of the query, we consider the case when $\per(T) > \tau/4$ and distinguish between two cases depending on whether $\per(S[p \dd p+3\tau]) \leq \tau/4$. In each case, it suffices to perform at most two anchored internal pattern matching queries.

\noindent \textbf{Case I:} $\per(S[p \dd p+3\tau]) > \tau/4$.
Due to Property~\ref{cond:c}, $[p \dd p+3\tau] \cap \P \neq \emptyset$.
Let $x = \min ([p \dd p+3\tau] \cap \P)$.
Additionally, due to Property~\ref{cond:b}, for any occurrence of $P$ in $S$ at position $j$, we have $[p \dd p+3\tau] \cap \P = (p-j) +\left([j \dd j+3\tau] \cap \P\right)$,
and hence $j+(x-p) \in \P$. 
Thus, an anchored \IPM query returns the desired answer in $\cO(\log^3 \log n)$ time.

\noindent \textbf{Case II:} $\per(S[p \dd p+3\tau]) \leq \tau/4$.
We distinguish between two subcases depending on whether $\per(P) > \tau/4$; we can check this in $\cO(\log\log n)$ time with the aid of data structure~$\mathcal{R}$ by comparing $p'$ with the ending position of the $\tau$-run that contains $S[p \dd p+3\tau]$ and checking if $a = P[|P|-\per(S[p \dd p+3\tau])]$ if $a \neq \varepsilon$.

\textbf{Subcase (a)}: $\per(P) > \tau/4$.
In this case, for any occurrence of $P$ in $T$, the ending position of the $\tau$-run that is a prefix of $P$ must be aligned with the ending position of a $\tau$-run in $T$, which belongs to $\mathcal{L} \subseteq \mathcal{A}$.

Recall that $P = S[p \dd p'] a$. 
If the period of $S[p \dd p']$ is greater than~$\tau/4$, we retrieve the ending position of the $\tau$-run containing $S[p \dd p+3\tau]$, which is in~$\mathcal{L} \subseteq \mathcal{A}$ as well and issue an anchored internal pattern matching query. Assume now that the period of $S[p \dd p'+1]$ is at most $\tau/4$ and $\varepsilon \neq a \neq P[|P|-\per(S[p \dd p'])]$, in which case $p'$ might not be in $\mathcal{A}$.
In this case, we retrieve a fragment $S[q \dd q']$ equal to $S[p \dd p']$, such that $q'$ is an ending position of a $\tau$-run in $\cO(\log \log n)$ time using the data structure $\mathcal{Q}$, if such a fragment exists, and use $q \in \mathcal{L} \subseteq \mathcal{A}$ as the anchor to our internal anchor query, effectively searching for $S[q \dd q']a=P$.
Observe that if such a fragment $S[q \dd q']$ does not exist, $P$ cannot have any occurrence in $T$.

\textbf{Subcase (b)}: $\per(P) \leq \tau/4$.
We consider an occurrence of $P$ in the $\tau$-run that contains~$P$ that starts in its first $\per(P)$ positions and one that ends in its last $\per(P)$ positions.
Let these two occurrences be at positions $p_1$ and $p_2$, respectively.
Each of these occurrences contains at least one element of $\mathcal{L}$; let those elements be denoted $q_1$ for the occurrence at $p_1$ and $q_2$ for the occurrence at $p_2$.

Note that these elements can be straightforwardly computed given the endpoints of the $\tau$-run, the endpoints of $P$, and the first occurrence of the Lyndon root in the $\tau$-run, which we already have in hand.
We then issue anchored internal pattern matching queries for $(p_1,q_1,p_1+|P|-1,t,t')$ and $(p_2,q_2,p_2+|P|-1,t,t')$ as both $q_1$ and~$q_2$ are in $\mathcal{L}$.
These queries are answered in $\cO(\log^3 \log n)$ time.
As we show next, if $P$ has an occurrence in $T$, this occurrence will be returned by those queries.

Consider an occurrence of $P$ in $S[t \dd t']$ and denote the $\tau$-run that contains this occurrence by $R$. 
Since $\per(T) > \tau/4$, $R$ does not contain $S[t \dd t']$.
Without loss of generality, let us assume that $R$ does not extend to the left of $S[t \dd t']$, the remaining case is symmetric.
Let the first occurrence of the Lyndon root $L$ of the $\tau$-run in $P$ be at position $i=q_1-p_1+1$ of~$P$, noting that $i\leq \per(P)$. 
Then, in the leftmost occurrence of $P$ in $R$, position $i$ must be aligned with either the first or the second position where $L$ occurs in $R$.
By the construction of the set $\mathcal{L}$, it follows that both of these positions are in $\mathcal{L}$, and hence the anchored internal pattern matching query will return an occurrence.
\end{proof}

\ipm*
\begin{proof}
If the length of $P$ is at most $\max\{\tau,20\}$, we compute its occurrences in~$T$, whose length is $\cO(\tau)$, in $\cO(\tau)$ time using \cref{fact:pm-ro}. In what follows, we assume that $|P|>\max\{\tau,20\}$.

We build the \textsc{Extended \IPM (Decision)} data structure  of \cref{thm:long-ipm} for $S$ with $\ell=\max\{\tau,20\}$.
This allows us to efficiently answer the decision version of the desired \IPM queries, also returning a witness, in $\cO(\log^3\log n)$ time.
If the query does not return an occurrence of $P$ in $T$, we are done.
Otherwise, we have to compute all occurrences of $P$ in~$T$ represented as an arithmetic progression (cf~\cref{cor:progression}). Let the witness returned by the data structure be $S[x \dd x']$. Consider the rightmost occurrence of $P$ in $S[t \dd x')$, or, if it does not exist, the leftmost occurrence in ~$S(x \dd t']$. Such an occurrence can be found by binary search. If no such occurrence exists, we are again done, as $P$ has a single occurrence in $T$. Otherwise, the occurrences of $P$ in $T$ form an arithmetic progression with difference equal to the difference $d$ of $x$ and the starting position of the found occurrence due to \cref{cor:progression}.
We compute the extreme values of this arithmetic progression using binary search as well:
we compute the minimum and the maximum $j \in \mathbb{Z}$ such that $S[x+j\cdot d \dd x'+j\cdot d] = P$ and $t \le x+j\cdot d \le x'+j\cdot d \le t'$ using $\cO(\log n)$ \IPM queries in total; the complexity follows. 
\end{proof}

\subsection{Lower Bound for an \IPM data structure}
We now show that the product of the query time and the space achieved in \cref{cor:ipm} is optimal up to polylogarithmic factors, via a reduction from the following problem.

\defDSproblem{Longest Common Extension (\LCE)}{A string $S$ of length $n$.}{Given $i,j\in[1\dd n]$, return the largest $\ell$ such that $S[i\dd i+\ell] = S[j\dd j+\ell]$.}

Bille et al.~\cite[Lemma 4]{bille2014time} showed that any data structure for \LCE for $n$-length strings that uses $s$ bits of extra space on top of the input has query time $\Omega(n/s)$.

\begin{lemma}\label{lemma:ipm-lb}
  In the non-uniform cell-probe model, any \IPM data structure that uses $s$ bits of space on top of the input for a string of length $n$, has query time~$\Omega(n/(s\log n))$.
\end{lemma}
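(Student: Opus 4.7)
The plan is to prove the lower bound by a black-box reduction from \LCE to \IPM, and then invoke the \LCE cell-probe lower bound of Bille et al.~\cite{bille2014time} cited in the excerpt. Concretely, I would show that any \IPM data structure using $s$ bits of extra space and supporting queries in time $q$ yields an \LCE data structure on the same string with the same $s$ bits of extra space and query time $\cO(q \log n)$, which, combined with the $\Omega(n/s)$ lower bound on \LCE query time, immediately gives $q = \Omega(n/(s \log n))$.

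The reduction is via binary search on the answer. Given an \LCE query $(i,j)$ on $S$, for any candidate length $\ell \in [0\dd n - \max(i,j)]$, the equality $S[i \dd i+\ell] = S[j \dd j+\ell]$ can be decided by a single \IPM query: set $P = S[i \dd i+\ell]$ and $T = S[j \dd j+\ell]$. We have $|T| = |P|$, so the constraint $|T| < 2|P|$ from \cref{cor:ipm} is satisfied (for $\ell \ge 0$), and the returned arithmetic progression of occurrences of $P$ in $T$ is non-empty iff $P$ starts at position $j$ of $S$, which is precisely the equality we want. A standard binary search then locates the maximal such $\ell$ using $\cO(\log n)$ \IPM queries, so the simulated \LCE data structure answers queries in time $\cO(q \log n)$ while occupying exactly the same $s$ bits of extra space on top of the input.

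Now I apply~\cite[Lemma 4]{bille2014time}: any \LCE data structure for strings of length $n$ using $s$ bits of extra space has query time $\Omega(n/s)$ in the non-uniform cell-probe model. Plugging in the reduction, $q \log n = \Omega(n/s)$, and thus $q = \Omega(n/(s \log n))$, as claimed.

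I do not foresee a real obstacle; the only minor point to be careful about is the boundary case $\ell = 0$ (handled by one explicit letter comparison at position $i$ vs.\ $j$) and confirming that the reduction is uniform in the cell-probe model~--- but since we merely route each binary-search step through the \IPM data structure and perform $\cO(\log n)$ trivial bookkeeping operations per query, the $s$ bits of extra memory used by \LCE are literally those of the underlying \IPM data structure, so the cell-probe lower bound transfers verbatim.
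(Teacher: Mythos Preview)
Your proposal is correct and follows essentially the same approach as the paper: reduce \LCE to \IPM by observing that a single \IPM query with $P=S[i\dd i+\ell]$ and $T=S[j\dd j+\ell]$ tests the equality $S[i\dd i+\ell]=S[j\dd j+\ell]$, binary-search over $\ell$ to answer \LCE in $\cO(\log n)$ \IPM queries, and then apply the $\Omega(n/s)$ cell-probe lower bound of Bille et al.\ to conclude $q\log n = \Omega(n/s)$. The paper's proof is nearly identical, differing only in omitting the boundary-case and model-transfer remarks you spelled out.
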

\begin{proof}
  We prove \cref{lemma:ipm-lb} by reducing \LCE queries in a string $S$ of length $n$ to \IPM queries in $S$. Consider an 
 \IPM data structure with space $s$ and query time $q$ and observe that \IPM queries can be used to check substring equality since $S[i\dd i'] = S[j\dd j']$ if and only if $S[i\dd i']$ occurs inside the interval $[j\dd j']$ and $j'-j = i'-i$.
  Using binary search, we can thus answer any \LCE query via $\cO(\log n)$ \IPM queries.
 Hence, we have $q\log n = \Omega(n/s)$, which concludes the proof the lemma.
\end{proof}

\cref{lemma:ipm-lb} implies a similar lower bound for the word RAM model, which is weaker than the non-uniform cell-probe model.

\section{Other Internal Queries and Approximate Pattern Matching}\label{sec:apps}
In the \pillar model of computation~\cite{9317938} the runtimes of algorithms are analysed with respect to the number of calls made to standard word-RAM operations and a few primitive string operations. It has been used to design algorithms for internal queries~\cite{DBLP:journals/corr/KociumakaRRW13,doi:10.1137/1.9781611973730.36,koc18}, approximate pattern matching under Hamming distance~\cite{9317938} and edit distance~\cite{9996673}, circular approximate pattern matching under Hamming distance~\cite{DBLP:conf/esa/Charalampopoulos22} and edit distance \cite{charalampopoulos2024approximate}, and (approximate) wildcard pattern matching under Hamming distance~\cite{DBLP:journals/corr/abs-2402-07732}.
Space-efficient implementations of the \pillar model immediately result in space-efficient implementations of the above algorithms. 

In the \pillar model, one is given a family of strings $\mathcal{X}$ for preprocessing.
The elementary objects are fragments $X[i\dd j]$ of strings $X \in \mathcal{X}$. Each fragment $S$ is represented via a handle, which is how~$S$ is passed as input to \pillar operations.
Initially, the model provides a handle to each $X \in \mathcal{X}$. 
Handles to other fragments can be obtained through an $\mathsf{Extract}$ operation:
\begin{itemize}
\item $\mathsf{Extract}(S,\ell,r)$: Given a fragment $S$ and positions $1 \le \ell \le r \le |S|$, extract
$S[\ell \dd r ]$. 
\end{itemize}
Furthermore, given elementary objects $S, S_1, S_2$ the following primitive operations are supported in the \pillar model:
\begin{itemize}
\item $\mathsf{Access}(S,i)$: Assuming $i \in [1\dd |S|]$, retrieve $S[i]$.
\item $\mathsf{Length}(S)$: Retrieve the length $|S|$ of $S$.
\item Longest common prefix $\LCE(S_1,S_2)$: Compute the length of the longest common prefix of $S_1$ and $S_2$.
\item $\LCE^R(S_1,S_2)$: Compute the length of the longest common suffix of $S_1$ and $S_2$.
\item Internal pattern matching $\IPM(S_1,S_2)$: Assuming that $|S_2| < 2|S_1|$, compute the set of the starting positions
of occurrences of $S_1$ in $S_2$ represented as one arithmetic progression.
\end{itemize}

All \pillar operations other than $\LCE$, $\LCE^R$, and \IPM admit trivial constant-time and constant-space implementations in the read-only setting. For any $\tau = \cO(n/\log^2 n)$,
Kosolobov and Sivukhin \cite{DBLP:journals/corr/abs-2105-03782} showed that after $\cO(n \log_{n/\tau}n)$-time, $\cO(n/\tau)$-space preprocessing, $\LCE$ and $\LCE^R$ queries can be supported in $\cO(\tau)$ time.
For \IPM queries, we use \cref{cor:ipm}.

In~\cite{DBLP:journals/corr/KociumakaRRW13,doi:10.1137/1.9781611973730.36,koc18} it is (implicitly) shown that the following internal queries can be efficiently implemented in the \pillar model.
\begin{itemize}
\item A \emph{cyclic equivalence query} takes as input two equal-length fragments $U = S[i\dd i+\ell]$ and $V = S[j \dd j+\ell]$, and returns all rotations of $U$ that are equal to $V$.
Any cyclic equivalence query reduces to $\cO(1)$ \LCE queries and $\cO(1)$ $\IPM(P,T)$ queries with $|T|/|P|=\cO(1)$.
\item A \emph{period query} takes as input a fragment $U = S[i \dd j]$, and returns all periods of $U$. Such a period query reduces to $\cO(\log |U|)$ \LCE queries and $\cO(\log |U|)$ $\IPM(P,T)$ queries with $|T|/|P|=\cO(1)$.
\item A \emph{2-period} query takes as input a fragment $U = S[i \dd j]$, checks if $U$ is periodic and, if so, it also returns $U$'s period.
Such a query reduces to $\cO(1)$ \LCE queries and $\cO(1)$ $\IPM(P,T)$ queries with $|T|/|P|=\cO(1)$.
\end{itemize}

\begin{corollary}
Suppose that we have read-only random access to a string $S$ of length $n$ over an integer alphabet.
For any integer $\tau = \cO(n/\log^2 n)$, there is a data structure that can be built using $\cO(n \log_{n/\tau} n + (n/\tau) \cdot \log^4 n \log \log n)$ time and $\cO((n/\tau) \cdot \log n (\log \log n)^3)$ extra space and can answer cyclic equivalence and 2-period queries on $S$ in $\cO(\tau+\log n \log^3 \log n)$ time, and period queries on $S$ in $\cO(\tau \log n +\log^2 n \log^3 \log n)$ time.  
\end{corollary}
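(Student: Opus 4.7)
The plan is to combine three ingredients already in hand: the \IPM data structure of \cref{cor:ipm}, the read-only $\LCE$ data structure of Kosolobov and Sivukhin~\cite{DBLP:journals/corr/abs-2105-03782} (which also supports $\LCE^R$ by applying the symmetric construction to the reverse), and the black-box reductions to $\LCE$ and \IPM queries recalled just before the statement. All other \pillar operations ($\mathsf{Access}$, $\mathsf{Length}$, $\mathsf{Extract}$) admit trivial constant-time, constant-extra-space implementations in the read-only setting, so these three ingredients suffice.

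First, I would build the Kosolobov--Sivukhin structure in $\cO(n\log_{n/\tau} n)$ time and $\cO(n/\tau)$ extra space, giving $\cO(\tau)$ time per $\LCE$/$\LCE^R$ query. Then, I would build the data structure of \cref{cor:ipm}, which for $\tau = \cO(n/\log^2 n)$ requires $\cO(n\log_{n/\tau} n + (n/\tau) \cdot \log^4 n \log \log n)$ preprocessing time and $\cO((n/\tau) \cdot \log n (\log \log n)^3)$ extra space, and answers each \IPM query in $\cO(\tau + \log n \log^3 \log n)$ time. Summing these costs yields exactly the preprocessing time and space stated in the corollary.

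For the query bounds, I would instantiate the three reductions with the per-query times above. A cyclic equivalence query and a 2-period query each decompose into $\cO(1)$ $\LCE$ calls and $\cO(1)$ \IPM calls on patterns and texts with $|T|/|P| = \cO(1)$; to conform to the $|T| < 2|P|$ interface of \cref{cor:ipm}, I would split each such $T$ into $\cO(1)$ overlapping fragments of length less than $2|P|$ and take the union of the returned arithmetic progressions, losing only a constant factor. This yields a per-query bound of $\cO(\tau + \log n \log^3 \log n)$. A period query on a fragment $U$ likewise decomposes into $\cO(\log |U|) = \cO(\log n)$ $\LCE$ and \IPM calls, giving $\cO(\tau \log n + \log^2 n \log^3 \log n)$ overall.

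There is no substantive obstacle here: the entire proof is a bookkeeping exercise that combines the preprocessing costs of the two black-box data structures and sums the per-query costs along known reductions. The only point to double-check is that the \IPM calls generated by the reductions indeed satisfy $|T|/|P| = \cO(1)$, so that the $\cO(1)$-splitting trick aligns them with the strict $|T| < 2|P|$ interface of \cref{cor:ipm}; this is guaranteed by the bullet list preceding the statement.
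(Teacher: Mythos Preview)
Your proposal is correct and matches the paper's approach: the corollary is stated without proof because it follows immediately from the preceding bullet list of reductions together with the \LCE structure of~\cite{DBLP:journals/corr/abs-2105-03782} and \cref{cor:ipm}, exactly as you outline. The splitting trick to handle the $|T|<2|P|$ interface is a standard detail the paper leaves implicit.
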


By plugging this implementation of the \pillar model into~\cite{9317938,9996673,DBLP:conf/esa/Charalampopoulos22,DBLP:journals/corr/abs-2402-07732,charalampopoulos2024approximate}, we obtain the following:

\begin{corollary}
    Suppose that we have read-only random access to a text $T$ of~length $n$, a pattern $P$ of~length $m$ over an integer alphabet.
 Given an integer threshold $k$, for any integer $\tau = \cO(m/\log^2m)$,
    we can compute:
    \begin{itemize}
    \item the approximate occurrences of $P$ in $T$ under the Hamming distance in $\cOtilde(n+k^2\tau \cdot n/m)$ time using $\cOtilde(m/\tau+k^2)$ extra space;
    \item the approximate occurrences of $P$ in $T$ under the edit distance in $\cOtilde(n+k^{3.5}\tau \cdot n/m)$ time using $\cOtilde(m/\tau+k^{3.5})$ extra space;
    \item the approximate occurrences of all rotations of $P$ in $T$ under the Hamming distance in $\cOtilde(n+k^3\tau \cdot n/m)$ time using $\cOtilde(m/\tau+k^3)$ extra space;
    \item the approximate occurrences of all rotations of $P$ in $T$ under the edit distance in $\cOtilde(n+k^5\tau \cdot n/m)$ time using $\cOtilde(m/\tau+k^5)$ extra space;
    \item in the case where $P$ has $D$ wildcard letters arranged in $G$ maximal intervals, the approximate occurrences of $P$ in $T$ under the Hamming distance in $\cOtilde(n+ (D+k)(G+k)\tau \cdot n/m)$ time using $\cOtilde(m/\tau+(D+k)(G+k))$ extra space.
    \end{itemize}
    \end{corollary}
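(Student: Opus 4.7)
The plan is to combine the small-space \pillar implementation of the previous corollary with the existing \pillar-model algorithms cited in the statement. The first ingredient is the standard length-reduction trick: partition $T$ into $\cO(n/m)$ overlapping windows of length $2m$ (consecutive windows overlapping by $m$), so that every substring of $T$ of length at most $m$ is contained in at least one window. We solve each bullet independently on every window and take the union of reported occurrences, shifted by the window offset. For a fixed window $T_w$, form the auxiliary string $S := P \cdot \$ \cdot T_w$ of length $\cO(m)$ with a separator $\$\notin\Sigma$; every fragment of $P$ and every fragment of $T_w$ is also a fragment of $S$, so the \pillar operations \textsf{Access}, \textsf{Length}, $\LCE$, $\LCE^R$, and \IPM needed by the downstream algorithms can be supported via \cref{cor:ipm} together with the small-space $\LCE/\LCE^R$ structure of Kosolobov and Sivukhin~\cite{DBLP:journals/corr/abs-2105-03782}, both built over $S$ with parameter $\tau$. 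This preprocessing costs $\cOtilde(m)$ time and $\cOtilde(m/\tau)$ extra space per window, and each \pillar operation is then answered in $\cO(\tau + \log n \log^3 \log n) = \cOtilde(\tau)$ time.

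With the \pillar implementation in hand, I would plug it into the corresponding algorithm from~\cite{9317938,9996673,DBLP:conf/esa/Charalampopoulos22,charalampopoulos2024approximate,DBLP:journals/corr/abs-2402-07732}, which runs in $\cOtilde(f)$ \pillar operations on top of $\cOtilde(m)$ preprocessing work and uses $\cOtilde(f)$ additional working space, where $f$ equals $k^2$, $k^{3.5}$, $k^3$, $k^5$, or $(D+k)(G+k)$ respectively. Consequently each window contributes $\cOtilde(m) + \cOtilde(f\cdot\tau)$ time, and summing over the $\cO(n/m)$ windows gives the advertised $\cOtilde(n + f \tau \cdot n/m)$ total running time. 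The space bound is $\cOtilde(m/\tau + f)$, since both the \pillar data structures and the algorithm's scratch memory can be freed and rebuilt across windows.

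The main hurdle, which I expect to be routine rather than conceptual, is confirming that the cited algorithms truly access $P$ and $T$ only through \pillar operations and that their auxiliary memory is bounded by the same $\cOtilde(f)$ factor as their operation count; this is how those papers explicitly cast their algorithms, so the verification reduces to one representative sanity check (I would write out the Hamming-distance case of~\cite{9317938}) and a remark that the remaining four are identical modulo substituting the relevant $f$. A small bookkeeping point is the regime where the user-requested $\tau$ would violate the hypothesis $\tau = \cO(m/\log^2 m)$ of \cref{cor:ipm}: there I would clamp $\tau$ to the threshold (which only tightens the stated bounds) or, alternatively, fall back to the constant-space linear-scan implementations afforded by \cref{fact:pm-ro} and \cref{lem:run_per}, whose cost is absorbed into the $\cOtilde(n)$ additive term.
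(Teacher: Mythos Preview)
Your proposal is correct and matches the paper's approach, which consists of the single sentence ``By plugging this implementation of the \pillar model into~\cite{9317938,9996673,DBLP:conf/esa/Charalampopoulos22,DBLP:journals/corr/abs-2402-07732,charalampopoulos2024approximate}, we obtain the following'' with no further proof. You have simply made explicit the windowing into $\cO(n/m)$ blocks of length $\cO(m)$ (which is standard and already built into the cited \pillar algorithms) and the resulting arithmetic, so your write-up is a faithful expansion of what the paper leaves implicit.
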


To the best of our knowledge, the only work that has considered approximate pattern matching in the read-only model is due to Bathie et al.~\cite{DBLP:conf/isaac/BathieKS23}. They presented online algorithms both for the Hamming distance and the edit distance; for the Hamming distance their algorithm uses $\cO(k \log m)$ extra space and $\cO(k \log m)$ time per letter of the text, and for the edit distance $\tO(k^4)$ bits of space and $\tO(k^4)$ amortised time per letter. 

\section{\LCS and \CPM in the Streaming Setting}
In the streaming setting, we receive a stream composed of the concatenation of the input strings, e.g., the pattern and the text in the case of \CPM. We account for all the space used, including the space needed to store any information about the input strings. We exploit the well-known connection between streaming algorithms and communication complexity to prove linear-space lower bounds for streaming algorithms for \LCS and \CPM.

\subsection{Lower Bounds for Streaming Algorithms}

Our streaming lower bounds are based on a reduction from the following problem:

\defCCproblem{Augmented Index}{a binary string $S$ of length $n$.}{an index $i \in [1\dd n]$ and the string $S[\dd i-1]$.}{Bob is to return the value of $S[i]$.}

In the one-way communication complexity model, Alice performs an arbitrary computation on her input to create a message $\mathcal{M}$ and sends it to Bob who must compute the output using this message and his input. The communication complexity of a protocol is the size of $\mathcal{M}$ in bits.
The protocol is randomised when either Alice or Bob use randomised computation. 
\begin{theorem}[{\cite[Theorem 2.3]{DBLP:journals/siamcomp/ChakrabartiCKM13}}]\label{thm:index-lb}
  The randomised one-way communication complexity of \AI is $\Omega(n)$ bits.
\end{theorem}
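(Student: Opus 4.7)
The plan is to reduce to distributional complexity via Yao's minimax principle and then run a standard information-theoretic argument (entropy chain rule + Fano's inequality). This is the canonical recipe for lower bounds of this shape, and the asymmetry of \AI (Bob knows the prefix but not the bit) makes information theory the natural tool, since one cannot reduce from plain \textsc{Index}: giving Bob extra information only makes the problem easier, so a direct reduction from \textsc{Index} would go the wrong way.

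First, I would fix the hard distribution: let $S$ be uniform in $\{0,1\}^n$ and let $i$ be uniform and independent in $[1 \dd n]$, with Bob's prefix $S[\dd i-1]$ determined by $(S,i)$. By Yao's principle it suffices to show that any deterministic one-way protocol that errs with probability at most $1/3$ under this distribution transmits $\Omega(n)$ bits. Let $M=M(S)$ denote Alice's message and let $c$ be its length. Since $H(S)=n$, we have $H(S\mid M)\geq n-c$, and by the chain rule
\[
H(S\mid M) \;=\; \sum_{i=1}^n H\!\left(S[i]\;\middle|\;M,\, S[\dd i{-}1]\right).
\]

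Next I would bound each term on the right using Fano's inequality. For every fixed $i$, Bob's output is a deterministic function of $(M, S[\dd i{-}1], i)$, so its average error over the uniform distribution on $(S,i)$ is the average over $i$ of the per-coordinate error $\varepsilon_i := \Pr[\text{Bob wrong} \mid i]$. The overall error is at most $1/3$, so $\tfrac{1}{n}\sum_i \varepsilon_i \leq 1/3$, and Fano gives $H(S[i]\mid M, S[\dd i{-}1]) \leq H_2(\varepsilon_i)$. Since $H_2$ is concave, $\sum_i H_2(\varepsilon_i) \leq n\, H_2(1/3) = \alpha n$ for a constant $\alpha < 1$. Combining with the chain-rule lower bound, $n-c \leq \alpha n$, whence $c \geq (1-\alpha)\, n = \Omega(n)$.

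The main obstacle is the careful handling of Step~3: one must make sure that the per-coordinate success probability used in Fano is the right quantity and that the concavity step goes through cleanly under the chosen distribution. A clean way to avoid any subtlety is to define success pointwise under the uniform distribution on $(S,i)$ from the outset (as Yao allows), so that the expectation $\tfrac{1}{n}\sum_i \varepsilon_i$ is literally the distributional error and no worst-case-vs-average translation is needed. Everything else (chain rule, $I(S;M)\leq c$, $H_2$ concave) is standard and routine.
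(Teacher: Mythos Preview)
Your proof is correct, but note that the paper does not actually prove this theorem: it is stated with a citation to \cite{DBLP:journals/siamcomp/ChakrabartiCKM13} and used as a black box for the subsequent streaming lower bounds. There is thus no ``paper's own proof'' to compare against.

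That said, your argument is the standard one and goes through cleanly. The only point worth a remark is the claim that one cannot reduce from plain \textsc{Index}: you are right that the reduction goes the wrong way (Bob has strictly more information in \AI), which is precisely why a direct information-theoretic argument is needed. The chain-rule decomposition $H(S\mid M)=\sum_i H(S[i]\mid M,S[\dd i{-}1])$ is tailor-made for \AI because the conditioning matches exactly what Bob holds when the query index is $i$; this is the one genuinely non-routine observation, and you have it. The remaining steps (Yao, $I(S;M)\le c$, Fano for a binary variable, Jensen on $H_2$) are all standard and correctly assembled.
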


\begin{theorem}\label{thm:lb-lcs}
  In the streaming setting, any algorithm for $\LCS$ for strings of length at most~$n$ uses $\Omega(n)$ bits of space.
\end{theorem}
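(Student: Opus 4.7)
The plan is to reduce \AI (\cref{thm:index-lb}) to streaming \LCS. Given an \AI instance where Alice holds $X\in\{0,1\}^{n-1}$ and Bob holds $(i, X[1\dd i-1])$, I will have the two parties prepare the two \LCS inputs so that $|\LCS(S,T)|$ encodes the answer bit $X[i]$. Any streaming algorithm for \LCS using $s$ bits of memory then yields a one-way \AI protocol of $s$ bits of communication: Alice runs the algorithm on $S$, transmits its memory state, Bob resumes the algorithm on $T$, and reads off $X[i]$ from the reported length. By \cref{thm:index-lb} this forces $s=\Omega(n)$.

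The construction is the standard anchor trick: pick a fresh letter $\$$ and set $S := \$ \cdot X$ and $T := \$ \cdot X[1\dd i-1] \cdot 1$, both of length at most $n$. Alice can build $S$ from $X$ alone and Bob can build $T$ from his side of the \AI input. The key claim to verify is that $|\LCS(S,T)| = i+1$ when $X[i]=1$ and $|\LCS(S,T)| \le i$ otherwise, so that Bob recovers $X[i]$ via a single comparison of the algorithm's output against $i+1$.

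Establishing this claim is the only substantive step, and it is short. Since $\$$ appears exactly once in each of $S$ and $T$, at position~$1$, any common substring containing a $\$$ must be of the form $S[1\dd k] = T[1\dd k]$; this equality holds for every $k \le i$, and for $k=i+1$ it holds iff $X[i]=1$. A common substring that avoids $\$$ lies inside $T[2\dd]$, whose length is $i$, so its length is at most $i$. Taken together, the two bounds yield the claimed separation, completing the reduction. I do not foresee any real obstacle beyond securing the uniqueness of the $\$$ anchor, which the construction guarantees by design; the only minor bookkeeping is to adjust $|X|$ so that the resulting strings fit within the length budget $n$.
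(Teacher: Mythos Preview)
Your proof is correct and follows essentially the same approach as the paper: a reduction from \AI where Alice feeds the streaming \LCS algorithm a string anchored by a unique separator $\$$, sends the memory state, and Bob continues with a string whose agreement past the $\$$ depends on the queried bit. Your construction is in fact slightly leaner than the paper's (which prepends a $0^n$ block to both strings and hence has Alice also transmit $n$); your version dispenses with the padding and keeps both strings within the length budget $n$ directly.
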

\begin{proof}
  We show the bound by a reduction from the \AI problem.
  Consider an input $S, (i, S[\dd i-1])$ to the \AI problem, where $|S| = n$.
  We observe that for $A = 0^n\$S$ and $B = 0^n\$S[\dd i-1]1$, where $\$\notin\{0, 1\}$,
    we have $\LCS(A, B) = n+ i+1$ if and only if $S[i] = 1$. 
 Now, if we have a streaming algorithm for \LCS that uses $b$ bits of space, we can develop a one-way protocol for the \AI problem with message size $b$ bits as follows. Alice runs the algorithm on $A$. When she reaches the end of $A$, she sends the memory state of the algorithm and $n$ (in binary) to Bob. Bob continues running the algorithm on $B$, which he can construct knowing $n$ and $S[\dd i-1]$, and returns 1 if and only if $\LCS(A, B) = n+i+1$. \cref{thm:index-lb} implies that $b + \log n = \Omega(n)$, and hence $b = \Omega(n)$.
\end{proof}

\begin{theorem}\label{th:streaming-lb-cpm}
  In the streaming setting, any algorithm for \CPM uses $\Omega(m)$ bits of space, where $m$ is the size of the pattern. 
\end{theorem}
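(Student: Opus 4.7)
The plan is to reduce from Augmented Index in the same spirit as the \LCS lower bound (\cref{thm:lb-lcs}), exploiting the well-known identification of rotations of $P$ with length-$|P|$ factors of $P\cdot P$. Given an \AI instance with Alice holding $S\in\{0,1\}^n$ and Bob holding $(i,S[\dd i-1])$, I propose the \CPM instance
\[ P = S\cdot\$\cdot 0^n, \qquad T = P\cdot S[\dd i-1]\cdot 1, \]
where $\$\notin\{0,1\}$ is a fresh letter and $m = |P| = 2n+1$.

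Alice, holding $S$, constructs $P$ and the Alice-part $T_A = P$ of the text, runs the hypothesised $b$-bit streaming \CPM algorithm on the prefix $P\cdot T_A$ of the stream, and forwards its memory together with $n$ (in binary) to Bob. Bob then builds $T_B = S[\dd i-1]\cdot 1$ from his own input, resumes the algorithm on $T_B$, and reads off whether an occurrence of a rotation of $P$ is reported at position $i+1$ of $T$.

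The crux is a direct verification. The length-$m$ factor $T[i+1\dd i+1+2n]$ equals $S[i+1\dd n]\cdot\$\cdot 0^n\cdot S[\dd i-1]\cdot 1$, while reading $m$ consecutive letters of $PP = S\$0^nS\$0^n$ starting at position $i+1$ yields the rotation $S[i+1\dd n]\cdot\$\cdot 0^n\cdot S[1\dd i]$. Since $\$$ occurs exactly once in $P$, the latter is the only rotation of $P$ whose $\$$-position agrees with that of the factor, hence the only possible candidate for equality; and the two strings coincide precisely when $S[i]=1$. The boundary cases $i=1$ and $i=n$ are covered by the same calculation.

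Combining the protocol with the randomised $\Omega(n)$-bit lower bound for \AI (\cref{thm:index-lb}) yields $b + O(\log n) = \Omega(n)$, i.e.\ $b = \Omega(n) = \Omega(m)$. I do not expect a substantial technical obstacle; the only subtlety is the uniqueness of the candidate rotation, immediate from $\$$ appearing only once in $P$, together with observing that Bob needs only one bit of the algorithm's output (the status of position $i+1$), so that nothing beyond the algorithm's memory contents and $\log n$ bits needs to be communicated.
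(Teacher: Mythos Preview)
Your reduction is correct and follows essentially the same approach as the paper's: build a pattern containing $S$ and a single fresh sentinel~$\$$, let the text be the pattern followed by $S[\dd i-1]1$, and use the uniqueness of~$\$$ to pin down the only candidate rotation at the relevant position, which then matches iff $S[i]=1$. The paper simply takes $P=S\$$ (so $m=n+1$) rather than your $P=S\$0^n$; your $0^n$ padding is harmless but unnecessary, and otherwise the arguments coincide.
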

\begin{proof}
  We show the bound by a reduction from the \AI problem.
  Consider an input $S, (i, S[\dd i-1])$ to the \AI problem, where $|S| = m$.
    Let $A = S\$$ and $B = S\$S[\dd i-1]1$, where $\$\notin\{0, 1\}$.
    $B$ ends with an occurrence of a rotation of $A$ if and only if $S[i] = 1$.
  Now, if we have a streaming algorithm for \CPM that uses $b$ bits of space, we can develop a one-way protocol for the \AI problem with message size $b$ bits as follows.
  Alice runs the algorithm on the pattern $A = S\$$ and the first $|S|+1$ letters of the string $B$.
  She then sends the memory state of the algorithm to Bob. Bob continues running the algorithm on the remainder of $B$, i.e., on $S[1 \dd i-1] 1$, and returns 1 if and only if the algorithm reports an occurrence of a rotation of $A$ ending at position $n+i+1$.
  By \cref{thm:index-lb}, we have $b = \Omega(m)$.
\end{proof}

\section{\LCS and \CPM in the Asymmetric Streaming Setting}

In this section, we use \cref{thm:long-ipm} to show that for any $\tau\in[\tilde{\Omega}(\sqrt{m}) \dd \cO(m/\log^2m)]$, there are asymmetric streaming algorithms for \LCS and \CPM that use $\cO(\tau)$ space and $\cOtilde(m/\tau)$ time per letter. We start by giving an algorithm for a generalization of the \LCS problem that can be used to solve both \LCS and \CPM.
For two strings $S,T$, a fragment $T[t\dd t']$ is a \emph{$T$-maximal common substring} of $S$ and $T$ if it is a occurs in $S$ and neither $T[t-1\dd t']$ (assuming $t > 1$) nor $T[t\dd t'+1]$ (assuming $t' < n$) occurs in $S$.

\begin{theorem}\label{thm:lcs-aux}
Assume to be given read-only random access to a string $S$ of length $m$ and streaming access to a string $T$ of length $n$ over an integer alphabet, where $n \ge m$. 
  For all $\tau\in[\sqrt{m} \log m (\log \log m)^3 \dd \cO(m/\log^2m)]$, there is an algorithm that reports all $T$-maximal common substrings of $S$ and $T$ using $\cO(\tau)$ space and $\cO(nm/\tau \cdot \log\log \sigma)$ time. 
\end{theorem}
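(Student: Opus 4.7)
The plan is to build the Extended IPM (Decision) data structure of \cref{thm:long-ipm} on $S$ using $\cO(\tau)$ extra space by picking $\ell = \Theta(m \log m (\log\log m)^3 / \tau)$. The hypothesis $\tau \geq \sqrt{m} \log m (\log\log m)^3$ guarantees that $\ell$ lies in the admissible range $[20 \dd \cO(m/\log^2 m)]$, so each decision query on a pattern of length greater than $\ell$ runs in $\cO(\log^3 \log m)$ time.

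I will then process $T$ as a stream while maintaining a matching state $(o_i, l_i)$ with $T[i - l_i + 1 \dd i] = S[o_i \dd o_i + l_i - 1]$ and $l_i$ equal to the length of the longest suffix of $T[\dd i]$ occurring in $S$. Upon reading a new letter $T[i+1]$, I will first try to \emph{extend} the current match by checking $S[o_i + l_i] = T[i+1]$ via random access to $S$: on success, the state update is immediate. Otherwise, I will perform a \emph{shrink step}, which amounts to finding the largest $k \leq l_i$ such that $S[o_i + l_i - k + 1 \dd o_i + l_i - 1] \cdot T[i+1]$ occurs in $S$; this is precisely an \textsc{Extended IPM (Decision)} query on $S$. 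For $k \geq m/2$ a single query suffices (meeting the $|T| \leq 2|P|$ constraint), while for $k < m/2$ I will cover $S$ by $\cO(m/k)$ overlapping windows of length $\cO(k)$ and issue one query per window. Whenever $l$ strictly decreases in an update, the match that was previously maintained is a $T$-maximal common substring of $S$ and $T$, and I will output it as the triple $(i - l_i + 1, o_i, l_i)$ in $\cO(1)$ space.

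The extra space is $\cO(\tau)$ overall: the Extended IPM data structure uses $\cO(\tau)$, and the state $(o_i, l_i)$ plus constant-sized auxiliary variables fit in $\cO(\log m)$ bits. The main obstacle will be bounding the aggregate cost of the shrink steps to fit in $\cO(nm/\tau \cdot \log\log\sigma)$ time. The key is a potential argument using $l_i$ itself: the total increase across the stream is at most $n$, so the total decrease is also at most $n$. Using an exponential search on $k$ (starting at $k = l_i$ and halving downward), each shrink costs $\cO(\log l_i)$ queries, and the expensive windowed queries for small $k$ arise only after correspondingly large drops of $l_i$. Combining this with the amortization will yield the claimed running time, with the $\log\log\sigma$ factor emerging from the predecessor lookups in the sparse suffix trees internal to the IPM data structure.
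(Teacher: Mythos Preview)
Your high-level idea for the \emph{long-match} regime --- build the Extended \IPM structure on $S$ and maintain the longest suffix of the stream that occurs in $S$, shrinking via \IPM queries when extension fails --- is precisely the paper's second procedure. But there is a genuine gap: \cref{thm:long-ipm} only answers queries with $|P|>\ell$, and you give no mechanism at all once the current match length drops to~$\le\ell$. The paper fills this hole with a separate sliding-window routine: it covers $T$ by length-$2\tau$ windows overlapping by~$\tau$, builds the (dense) suffix tree of each window in $\cO(\tau)$ space, and traverses it against all of $S$ in $\cO(m\log\log\sigma)$ time. This finds every $T$-maximal common substring of length~$\le\tau$ and, crucially, supplies the seed match of length~$\ge\tau$ from which the \IPM-based extension is launched. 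This is also where the $\log\log\sigma$ factor in the stated bound actually originates --- from edge-label predecessor lookups in those window suffix trees --- not from the internals of the \IPM data structure as you suggest.

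A second, smaller issue: you have misread the interface of \textsc{Extended \IPM (Decision)}. Unlike the basic \IPM of \cref{cor:ipm}, it carries \emph{no} $|T|<2|P|$ constraint; a single $\cO(\log^3\log m)$-time query over all of $S$ suffices for each candidate length, so the paper just does one binary search per letter (total $\cO(n\log m\log^3\log m)$, which is absorbed by $\cO(nm/\tau)$ since $\tau=\cO(m/\log^2 m)$). Your $\cO(m/k)$-window scheme is unnecessary, and the potential argument you sketch would not have controlled its cost anyway: in your scheme a shrink to length $k^\ast$ costs $\Theta(m/k^\ast)$ queries, not $\Theta(l_i-k^\ast)$, so bounding the total drop of $l$ by $n$ says nothing --- think of $l$ oscillating between two small constants, giving $\Theta(n)$ shrinks each of cost $\Theta(m)$.
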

\begin{proof}
  We cover $T$ with windows of length $2\tau$ (except maybe for the last) that overlap by~$\tau$ letters: there are $\cO(n/\tau)$ such windows.
  After reading such a window $W$, we apply the procedure encapsulated in the following claim with $A = W$ and $B = S$:
  \begin{claim}\label{claim:lcs-lcsuff} 
    Let $A,B$ be strings of respective lengths $a$ and $b$, where $a < b < a^{\cO(1)}$, over an integer alphabet of size $\sigma$.
    Given read-only random access to $A$ and $B$, we can compute all $B$-maximal common substrings of $A$ and $B$,
    and the length $\LCSu(A,B)$ of the longest suffix of $A$ that occurs in $B$ in $\cO(b \log\log \sigma)$ time using $\cO(a)$ extra space.
  \end{claim}
  \begin{claimproof}
    We start by building the suffix tree for $A$ and preprocessing it for constant-time weighted ancestor queries: this takes $\cO(a)$ time (see \cref{thm:sst} and \cref{lem:WAQ-st}).  Additionally, we preprocess the labels  of edges outgoing from each node according to \cref{lem:pred}.
    Then, the algorithm traverses the tree maintaining the following invariant: at every moment, it is at a node (maybe implicit) corresponding to a substring $B[i \dd j]$ of $B$. It starts at the root of the tree with $i = 1$ and $j = 0$. In each iteration, the algorithm tries to go down the tree from the current node using $B[j+1]$; this takes $\cO(\log \log \sigma)$ time.
    If it succeeds, it increments $j$ and continues. Otherwise, it considers two cases. If it is at the root, it increments both $i$ and $j$. Otherwise, it jumps to the node corresponding to $B[i+1 \dd j]$ via a weighted ancestor query in $\cO(1)$ time and increments $i$. The nodes reached by an edge traversal and abandoned with the use of a weighted ancestor query in the next iteration are in one-to-one correspondence with the $B$-maximal common substrings of $A$ and $B$. The \LCSu of $A$ and $B$ is the depth of the deepest  visited node that corresponds to a suffix of $A$.
    As at least one of the indices $i,j$ gets incremented at every step of the traversal, the total runtime is $\cO(b\log\log\sigma)$. 
  \end{claimproof}
  The above sliding-window procedure takes $\cO(m\log\log \sigma)$ time per window and uses $\cO(\tau)$ space,
  which adds up to $\cO(n m/\tau\cdot \log\log \sigma)$ time in total, and finds all $T$-maximal common substrings of~$S$ and~$T$ that have length at most $\tau$.

  We run another procedure in parallel in order to compute $T$-maximal common substrings of length at least $\tau$. During preprocessing, we build the \textsc{Extended \IPM (Decision)} data structure (\cref{thm:long-ipm}) for the string $S$ with $\ell = \tau - 2$ in $\cO(m \log_{m/\tau} m) = \cO(nm/\tau)$ time using $\cO((m/\tau) \cdot \log m (\log \log m)^3) = \cO(\tau)$ space. 

Assume that while reading a window $W = T[\ell \dd r]$, the sliding-window procedure found an \LCSu $T[i \dd r]$ of length at least $\tau$. We start a search for a common substring starting in~$W$. Let $j \ge r$ be the current letter of $T$, and $T[i \dd j]$, $\ell \le i \le r$, be the longest suffix of $T[\ell \dd j]$ that occurs in~$S$.
We assume that we know a position where $T[i \dd j]$ occurs in~$S$, which is the case for $j = r$. When $T[j+1]$ arrives, we update $i$ using the following observation:
  
\begin{observation}  
If $T[i \dd j]$ is the longest suffix of $T[1 \dd j]$ that occurs in $S$, and $T[i' \dd j+1]$ is the longest suffix of $T[1 \dd j+1]$ that occurs in $S$, then $i \le i'$. 
\end{observation}
  
By using binary search and \IPM queries, we can find the smallest $i \le i'$ such that $T[i' \dd j+1]$ occurs in $S$ and a witness occurrence, if the corresponding string has length at least $\tau$: namely, if $S[x \dd x']$ is a witness occurrence of $T[i \dd j]$ in $S$, we search for occurrences of $P = S[x + (i'-i) \dd x'] T[j+1]$ in $S$.
If $j-i' < \tau$, we stop the search, and otherwise we set $i' = i$ and continue.
It is evident that all $T$-maximal common substrings of $S$ and $T$ that are of length greater than $\tau$ can be extracted during the execution of the above procedure:
a maintained suffix of length greater than $\tau$ is such a fragment if the last update to it was an increment of its right endpoint, while the next update is an increment of its left endpoint.
For every letter, we run at most one binary search which uses $\cO(\log m)$ \IPM queries and hence takes $\cO(\log m (\log \log m)^3)$ time. 
As $\tau = \cO(m/\log^2 m)$, the $m/\tau$ term dominates the per-letter running time.
  The correctness of the described procedure follows from the fact that any substring of $T$ of length greater than~$\tau$ is either fully contained in the first window or crosses the boundary of some window.
\end{proof}

\begin{corollary}\label{cor:lcs}
Assume to be given random access to an $m$-length string $S$ and streaming access to a $n$-length string $T$, where $n \ge m$. 
  For all $\tau\in[\sqrt{m} \log m (\log \log m)^3 \dd \cO(m/\log^2m)]$, there is an algorithm that computes $\LCS(S,T)$ using $\cO(nm/\tau\cdot \log\log \sigma)$ time and $\cO(\tau)$ space.
\end{corollary}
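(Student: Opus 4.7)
The plan is to derive this corollary as an immediate consequence of \cref{thm:lcs-aux}. The first step is the observation that $\LCS(S,T)$ equals the maximum length of a $T$-maximal common substring of $S$ and $T$. Indeed, any common substring of $S$ and $T$ occurs as some fragment $T[t \dd t']$ that also appears in $S$; extending this fragment leftward or rightward in $T$ as long as it continues to occur in $S$ must terminate and yields a $T$-maximal common substring whose length is at least $t'-t+1$. Hence, a longest common substring is necessarily a $T$-maximal common substring.

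Given this observation, the implementation is straightforward. I would run the algorithm from \cref{thm:lcs-aux} and maintain a single integer counter holding the maximum length over all $T$-maximal common substrings reported so far, updating it in $\cO(1)$ time per reported fragment. This uses an additional $\cO(\log n)$ bits of space, which is dominated by the $\cO(\tau)$ bound, and a constant amount of extra time per report, so the asymptotic complexity is unchanged: $\cO(nm/\tau \cdot \log\log \sigma)$ time and $\cO(\tau)$ space. After the entire stream has been processed, the counter holds $\LCS(S,T)$. There is no significant technical obstacle beyond the simple extension argument used to equate $\LCS(S,T)$ with the longest $T$-maximal common substring.
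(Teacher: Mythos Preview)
Your proposal is correct and follows essentially the same approach as the paper: both observe that the longest common substring is a $T$-maximal common substring, invoke \cref{thm:lcs-aux} to enumerate all such substrings, and track the maximum length within the stated time and space bounds. Your version is slightly more detailed in justifying the extension argument, but the content is the same.
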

\begin{proof}
Note that the longest common substring of $S$ and $T$ is a $T$-maximal substring of $S$ and $T$.
We use the algorithm of \cref{thm:lcs-aux} with the same value of $\tau$ to iterate over all $T$-maximal common substrings $T[t\dd t']$ of $S$ and $T$, and store the pair of indices $t,t'$ that maximizes $t'-t$.
\end{proof}

\begin{corollary}\label{cor:cpm}
Assume to be given random access to an $m$-length pattern $P$ and streaming access to an $n$-length text $T$, where $n \ge m$. 
For all $\tau\in[\sqrt{m}\log m (\log \log m)^3 \dd \cO(m/\log^2m)]$, there is an algorithm that solves the \CPM problem for $P,T$ using $\cO(m/\tau\cdot \log\log \sigma)$ time per letter of~$T$ and $\cO(\tau)$ space.
\end{corollary}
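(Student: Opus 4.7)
}
The plan is to reduce \CPM to the problem solved by \cref{thm:lcs-aux}, exploiting the by-now standard observation (already used implicitly in the introduction) that a string of length $m$ is a rotation of $P$ if and only if it appears as a fragment of $P \cdot P$. In particular, an occurrence of a rotation of $P$ at position $i$ of $T$ corresponds to a common substring $T[i \dd i+m-1]$ of $T$ and $S := P \cdot P$ of length exactly~$m$.

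First, I would instantiate \cref{thm:lcs-aux} with the second string being $T$ and the first string being $S := P \cdot P$, noting that random access to $P$ trivially provides random access to $P \cdot P$ via the map $i \mapsto ((i-1) \bmod m) + 1$, and that $|S| = 2m$, so replacing the symbol $m$ in the hypothesis of \cref{thm:lcs-aux} by $2m$ merely affects the parameter range by constant factors and leaves the stated range for $\tau$ valid. This yields, using $\cO(\tau)$ extra space and $\cO((2m) \cdot n/\tau \cdot \log\log\sigma) = \cO(nm/\tau \cdot \log\log\sigma)$ total time (i.e., $\cO(m/\tau \cdot \log\log\sigma)$ time per streamed letter of $T$), the list of all $T$-maximal common substrings of $S$ and $T$.

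Next, for each reported $T$-maximal common substring $T[t \dd t']$ with $t' - t + 1 \geq m$, I would output the positions $t, t+1, \ldots, t' - m + 1$ as starting positions of occurrences of rotations of $P$ in $T$. Correctness follows from the observation above: every occurrence of a rotation of $P$ at position $i$ of $T$ is witnessed by a common substring $T[i \dd i+m-1]$ of $S$ and $T$, which is then necessarily contained in some $T$-maximal common substring of length at least $m$; conversely, every length-$m$ window inside such a $T$-maximal common substring is a factor of $P \cdot P$ of length exactly $m$, hence a rotation of $P$.

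The only subtlety I foresee is that \cref{thm:lcs-aux} reports maximal common substrings in an online fashion as the stream advances, so the reporting of individual rotation-occurrences must be done incrementally as the right endpoint of the current maximal common substring grows: while the length of the currently maintained $T$-maximal common substring becomes $\geq m$, each advance of its right endpoint releases one new occurrence, which can be emitted immediately without additional space. Since this bookkeeping is $\cO(1)$ per reported occurrence and fits within the already-accounted-for per-letter budget, both the time and space bounds claimed in the statement are met.
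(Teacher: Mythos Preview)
Your approach is essentially the same as the paper's: reduce to \cref{thm:lcs-aux} applied to $S=P\cdot P$ and the streamed text $T$, and extract rotation-occurrences from the $T$-maximal common substrings of length at least~$m$. The correctness argument you give is fine.

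There is one point you gloss over. \cref{thm:lcs-aux} only guarantees a \emph{total} running time of $\cO(nm/\tau\cdot\log\log\sigma)$; in particular, its sliding-window component spends $\cO(m\log\log\sigma)$ time on the letter that completes a window, so the per-letter bound you state is only amortised. The paper's proof handles this explicitly: since no length-$m$ occurrence can end within the last $\tau$ positions of a window, one can delay the output and deamortise via the standard time-slicing technique (cf.~\cite{DBLP:journals/iandc/CliffordEPP11}) to obtain the claimed \emph{worst-case} $\cO(m/\tau\cdot\log\log\sigma)$ time per letter. You should add this step.
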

\begin{proof}
We use the algorithm of \cref{thm:lcs-aux} with threshold $\tau$ on the string $P \cdot P$, to which we have random access, and a streaming string $T$. The occurrence of any rotation of $P$ in~$T$ implies a common substring of $P \cdot P$ and $T$ of length $m \ge 2 \tau$. The algorithm of \cref{thm:lcs-aux} allows us to find such occurrences in $\cO(m/\tau\cdot \log\log \sigma)$ amortized time per letter of $T$ using $\cO(\tau)$ space. By noticing that none of the $m$-length substrings are fully contained in $T(|T|-\tau \dd |T|]$, we can deamortise the algorithm using the standard time-slicing technique, cf~\cite{DBLP:journals/iandc/CliffordEPP11}.
\end{proof}

\section{\CPM in the Read-only Setting}
In this section, we present a deterministic read-only online algorithm for the \CPM problem.

\begin{theorem}\label{thm:readonly}
There is a deterministic read-only online algorithm that solves the \CPM problem on a pattern $P$ of length $m$ and a text $T$ of length $n$ using $O(1)$ space and $O(1)$ time per letter of the text.
\end{theorem}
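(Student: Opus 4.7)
The plan is to reduce circular pattern matching to a carefully orchestrated combination of constant-space linear pattern matching, exploiting the correspondence: a length-$m$ fragment $T[i\dd i+m-1]$ is a rotation of $P$ if and only if it is a length-$m$ substring of $P\cdot P$, equivalently, if and only if its lex smallest (Lyndon) rotation coincides with the Lyndon rotation $P^*$ of $P$.

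\textbf{Preprocessing.} First, I would apply \cref{lem:duval} to compute $P^* = \rot_{k^*}(P)$ together with the offset $k^*$, and \cref{lem:run_per} to decide whether $P$ is periodic and, if so, to compute $\per(P)$. Both steps cost $O(m)$ time and $O(1)$ space. This preprocessing can be front-loaded into the streaming phase without violating the per-letter bound, by standard time-slicing.

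\textbf{Non-periodic case.} If $\per(P) > m/2$, by the Fine--Wilf periodicity lemma any two occurrences of $P^*$ in $T$ are at distance at least $\per(P) > m/2$; hence $P^*$ has at most $O(n/m)$ occurrences in $T$. I would run the Breslauer-Grossi-Manzini online constant-space algorithm (\cref{fact:pm-ro}) searching for $P^*$ inside $T$. The key observation is that every rotation $\rot_d(P^*)$ occurring at position $i$ in $T$ either coincides with an occurrence of $P^*$ shifted by $(k^*-d)\bmod m$ inside $T$, or lies near the right boundary; in the latter case it is handled by pattern matching on the remaining suffix. Each of the $O(n/m)$ occurrences triggers $O(m)$ work to check the $O(m)$ candidate shifts inside the corresponding window, giving total $O(n)$ time and amortized $O(1)$ per letter.

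\textbf{Periodic case.} If $\per(P) = p \le m/2$, write $P = Q^k Q'$ with $Q = P[1\dd p]$ and $Q' = P[1\dd m-kp]$. Here $P^*$ can have many close occurrences in $T$, so the previous analysis breaks. Instead, I would detect maximal fragments of $T$ of period $p$ and length $\ge m$ (these are the only places where rotations of $P$ can occur) using a constant-space online periodicity detector obtained from \cref{lem:run_per} applied in a sliding fashion. Within each such maximal fragment, the set of circular occurrences of $P$ forms an arithmetic progression of difference $p$ that is computable in $O(1)$ time from the endpoints of the fragment and the alignment of $Q$ in $P$ versus $Q$ in $T$, so all occurrences are reported as an implicit progression. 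Since every letter belongs to $O(1)$ such maximal fragments, the amortized cost is $O(1)$ per letter.

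\textbf{Main obstacle.} The principal technical difficulty is ensuring that the non-periodic and periodic sub-routines are stitched together without either (i) missing circular occurrences near the boundary where a run of period $p$ in $T$ starts or ends, or (ii) duplicating output, all while keeping the extra space $O(1)$. Treating the boundary positions separately by direct comparison of at most $O(m)$ characters per boundary, and charging this work to the boundary letters themselves (which are $O(n/m)$ in total since maximal $p$-periodic fragments of length $\ge m$ are $\Omega(m)$-separated along $T$), resolves this and yields the claimed $O(n)$ running time with $O(1)$ worst-case space and $O(1)$ amortized time per letter of the text.
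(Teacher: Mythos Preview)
Your proposal has a genuine gap in each of its two main cases.

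In the non-periodic case you search for the Lyndon rotation $P^*$ in $T$ and then inspect windows around each hit. But an occurrence of a rotation of $P$ in $T$ does \emph{not} imply any occurrence of $P^*$ in $T$. Take $P=\texttt{ba}$ (so $P^*=\texttt{ab}$) and $T=\texttt{cbac}$: the rotation $\texttt{ba}$ occurs at position~$2$, yet $P^*$ never appears in $T$, so your algorithm reports nothing. Your ``key observation'' that $\rot_d(P^*)$ at position $i$ ``coincides with an occurrence of $P^*$ shifted by $(k^*-d)\bmod m$ inside $T$'' is false; that shift lives inside the cyclic string $T[i\dd i{+}m{-}1]\cdot T[i\dd i{+}m{-}1]$, not in $T$ itself.

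In the periodic case you assert that rotations of $P$ can only occur inside maximal $p$-periodic fragments of $T$ of length at least $m$. This fails whenever $p\nmid m$. For $P=\texttt{aabaabaa}$ ($m=8$, $p=3$) the rotation by one is $\texttt{abaabaaa}$, whose smallest period is not $3$ (compare positions $5$ and $8$). If $T=\texttt{abaabaaa}$, this rotation occurs, yet $T$ contains no length-$8$ fragment of period $3$, so your detector misses it. More generally, the $\Theta(p)$ rotations that straddle the seam between the $Q'$-part and the $Q^k$-part lose the period~$p$. (Also, \cref{lem:run_per} is an offline $O(n)$-time test on a fixed string; invoking it ``in a sliding fashion'' is not a constant-time-per-letter primitive without further work.)

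The paper avoids both pitfalls by anchoring on \emph{short} fragments rather than on a single full-length rotation. It splits $P$ into four quarters $P_1,\dots,P_4$, runs four parallel instances of \cref{fact:pm-ro} to locate all occurrences of each $P_i$ in $T$, and observes that every rotation of $P$ contains some $P_i$ within its first $\approx m/2$ positions. Thus every circular occurrence is anchored by at least one $P_i$-hit in $T$ that can be matched against the $O(1)$ arithmetic progressions of $P_i$-hits in $P^2$. The case analysis is then on $\per(P_i)$, not on $\per(P)$: when $\per(P_i)>|P_i|/4$ there are $O(1)$ anchors and one performs naive left/right extensions; when $\per(P_i)\le |P_i|/4$ the anchors form $O(1)$ arithmetic progressions and only their first/last elements need individual handling. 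Your high-level split on $\per(P)$ loses exactly this anchoring property.
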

\begin{proof}
In this proof, we assume that $n \le 2m-1$. If this is not the case, we can cover~$T$ with $2m$-length windows overlapping by $m-1$ letters, and process the text window by window; the last window might be shorter. Every occurrence of a rotation of $P$ belongs to exactly one of the windows and hence will be reported exactly once.

We partition $P$ into four fragments $P_1, P_2, P_3, P_4$, each of length either $\lfloor m/4 \rfloor$ or~$\lceil m/4 \rceil$.\footnote{The sole reason for partitioning $P$ into four fragments instead of two is to guarantee that there is an occurrence of some $P_i$ close the the starting position of each rotation of $P$. This allows us to obtain a worst-case rather than an amortised time bound for processing each letter of the text.}
By applying \cref{lem:run_per}, we compute the periods of each of $P$ and $P_i$ for $i\in [1\dd 4]$, if it is are periodic.
We also compute, for each $i\in [1\dd 4]$, the occurrences of $P_i$ in $P^2$ using \cref{fact:pm-ro}, and store them in $\cO(1)$ space due to \cref{cor:progression}.
Overall, the preprocessing step takes $\cO(m)$ time and uses constant space.

We compute all occurrences of all $P_i$ in $T$ in an online manner using \cref{fact:pm-ro}.
Due to \cref{cor:progression}, we can represent all computed occurrences of each $P_i$ using a constant number of arithmetic progressions with difference $\per(P_i)$ in $\cO(1)$ space.
\begin{observation}\label{obs:duality}
Assume that $T(j-m\dd j] = P[\Delta+1\dd m] \circ P[\dd \Delta]$.
There is an occurrence of $P_i$ at a position~$\ell$ of $T$ such that $j-m < \ell \le j-|P_i|+1$ if and only if there is an occurrence of $P_i$ at position $p=\Delta+\ell-j+m$ of $P^2$.
\end{observation}
Now, note that for every rotation $P'$ of $P$, some $P_i$ occurs at one of the first $\phi := 2\lceil m/4 \rceil$ positions of $P'$.
We will use such occurrences as anchors to compute the occurrences of rotations of $P$ in $T$.
Fix $i$ such that there is an occurrence of $P_i$ in the first $\phi$ positions of $T(j-m\dd j]$.
We consider two cases depending on whether the period of $P_i$ is large or small. 

\textbf{Case I:} $\per(P_i) > |P_i|/4$. By \cref{cor:progression}, there are $O(1)$ occurrences of $P_i$ in each of $T$ and~$P^2$.
Suppose that $P_i$ occurs at position~$\ell$ of~$T$.
If $T(j-m\dd j] = P[\Delta+1\dd m] \circ P[\dd \Delta]$ for some $\Delta$, then, by~\cref{obs:duality}, $P_i$ occurs at position $p=\Delta+\ell-j+m$ of $P^2$ and we must have that the length of the longest common suffix of $T[1 \dd \ell)$ and $P^2[1 \dd p)$ is at least $\ell-(j-m)$ and the length of the longest common prefix of $T[\ell+|P_i| \dd]$ and $P^2[p+|P_i|\dd]$ is at least $j-\ell-|P_i|$.
As we only need to consider occurrences of $P_i$ in the first $\phi$ positions of rotations of $P$, we can work under the assumption that $\ell-(j-m) \leq \phi$.
Hence, it suffices to compute, for every occurrence of $P_i$ at a position $p$ in $P^2$ and every occurrence of $P_i$ at a position $\ell$ in $T$, values
\begin{itemize}
\item $x := \max\{\phi, \LCE^R(T[1 \dd \ell),P^2[1 \dd p))\}$, the maximum of $\phi$ and the length of the longest common suffix of $T[1 \dd \ell)$ and $P^2[1 \dd p)$;
\item $y := \LCE^R(T[1 \dd \ell),P^2[1 \dd p))$, the length of the longest common prefix of $T[\ell+|P_i| \dd]$ and $P^2[p+|P_i|\dd]$.
\end{itemize} 
The length $y$ is computed naively as new letters arrive, while, in order to compute $x$, we perform a constant number of letter comparisons for each letter that arrives.
Since $\ell-(j-m)=\cO(j-\ell-|P_i|)$, we will have completed the extension to the left when the $j$-th letter of the text arrives.
As there is a constant number of pairs $(p,\ell)$ to be considered, we perform a total number of $O(1)$ letter comparisons per letter of the text.
 
\textbf{Case II:} $\per(P_i) \le |P_i|/4$. For brevity, denote $\rho = \per(P_i)$.
Below, when we talk about arithmetic progressions of occurrences of $P_i$, we mean maximal arithmetic progressions of starting positions of occurrences of $P_i$ with difference $\rho$.
Consider the first element~$\ell$ and the last element $r$ of the rightmost computed arithmetic progression of occurrences of $P_i$ in $T(j-m\dd j]$.
We next distinguish between two cases depending on whether $\per(T(j-m \dd j])=\rho$.
This information can be easily maintained in $\cO(1)$ time per letter using $\cO(1)$ space as follows.
In particular, for each arithmetic progression of occurrences of~$P_i$ in $T$, we perform at most $\rho-1$ letter comparisons to extend the periodicity to the left; we can do this lazily upon computing the first element of each progression, by performing at most one letter comparison for each of the next $\rho-1$ letter arrivals.
Further, as at most one arithmetic progression corresponds to occurrences of~$P_i$ in $T$ that contain a position in~$(j-\rho \dd j]$, the extensions to the right take $\cO(1)$ time per letter as well.

\textbf{Subcase (a): $\per(T(j-m \dd j]) \neq \rho$.}
Suppose that $T(j-m\dd j] = P[\Delta+1\dd m] \circ P[\dd \Delta]$ for some $\Delta$.
Then, due to \cref{obs:duality}, one of the two following holds:
\begin{enumerate}
\item $\ell$ and $p_\ell =\Delta+\ell-j+m$ are the first elements in arithmetic progressions of occurrences of $P_i$ in $T(j-m \dd j]$ and $P^2$, respectively;
\item $r$ and $p_r =\Delta+r-j+m$ are the last elements in arithmetic progressions of occurrences of $P_i$ in $T(j-m \dd j]$ and $P^2$, respectively.
\end{enumerate}
We handle this case by considering a subset of pairs of occurrences of $P_i$ and treating them similarly to Case I. Namely, we consider (a) pairs that are first in their respective arithmetic progressions in $P^2$ and $T$ and (b) pairs that are last in their respective arithmetic progressions in $P^2$ and $T(j-m \dd j]$.
By \cref{cor:progression}, there are only a constant number of such elements in~$P^2$ and a constant number of such elements in the text at any time (a previously last element in the text may stop being last when a new occurrence of $P_i$ is detected).
For pairs of first elements there are no changes required to the algorithm for Case I. We next argue that, for each pair $(r,p_r)$ of last elements, it suffices to perform only $\cO(\rho)$ letter comparisons to check how far the periodicity extends to the left, and that this is all we need to check.
Due to this, we do not restrict our attention to the case when $r \in (j-m \dd j-m+\phi]$, but rather consider all last elements of arithmetic progressions.
Let $\ell'$ be the first element of the arithmetic progression in $T(j-m \dd m]$ that contains $r$.
If $\ell' > \rho+j-m$, we avoid extending to the left since either $\ell' \in (j-m \dd j-m+\phi]$ and the sought occurrence of a rotation of $P$, if it exists, will be computed by the algorithm when it processes pair $(\ell',\Delta+\ell'-j+m)$ or the sought occurrence will be computed when processing a different arithmetic progression of occurrences of $P_i$ or a different $P_j$.
Further note that the extension to the left has been already computed; either $\ell'$ is not the first element in the arithmetic progression of occurrences of $P_i$ in $T$ (we have assumed that it is in $T(j-m \dd j]$), in which case we are trivially done, or $\ell'$ is the first element of an arithmetic progression in $T$ and hence we extended the periodicity via a lazy computation when the occurrence of $P_i$ at position $\ell'$ was detected. As the occurrences of $P_i$ in $T$ are spaced at least $\rho$ positions away, the above procedure takes $\cO(1)$ time per letter of the text.

\textbf{Subcase (b): $\per(T(j-m \dd j]) = \rho$.}
Using $\cO(m)$ time and $\cO(1)$ extra space, we can precompute all $1 \le j \le \rho$ such that $Q_i^\infty[j \dd j+m)$ occurs in $P^2$, where $Q_i = P_i[1 \dd \rho]$; it suffices to extend the periodicity for each of the $\cO(1)$ arithmetic progressions of occurrences of $P_i$ in $P^2$ and to perform standard arithmetic.
In particular, the output consists of a constant number of intervals.
Then, if $\per(T(j-m \dd j]) = \rho$, $T(j-m \dd j]$ equals a rotation of $P$ if and only if $\ell - (j-m) \pmod \rho$ is in one of the computed intervals and this can be checked in constant time.
\end{proof}

\bibliographystyle{plainurl}
\bibliography{main.bib}

\begin{thebibliography}{10}

\bibitem{a13110276}
Paniz Abedin, Arnab Ganguly, Solon~P. Pissis, and Sharma~V. Thankachan.
\newblock Efficient data structures for range shortest unique substring
  queries.
\newblock {\em Algorithms}, 13(11), 2020.
\newblock \href {https://doi.org/10.3390/a13110276}
  {\path{doi:10.3390/a13110276}}.

\bibitem{AMIR20182}
Amihood Amir, Mika Amit, Gad~M. Landau, and Dina Sokol.
\newblock Period recovery of strings over the {H}amming and edit distances.
\newblock {\em Theoretical Computer Science}, 710:2--18, 2018.
\newblock Advances in Algorithms \& Combinatorics on Strings (Honoring 60th
  birthday for Prof. Costas S. Iliopoulos).
\newblock \href {https://doi.org/https://doi.org/10.1016/j.tcs.2017.10.026}
  {\path{doi:https://doi.org/10.1016/j.tcs.2017.10.026}}.

\bibitem{amir_et_al:LIPIcs.ESA.2019.5}
Amihood Amir, Itai Boneh, Panagiotis Charalampopoulos, and Eitan Kondratovsky.
\newblock Repetition detection in a dynamic string.
\newblock In {\em Proc. of ESA}, pages 5:1--5:18, 2019.
\newblock \href {https://doi.org/10.4230/LIPIcs.ESA.2019.5}
  {\path{doi:10.4230/LIPIcs.ESA.2019.5}}.

\bibitem{DBLP:journals/algorithmica/AmirBKLS22}
Amihood Amir, Ayelet Butman, Eitan Kondratovsky, Avivit Levy, and Dina Sokol.
\newblock Multidimensional period recovery.
\newblock {\em Algorithmica}, 84(6):1490--1510, 2022.
\newblock \href {https://doi.org/10.1007/S00453-022-00926-Y}
  {\path{doi:10.1007/S00453-022-00926-Y}}.

\bibitem{DBLP:journals/algorithmica/AmirCPR20}
Amihood Amir, Panagiotis Charalampopoulos, Solon~P. Pissis, and Jakub
  Radoszewski.
\newblock Dynamic and internal longest common substring.
\newblock {\em Algorithmica}, 82(12):3707--3743, 2020.
\newblock \href {https://doi.org/10.1007/S00453-020-00744-0}
  {\path{doi:10.1007/S00453-020-00744-0}}.

\bibitem{DBLP:journals/talg/AmirLLS07}
Amihood Amir, Gad~M. Landau, Moshe Lewenstein, and Dina Sokol.
\newblock Dynamic text and static pattern matching.
\newblock {\em ACM Trans. Algor.}, 3(2):19, 2007.
\newblock \href {https://doi.org/10.1145/1240233.1240242}
  {\path{doi:10.1145/1240233.1240242}}.

\bibitem{5671209}
Alexandr Andoni, Robert Krauthgamer, and Krzysztof Onak.
\newblock Polylogarithmic approximation for edit distance and the asymmetric
  query complexity.
\newblock In {\em Proc. of FOCS}, pages 377--386, 2010.
\newblock \href {https://doi.org/10.1109/FOCS.2010.43}
  {\path{doi:10.1109/FOCS.2010.43}}.

\bibitem{AYAD201781}
Lorraine~A.K. Ayad, Carl Barton, and Solon~P. Pissis.
\newblock A faster and more accurate heuristic for cyclic edit distance
  computation.
\newblock {\em Pattern Recognition Letters}, 88:81--87, 2017.
\newblock \href {https://doi.org/https://doi.org/10.1016/j.patrec.2017.01.018}
  {\path{doi:https://doi.org/10.1016/j.patrec.2017.01.018}}.

\bibitem{doi:10.1137/1.9781611973730.39}
Maxim Babenko, Pawel Gawrychowski, Tomasz Kociumaka, and Tatiana Starikovskaya.
\newblock Wavelet trees meet suffix trees.
\newblock In {\em Proc. of SODA}, pages 572--591, 2015.
\newblock \href {https://doi.org/10.1137/1.9781611973730.39}
  {\path{doi:10.1137/1.9781611973730.39}}.

\bibitem{BADKOBEH2022271}
Golnaz Badkobeh, Panagiotis Charalampopoulos, Dmitry Kosolobov, and Solon~P.
  Pissis.
\newblock Internal shortest absent word queries in constant time and linear
  space.
\newblock {\em Theoretical Computer Science}, 922:271--282, 2022.
\newblock \href {https://doi.org/https://doi.org/10.1016/j.tcs.2022.04.029}
  {\path{doi:https://doi.org/10.1016/j.tcs.2022.04.029}}.

\bibitem{DBLP:journals/corr/abs-2402-07732}
Gabriel Bathie, Panagiotis Charalampopoulos, and Tatiana Starikovskaya.
\newblock Pattern matching with mismatches and wildcards.
\newblock {\em CoRR}, abs/2402.07732, 2024.
\newblock \href {https://doi.org/10.48550/ARXIV.2402.07732}
  {\path{doi:10.48550/ARXIV.2402.07732}}.

\bibitem{DBLP:conf/isaac/BathieKS23}
Gabriel Bathie, Tomasz Kociumaka, and Tatiana Starikovskaya.
\newblock Small-space algorithms for the online language distance problem for
  palindromes and squares.
\newblock In {\em Proc. of ISAAC}, pages 10:1--10:17, 2023.
\newblock \href {https://doi.org/10.4230/LIPICS.ISAAC.2023.10}
  {\path{doi:10.4230/LIPICS.ISAAC.2023.10}}.

\bibitem{belazzougui_et_al:LIPIcs.CPM.2021.8}
Djamal Belazzougui, Dmitry Kosolobov, Simon~J. Puglisi, and Rajeev Raman.
\newblock Weighted ancestors in suffix trees revisited.
\newblock In {\em Proc. of CPM}, pages 8:1--8:15, 2021.
\newblock \href {https://doi.org/10.4230/LIPIcs.CPM.2021.8}
  {\path{doi:10.4230/LIPIcs.CPM.2021.8}}.

\bibitem{DBLP:conf/cpm/Nun0KK20}
Stav Ben{-}Nun, Shay Golan, Tomasz Kociumaka, and Matan Kraus.
\newblock Time-space tradeoffs for finding a long common substring.
\newblock In {\em Proc. of {CPM}}, pages 5:1--5:14, 2020.
\newblock \href {https://doi.org/10.4230/LIPICS.CPM.2020.5}
  {\path{doi:10.4230/LIPICS.CPM.2020.5}}.

\bibitem{bille2014time}
Philip Bille, Inge~Li G{\o}rtz, Benjamin Sach, and Hjalte~Wedel Vildh{\o}j.
\newblock Time--space trade-offs for longest common extensions.
\newblock {\em Journal of Discrete Algorithms}, 25:42--50, 2014.
\newblock \href {https://doi.org/10.1016/J.JDA.2013.06.003}
  {\path{doi:10.1016/J.JDA.2013.06.003}}.

\bibitem{DBLP:conf/soda/BirenzwigeGP20}
Or~Birenzwige, Shay Golan, and Ely Porat.
\newblock Locally consistent parsing for text indexing in small space.
\newblock In {\em Proc. of {SODA}}, pages 607--626, 2020.
\newblock \href {https://doi.org/10.1137/1.9781611975994.37}
  {\path{doi:10.1137/1.9781611975994.37}}.

\bibitem{DBLP:journals/tcs/BreslauerGM13}
Dany Breslauer, Roberto Grossi, and Filippo Mignosi.
\newblock Simple real-time constant-space string matching.
\newblock {\em Theoretical Computer Science}, 483:2--9, 2013.
\newblock \href {https://doi.org/10.1016/J.TCS.2012.11.040}
  {\path{doi:10.1016/J.TCS.2012.11.040}}.

\bibitem{DBLP:journals/siamcomp/ChakrabartiCKM13}
Amit Chakrabarti, Graham Cormode, Ranganath Kondapally, and Andrew McGregor.
\newblock Information cost tradeoffs for augmented index and streaming language
  recognition.
\newblock {\em {SIAM} J. Comput.}, 42(1):61--83, 2013.
\newblock \href {https://doi.org/10.1137/100816481}
  {\path{doi:10.1137/100816481}}.

\bibitem{DBLP:conf/compgeom/ChanLP11}
Timothy~M. Chan, Kasper~Green Larsen, and Mihai P{u{a}}tra{c{s}}cu.
\newblock Orthogonal range searching on the {RAM}, revisited.
\newblock In {\em Proc. of {SoCG}}, pages 1--10, 2011.
\newblock \href {https://doi.org/10.1145/1998196.1998198}
  {\path{doi:10.1145/1998196.1998198}}.

\bibitem{DBLP:conf/cpm/Charalampopoulos20}
Panagiotis Charalampopoulos, Tomasz Kociumaka, Manal Mohamed, Jakub
  Radoszewski, Wojciech Rytter, Juliusz Straszynski, Tomasz Walen, and Wiktor
  Zuba.
\newblock Counting distinct patterns in internal dictionary matching.
\newblock In {\em Proc. of CPM}, pages 8:1--8:15, 2020.
\newblock \href {https://doi.org/10.4230/LIPICS.CPM.2020.8}
  {\path{doi:10.4230/LIPICS.CPM.2020.8}}.

\bibitem{DBLP:journals/algorithmica/Charalampopoulos21}
Panagiotis Charalampopoulos, Tomasz Kociumaka, Manal Mohamed, Jakub
  Radoszewski, Wojciech Rytter, and Tomasz Walen.
\newblock Internal dictionary matching.
\newblock {\em Algorithmica}, 83(7):2142--2169, 2021.
\newblock \href {https://doi.org/10.1007/S00453-021-00821-Y}
  {\path{doi:10.1007/S00453-021-00821-Y}}.

\bibitem{DBLP:conf/esa/Charalampopoulos21}
Panagiotis Charalampopoulos, Tomasz Kociumaka, Solon~P. Pissis, and Jakub
  Radoszewski.
\newblock Faster algorithms for longest common substring.
\newblock In {\em Proc. of {ESA}}, pages 30:1--30:17, 2021.
\newblock Full version: https://arxiv.org/abs/2105.03106.
\newblock \href {https://doi.org/10.4230/LIPICS.ESA.2021.30}
  {\path{doi:10.4230/LIPICS.ESA.2021.30}}.

\bibitem{DBLP:journals/jcss/Charalampopoulos21}
Panagiotis Charalampopoulos, Tomasz Kociumaka, Solon~P. Pissis, Jakub
  Radoszewski, Wojciech Rytter, Juliusz Straszynski, Tomasz Walen, and Wiktor
  Zuba.
\newblock Circular pattern matching with \emph{k} mismatches.
\newblock {\em J. Comput. Syst. Sci.}, 115:73--85, 2021.
\newblock \href {https://doi.org/10.1016/J.JCSS.2020.07.003}
  {\path{doi:10.1016/J.JCSS.2020.07.003}}.

\bibitem{DBLP:conf/esa/Charalampopoulos22}
Panagiotis Charalampopoulos, Tomasz Kociumaka, Jakub Radoszewski, Solon~P.
  Pissis, Wojciech Rytter, Tomasz Walen, and Wiktor Zuba.
\newblock Approximate circular pattern matching.
\newblock In {\em Proc. of {ESA}}, pages 35:1--35:19, 2022.
\newblock \href {https://doi.org/10.4230/LIPICS.ESA.2022.35}
  {\path{doi:10.4230/LIPICS.ESA.2022.35}}.

\bibitem{DBLP:conf/spire/Charalampopoulos20}
Panagiotis Charalampopoulos, Tomasz Kociumaka, Jakub Radoszewski, Wojciech
  Rytter, Tomasz Walen, and Wiktor Zuba.
\newblock Efficient enumeration of distinct factors using package
  representations.
\newblock In {\em Proc. of SPIRE}, volume 12303, pages 247--261. Springer,
  2020.
\newblock \href {https://doi.org/10.1007/978-3-030-59212-7\_18}
  {\path{doi:10.1007/978-3-030-59212-7\_18}}.

\bibitem{9317938}
Panagiotis Charalampopoulos, Tomasz Kociumaka, and Philip Wellnitz.
\newblock Faster approximate pattern matching: A unified approach.
\newblock In {\em Proc. of FOCS}, pages 978--989, 2020.
\newblock \href {https://doi.org/10.1109/FOCS46700.2020.00095}
  {\path{doi:10.1109/FOCS46700.2020.00095}}.

\bibitem{9996673}
Panagiotis Charalampopoulos, Tomasz Kociumaka, and Philip Wellnitz.
\newblock Faster pattern matching under edit distance: A reduction to dynamic
  puzzle matching and the seaweed monoid of permutation matrices.
\newblock In {\em Proc. of FOCS}, pages 698--707, 2022.
\newblock \href {https://doi.org/10.1109/FOCS54457.2022.00072}
  {\path{doi:10.1109/FOCS54457.2022.00072}}.

\bibitem{charalampopoulos2024approximate}
Panagiotis Charalampopoulos, Solon~P. Pissis, Jakub Radoszewski, Wojciech
  Rytter, Tomasz Wale\'{n}, and Wiktor Zuba.
\newblock Approximate circular pattern matching under edit distance.
\newblock In {\em Proc. of STACS}, pages 24:1--24:22, 2024.
\newblock \href {https://doi.org/10.4230/LIPIcs.STACS.2024.24}
  {\path{doi:10.4230/LIPIcs.STACS.2024.24}}.

\bibitem{10.1093/comjnl/bxt023}
Kuei-Hao Chen, Guan-Shieng Huang, and Richard Chia-Tung Lee.
\newblock {Bit-Parallel Algorithms for Exact Circular String Matching}.
\newblock {\em The Computer Journal}, 57(5):731--743, 03 2013.
\newblock \href {https://doi.org/10.1093/comjnl/bxt023}
  {\path{doi:10.1093/comjnl/bxt023}}.

\bibitem{DBLP:journals/iandc/CliffordEPP11}
Rapha{\"{e}}l Clifford, Klim Efremenko, Benny Porat, and Ely Porat.
\newblock A black box for online approximate pattern matching.
\newblock {\em Inf. Comput.}, 209(4):731--736, 2011.
\newblock \href {https://doi.org/10.1016/J.IC.2010.12.007}
  {\path{doi:10.1016/J.IC.2010.12.007}}.

\bibitem{DBLP:conf/spire/CrochemoreIRRSW20}
Maxime Crochemore, Costas~S. Iliopoulos, Jakub Radoszewski, Wojciech Rytter,
  Juliusz Straszynski, Tomasz Walen, and Wiktor Zuba.
\newblock Internal quasiperiod queries.
\newblock In {\em Proc. of SPIRE}, pages 60--75, 2020.
\newblock \href {https://doi.org/10.1007/978-3-030-59212-7\_5}
  {\path{doi:10.1007/978-3-030-59212-7\_5}}.

\bibitem{DBLP:journals/corr/abs-2312-11873}
Jiangqi Dai, Qingyu Shi, and Tingqiang Xu.
\newblock Faster algorithms for internal dictionary queries.
\newblock {\em CoRR}, abs/2312.11873, 2023.
\newblock \href {https://doi.org/10.48550/ARXIV.2312.11873}
  {\path{doi:10.48550/ARXIV.2312.11873}}.

\bibitem{das_et_al:LIPIcs.ICALP.2022.49}
Debarati Das, Tomasz Kociumaka, and Barna Saha.
\newblock Improved approximation algorithms for {D}yck edit distance and {RNA}
  folding.
\newblock In {\em Proc. of ICALP}, pages 49:1--49:20, 2022.
\newblock \href {https://doi.org/10.4230/LIPIcs.ICALP.2022.49}
  {\path{doi:10.4230/LIPIcs.ICALP.2022.49}}.

\bibitem{10.1007/978-3-031-20643-6_16}
Rathish Das, Meng He, Eitan Kondratovsky, J.~Ian Munro, and Kaiyu Wu.
\newblock Internal masked prefix sums and its connection to fully internal
  measurement queries.
\newblock In {\em Proc. of SPIRE}, pages 217--232, 2022.
\newblock \href {https://doi.org/10.1007/978-3-031-20643-6\_16}
  {\path{doi:10.1007/978-3-031-20643-6\_16}}.

\bibitem{DBLP:journals/jal/Duval83}
Jean{-}Pierre Duval.
\newblock Factorizing words over an ordered alphabet.
\newblock {\em J. Algorithms}, 4(4):363--381, 1983.
\newblock \href {https://doi.org/10.1016/0196-6774(83)90017-2}
  {\path{doi:10.1016/0196-6774(83)90017-2}}.

\bibitem{10.1145/3369935}
Henning Fernau, Florin Manea, Robert Merca\c{s}, and Markus~L. Schmid.
\newblock Pattern matching with variables: Efficient algorithms and complexity
  results.
\newblock {\em ACM Trans. Comput. Theory}, 12(1), feb 2020.
\newblock \href {https://doi.org/10.1145/3369935} {\path{doi:10.1145/3369935}}.

\bibitem{fine1965uniqueness}
Nathan~J. Fine and Herbert~S. Wilf.
\newblock Uniqueness theorems for periodic functions.
\newblock {\em Proceedings of the American Mathematical Society}, 16:109--114,
  1965.

\bibitem{DBLP:conf/esa/0001GGK15}
Johannes Fischer, Travis Gagie, Pawel Gawrychowski, and Tomasz Kociumaka.
\newblock Approximating {LZ77} via small-space multiple-pattern matching.
\newblock In {\em Proc. of ESA}, volume 9294, pages 533--544. Springer, 2015.
\newblock \href {https://doi.org/10.1007/978-3-662-48350-3\_45}
  {\path{doi:10.1007/978-3-662-48350-3\_45}}.

\bibitem{DBLP:conf/cpm/0001G15}
Johannes Fischer and Pawel Gawrychowski.
\newblock Alphabet-dependent string searching with wexponential search trees.
\newblock In {\em Proc. of CPM}, pages 160--171, 2015.
\newblock \href {https://doi.org/10.1007/978-3-319-19929-0\_14}
  {\path{doi:10.1007/978-3-319-19929-0\_14}}.

\bibitem{FREDRIKSSON2009579}
Kimmo Fredriksson and Szymon Grabowski.
\newblock Average-optimal string matching.
\newblock {\em Journal of Discrete Algorithms}, 7(4):579--594, 2009.
\newblock \href {https://doi.org/https://doi.org/10.1016/j.jda.2008.09.001}
  {\path{doi:https://doi.org/10.1016/j.jda.2008.09.001}}.

\bibitem{DBLP:journals/mst/GawrychowskiIIK18}
Pawel Gawrychowski, Tomohiro I, Shunsuke Inenaga, Dominik K{\"{o}}ppl, and
  Florin Manea.
\newblock Tighter bounds and optimal algorithms for all maximal
  {\(\alpha\)}-gapped repeats and palindromes - finding all maximal
  {\(\alpha\)}-gapped repeats and palindromes in optimal worst case time on
  integer alphabets.
\newblock {\em Theory Comput. Syst.}, 62(1):162--191, 2018.
\newblock \href {https://doi.org/10.1007/S00224-017-9794-5}
  {\path{doi:10.1007/S00224-017-9794-5}}.

\bibitem{10.1007/978-3-642-38905-4_15}
Wing-Kai Hon, Tsung-Han Ku, Rahul Shah, and Sharma~V. Thankachan.
\newblock Space-efficient construction algorithm for the circular suffix tree.
\newblock In {\em Proc. of CPM}, pages 142--152, 2013.
\newblock \href {https://doi.org/10.1007/978-3-642-38905-4\_15}
  {\path{doi:10.1007/978-3-642-38905-4\_15}}.

\bibitem{10.1007/978-3-642-25591-5_69}
Wing-Kai Hon, Chen-Hua Lu, Rahul Shah, and Sharma~V. Thankachan.
\newblock Succinct indexes for circular patterns.
\newblock In {\em Proc. of ISAAC}, pages 673--682, 2011.
\newblock \href {https://doi.org/10.1007/978-3-642-25591-5\_69}
  {\path{doi:10.1007/978-3-642-25591-5\_69}}.

\bibitem{DBLP:conf/cpm/IliopoulosKRRWZ23}
Costas~S. Iliopoulos, Tomasz Kociumaka, Jakub Radoszewski, Wojciech Rytter,
  Tomasz Walen, and Wiktor Zuba.
\newblock Linear-time computation of cyclic roots and cyclic covers of a
  string.
\newblock In {\em Proc. of CPM}, pages 15:1--15:15, 2023.
\newblock \href {https://doi.org/10.4230/LIPICS.CPM.2023.15}
  {\path{doi:10.4230/LIPICS.CPM.2023.15}}.

\bibitem{Iliopoulos2017}
Costas~S. Iliopoulos, Solon~P. Pissis, and M.~Sohel Rahman.
\newblock Searching and indexing circular patterns.
\newblock In {\em Algorithms for Next-Generation Sequencing Data: Techniques,
  Approaches, and Applications}, pages 77--90. Springer, 2017.
\newblock \href {https://doi.org/10.1007/978-3-319-59826-0_3}
  {\path{doi:10.1007/978-3-319-59826-0_3}}.

\bibitem{DBLP:conf/cocoon/KarpinskiN09}
Marek Karpinski and Yakov Nekrich.
\newblock Space efficient multi-dimensional range reporting.
\newblock In {\em Proc. of COCOON}, volume 5609, pages 215--224. Springer,
  2009.
\newblock \href {https://doi.org/10.1007/978-3-642-02882-3\_22}
  {\path{doi:10.1007/978-3-642-02882-3\_22}}.

\bibitem{DBLP:journals/tcs/KellerKFL14}
Orgad Keller, Tsvi Kopelowitz, Shir~Landau Feibish, and Moshe Lewenstein.
\newblock Generalized substring compression.
\newblock {\em Theor. Comput. Sci.}, 525:42--54, 2014.
\newblock \href {https://doi.org/10.1016/J.TCS.2013.10.010}
  {\path{doi:10.1016/J.TCS.2013.10.010}}.

\bibitem{DBLP:conf/focs/KempaK20}
Dominik Kempa and Tomasz Kociumaka.
\newblock Resolution of the {Burrows-Wheeler} transform conjecture.
\newblock In {\em Proc. of {FOCS}}, pages 1002--1013, 2020.
\newblock \href {https://doi.org/10.1109/FOCS46700.2020.00097}
  {\path{doi:10.1109/FOCS46700.2020.00097}}.

\bibitem{DBLP:conf/stoc/KempaK22}
Dominik Kempa and Tomasz Kociumaka.
\newblock Dynamic suffix array with polylogarithmic queries and updates.
\newblock In {\em Proc. of {STOC}}, pages 1657--1670, 2022.
\newblock Full version at \url{http://arxiv.org/abs/1910.10631}.
\newblock \href {https://doi.org/10.1145/3519935.3520061}
  {\path{doi:10.1145/3519935.3520061}}.

\bibitem{koc18}
Tomasz Kociumaka.
\newblock {\em Efficient data structures for internal queries in texts}.
\newblock {PhD thesis}, University of Warsaw, Warsaw, Poland, October 2018.
\newblock Available at \url{https://depotuw.ceon.pl/handle/item/3614}.

\bibitem{kociumaka_et_al:LIPIcs.ISAAC.2018.70}
Tomasz Kociumaka, Ritu Kundu, Manal Mohamed, and Solon~P. Pissis.
\newblock Longest unbordered factor in quasilinear time.
\newblock In {\em Proc. of ISAAC}, pages 70:1--70:13, 2018.
\newblock \href {https://doi.org/10.4230/LIPIcs.ISAAC.2018.70}
  {\path{doi:10.4230/LIPIcs.ISAAC.2018.70}}.

\bibitem{DBLP:journals/corr/KociumakaRRW13}
Tomasz Kociumaka, Jakub Radoszewski, Wojciech Rytter, and Tomasz Wale\'{n}.
\newblock Optimal data structure for internal pattern matching queries in a
  text and applications.
\newblock {\em CoRR}, abs/1311.6235, 2013.
\newblock \href {http://arxiv.org/abs/1311.6235} {\path{arXiv:1311.6235}}.

\bibitem{doi:10.1137/1.9781611973730.36}
Tomasz Kociumaka, Jakub Radoszewski, Wojciech Rytter, and Tomasz Wale\'{n}.
\newblock Internal pattern matching queries in a text and applications.
\newblock In {\em Proc. of SODA}, pages 532--551, 2015.
\newblock \href {https://doi.org/10.1137/1.9781611973730.36}
  {\path{doi:10.1137/1.9781611973730.36}}.

\bibitem{kociumaka2014sublinear}
Tomasz Kociumaka, Tatiana Starikovskaya, and Hjalte~Wedel Vildh{\o}j.
\newblock Sublinear space algorithms for the longest common substring problem.
\newblock In {\em Proc. of ESA}, pages 605--617, 2014.
\newblock \href {https://doi.org/10.1007/978-3-662-44777-2\_50}
  {\path{doi:10.1007/978-3-662-44777-2\_50}}.

\bibitem{KOLPAKOV20171}
Roman Kolpakov, Mikhail Podolskiy, Mikhail Posypkin, and Nickolay Khrapov.
\newblock Searching of gapped repeats and subrepetitions in a word.
\newblock {\em Journal of Discrete Algorithms}, 46-47:1--15, 2017.
\newblock \href {https://doi.org/https://doi.org/10.1016/j.jda.2017.10.004}
  {\path{doi:https://doi.org/10.1016/j.jda.2017.10.004}}.

\bibitem{10.1007/978-3-319-67428-5_22}
Dmitry Kosolobov, Florin Manea, and Dirk Nowotka.
\newblock Detecting one-variable patterns.
\newblock In {\em Proc. of SPIRE}, pages 254--270, 2017.
\newblock \href {https://doi.org/10.1007/978-3-319-67428-5\_22}
  {\path{doi:10.1007/978-3-319-67428-5\_22}}.

\bibitem{DBLP:journals/corr/abs-2105-03782}
Dmitry Kosolobov and Nikita Sivukhin.
\newblock Construction of sparse suffix trees and {LCE} indexes in optimal time
  and space.
\newblock In {\em Proc. of {CPM}}, 2024.

\bibitem{DBLP:conf/birthday/Lewenstein13}
Moshe Lewenstein.
\newblock Orthogonal range searching for text indexing.
\newblock In {\em Space-Efficient Data Structures, Streams, and Algorithms},
  pages 267--302, 2013.
\newblock \href {https://doi.org/10.1007/978-3-642-40273-9\_18}
  {\path{doi:10.1007/978-3-642-40273-9\_18}}.

\bibitem{Lothaire_2005}
M.~Lothaire.
\newblock {\em Applied Combinatorics on Words}.
\newblock Encyclopedia of Mathematics and its Applications. Cambridge
  University Press, 2005.

\bibitem{mai2021optimal}
Tung Mai, Anup Rao, Ryan~A Rossi, and Saeed Seddighin.
\newblock Optimal space and time for streaming pattern matching.
\newblock {\em arXiv preprint arXiv:2107.04660}, 2021.

\bibitem{10.1007/978-3-031-27051-2_12}
Kazuki Mitani, Takuya Mieno, Kazuhisa Seto, and Takashi Horiyama.
\newblock Internal longest palindrome queries in optimal time.
\newblock In {\em Proc. of WALCOM}, pages 127--138, 2023.

\bibitem{DBLP:conf/icalp/Ruzic08}
Milan Ru\v{z}i\'{c}.
\newblock Constructing efficient dictionaries in close to sorting time.
\newblock In {\em Proc. of ICALP}, volume 5125, pages 84--95. Springer, 2008.
\newblock \href {https://doi.org/10.1007/978-3-540-70575-8\_8}
  {\path{doi:10.1007/978-3-540-70575-8\_8}}.

\bibitem{doi:10.1137/1.9781611973105.122}
Michael Saks and C.~Seshadhri.
\newblock Space efficient streaming algorithms for the distance to monotonicity
  and asymmetric edit distance.
\newblock In {\em Proc. of SODA}, pages 1698--1709, 2013.
\newblock \href {https://doi.org/10.1137/1.9781611973105.122}
  {\path{doi:10.1137/1.9781611973105.122}}.

\bibitem{DBLP:conf/cpm/StarikovskayaV13}
Tatiana Starikovskaya and Hjalte~Wedel Vildh{\o}j.
\newblock Time-space trade-offs for the longest common substring problem.
\newblock In {\em Proc. of CPM}, pages 223--234, 2013.
\newblock \href {https://doi.org/10.1007/978-3-642-38905-4\_22}
  {\path{doi:10.1007/978-3-642-38905-4\_22}}.

\bibitem{10.1007/978-3-319-02309-0_59}
Robert Susik, Szymon Grabowski, and Sebastian Deorowicz.
\newblock Fast and simple circular pattern matching.
\newblock In {\em Man-Machine Interactions 3}, pages 537--544, 2014.

\bibitem{DBLP:journals/ipl/Willard83}
Dan~E. Willard.
\newblock Log-logarithmic worst-case range queries are possible in space
  {$\Theta(N)$}.
\newblock {\em Inf. Process. Lett.}, 17(2):81--84, 1983.
\newblock \href {https://doi.org/10.1016/0020-0190(83)90075-3}
  {\path{doi:10.1016/0020-0190(83)90075-3}}.

\end{thebibliography}

\end{document}